\DeclareMathOperator{\tr}{\mathrm{tr}}
\DeclareMathOperator{\proj}{\mathrm{proj}}
\theoremstyle{plain}
\newtheorem{theorem}{Theorem}
\newtheorem{proposition}{Proposition}
\newtheorem{corollary}{Corollary}
\theoremstyle{definition}
\newtheorem{definition}{Definition}
\newtheorem{postulate}{Postulate}
\newtheoremstyle{boldremark}
  {3pt}    
  {3pt}    
  {\normalfont} 
  {}       
  {\bfseries} 
  {.}      
  {0.5em}  
  {}       
\theoremstyle{boldremark}
\newtheorem{remark}{Remark}
\newtheorem{example}{Example}
\newcommand{\sech}{\mathrm{sech}}
\begin{document}

\title{Geometric foundations of thermodynamics in the quantum regime}
\author{\'Alvaro Tejero}\email[]{atejero@onsager.ugr.es}\affiliation{Quantum Thermodynamics and Computation Group, Departamento de Electromagnetismo y Física de la Materia, Universidad de Granada, 18071 Granada, Spain\looseness=-1}
\affiliation{Instituto Carlos I de Física Teórica y Computacional, Universidad de Granada, 18071 Granada, Spain\looseness=-1}
\author{Martín de la Rosa}\email[]{f42rodim@uco.es}\affiliation{Departamento de Matemáticas, Edificio Albert Einstein, Universidad de Córdoba, Campus de Rabanales 14071, Córdoba, Spain\looseness=-1}

\begin{abstract}
In this work, we present a geometrical formulation of quantum thermodynamics based on contact geometry and principal fiber bundles. The quantum thermodynamic state space is modeled as a contact manifold, with equilibrium Gibbs states forming a Legendrian submanifold that encodes the fundamental thermodynamic relations. A principal fiber bundle over the manifold of density operators distinguishes the quantum state structure from thermodynamic labels: its fibers represent non-equilibrium configurations, and their unique intersections with the equilibrium submanifold enforce thermodynamic consistency. Quasistatic processes correspond to minimizing geodesics under the Bures–Wasserstein metric, leading to minimal dissipation, while the divergence of geodesic length toward rank-deficient states geometrically derives the unattainability aspect of the third law. Non-equilibrium extensions, formulated through pseudo-Riemannian metrics and connections on the principal bundle, introduce curvature-induced holonomy that quantifies a geometric source of irreversibility in cyclic processes. In this framework, the thermodynamic laws in the quantum regime emerge naturally as geometric consequences.
\end{abstract}

\maketitle

\section{Introduction}
Physics is geometry. Differential geometry~\cite{doCarmo1992Riemannian,ONeill1983SemiRiemannian,caja12_variedades,rudolph2012_1,rudolph2012_2} structures are fundamental to physical theories, from the symplectic geometry of classical mechanics~\cite{arnold1989mathematical} to the curved spacetime of general relativity~\cite{wald1984general} and the fiber bundle formulation of gauge theories~\cite{Bleecker1981Gauge,naber2011_1,naber2011_2,hamilton2017mathematical}. 
In thermodynamics~\cite{callen85_thermodynamics}, contact geometry provides a natural and rigorous framework for equilibrium states and processes, as originally conceived by Gibbs and systematically developed in modern terms~\cite{bravetti15_contact,bravetti2017contact}.
Here, the thermodynamic state space is a $(2n+1)$-dimensional contact manifold $(M, \eta)$, with the Gibbs 1-form $\eta$ encoding the first law, and equilibrium states forming a Legendrian submanifold $E$ where $\eta|_E = 0$. 
Extending this geometric structure to the quantum regime~\cite{binder2018thermodynamics, strasberg2022quantum} presents significant conceptual and mathematical challenges.

Geometric methods have been increasingly applied to quantum thermodynamics, drawing from both differential geometry and information geometry to analyze dissipation, optimal processes, and fluctuations. In information geometry, the space of probability distributions (classical) or density operators (quantum) is endowed with a Riemannian metric, often the Fisher information metric or its quantum analogs like the quantum Fisher information \cite{braunstein_94, amari2000methods}. This framework has been used to link geometric quantities to thermodynamic concepts: for instance, the Fisher metric relates to thermalization speed and entropy production rates in stochastic thermodynamics \cite{Crooks2007, ibanez2024, tejero2025asymmetries, melo_2025, tejero2025entropy, Bettmann2025Information}, and the quantum Fisher information bounds dissipation in finite-time quantum processes \cite{deffner_13}.
Other approaches emphasize Riemannian or pseudo-Riemannian metrics on the space of quantum states for optimization. For example, thermodynamic length---originally from classical thermodynamics \cite{weinhold_75, Ruppenier_79, salamon_83}---has been generalized to quantum systems \cite{scandi2019thermodynamic, abiuso_20,Bettmann2025Information, anza2022geometric}.

Despite these advances, a complete and unified geometric formulation of the full theory is still lacking. In the absence of a geometric framework analogous to that of the classical case, existing analyses remain fragmented, and general principles are difficult to establish rigorously.
The aforementioned works have investigated quantum thermodynamics using \emph{differential geometry}, or tools drawn from it, such as representing processes as curves or defining thermodynamic lengths from metrics on the state space. However, these approaches do not constitute a proper geometric formulation of the theory; rather, they provide geometric perspectives on specific classes of processes.
The geometric formulation proposed here builds on some of these ideas by reformulating the quantum state space within the language of contact geometry, extending classical contact thermodynamics, and incorporating the Bures–Wasserstein metric to describe quasistatic processes. These elements naturally connect with earlier uses of Riemannian metrics \cite{Ruppenier_79}.

Principal fiber bundles, central to gauge theories and topological phases in condensed matter~\cite{nakahara2018geometry,Bleecker1981Gauge}, offer a powerful global structure for addressing these gaps. When applied to quantum thermodynamics, they furnish a principal bundle over the space of density operators, with fibers representing the thermodynamic coordinates associated with a fixed quantum state, thereby separating state evolution from thermodynamic variation and incorporating gauge-like symmetries.

In this paper, we synthesize these elements into a geometrical framework extended to the quantum regime. The quantum thermodynamic state space is indeed a contact manifold $\mathcal{M}$, where thermal equilibrium states form the Legendrian submanifold $\mathcal{E} \subset\mathcal{M}$, satisfying the quantum first law.
We introduce a principal fiber bundle over the space of density operators, mapping thermodynamic coordinates to states. Fibers enable analysis of relaxation paths, with equilibrium as unique intersections with the equilibrium submanifold.
The Bures-Wasserstein metric on such submanifold~\cite{bengtsson2006geometry,oostrum22_bures} defines geodesics as quasistatic processes minimizing dissipation. Boundary effects in finite dimensions lead to geodesic incompleteness, recovering the third law geometrically.
Non-equilibrium phenomena are addressed naturally in the theory, via pseudo-Riemannian extensions of the metric, compatible with $\ker \eta$, and the structure of the fiber bundle. 
The principal connection decomposes the tangent bundle, with curvature inducing holonomy in cyclic processes, a geometric source of irreversibility akin to gauge holonomies.

With all these considerations, we construct a principal fiber bundle over the manifold of thermal states, separating quantum state structure from thermodynamic labels, enabling a unified treatment of equilibrium (Legendrian submanifold) and non-equilibrium (fiber paths). Then, the framework allows the understanding of the zeroth, first, second, and third laws as consequences of bundle properties, injectivity, contact non-integrability, and geodesic incompleteness, respectively. Finally, the curvature-induced holonomy acts as a gauge-like source of irreversibility in cyclic processes.

This geometrization not only generalizes classical geometric thermodynamics but also provides mathematical rigor for quantum thermodynamics, going beyond process-specific analyses by offering a geometric theory analogous to gauge theories. 
While the core framework developed here provides a geometric formulation for finite-dimensional quantum systems with a finite set of observables---where mathematical structures such as the principal bundle and Bures–Wasserstein metric are fully rigorous---we also discuss preliminary extensions to infinite-dimensional cases.

\section{Motivation: classical thermodynamics}
Physics can be regarded as a collection of models that attempt to describe, always with some degree of idealization, specific aspects of the observable universe.
Equilibrium thermodynamics exemplifies this as a robust physical theory centered on one of the most abstract idealizations: every system is assumed to be perpetually in thermodynamic equilibrium, implying that such a state would remain unchanged indefinitely. 
Paradoxically, this framework addresses phenomena like gas expansions, an out-of-equilibrium process.
This is reconciled by positing that the evolution of the system proceeds exclusively through successive equilibrium states, necessitating an infinite duration for completion---a process termed \emph{quasistatic}.
Consequently, thermodynamic parameters vary smoothly, allowing well-defined derivatives with respect to an evolution parameter. These variations occur so gradually that they can be treated as infinitesimal, explaining why thermodynamic equations are typically expressed in differential forms.
For instance, the first law is commonly written as $ dU = T dS + \sum_i X_i  dY_i + \sum_j \mu_j dN_j. $
The terms $dS$, $dY_i$, $dN_j$, for all $i,j$, can be interpreted as mere infinitesimals leading to finite integrals or, more insightfully, as differential 1-forms.

The latter perspective lends natural clarity to many thermodynamic concepts and results.
Standard curricula on this topic usually distinguish state functions like entropy and internal energy---which depend solely on the system state when measured---from path-dependent quantities like heat or work. 
Thus, changes in internal energy or entropy are independent of the trajectory in thermodynamic parameter space, depending only on the endpoints, making them exact differentials.
In contrast, integrals of heat or work generally vary with the path. A non-exact differential form may become exact when multiplied by an integrating factor, provided the form is closed. Assuming simply connected manifolds, closed and exact forms coincide. The prototypical example is heat: by multiplying the infinitesimal amount $\delta Q$ by $1/T$ one obtains the exact differential $dS$. 

Carathéodory's theorem, often disregarded in classical treatments, emerges straightforwardly geometrically: in any neighborhood of a thermodynamic state, certain states are inaccessible via quasistatic adiabatic processes. Geometrically, a thermodynamic process is a curve $\gamma: I \subseteq \mathbb{R} \to M$, where $M$ is the thermodynamic state space endowed with a differentiable manifold structure, as later explained. If $\delta Q \in \Lambda^1(M)$, the curve $\gamma$ is adiabatic when $\delta Q(\gamma') = 0$ everywhere. Since $dS = \delta Q / T$, it follows that $\delta Q = T dS$, so 
\begin{equation}
    \delta Q(\gamma') = T dS(\gamma') = T \gamma'(S) = T \frac{d}{dt} (S \circ \gamma) = 0.
\end{equation}
Hence, quasistatic adiabatic processes cannot connect states of differing entropies.

The geometrization of classical thermodynamics provides a powerful framework for analyzing thermodynamic systems using tools from differential geometry. 
This approach represents thermodynamic systems as manifolds endowed with specific geometric structures, enabling a rigorous description of equilibrium states and thermodynamic processes.
Here, we introduce the thermodynamic state space and its contact structure, establishing the foundation for extending these concepts to quantum thermodynamics in subsequent sections.

The thermodynamic state space is formalized as a $(2n+1)$-dimensional manifold $M$, where $n$ is the number of independent extensive variables characterizing the system. This manifold is equipped with a contact structure defined by a 1-form $\eta$, satisfying the non-integrability condition $\eta \wedge (d\eta)^n \neq 0$, where $\wedge$ denotes the wedge (also exterior or Grassmann) product and $(d\eta)^n$ is the $n$-fold exterior product of $d\eta$. The contact structure ensures that $\eta$ defines a maximally non-integrable distribution that captures the constraints of thermodynamic processes.
In physical terms, the thermodynamic state space is endowed with the Gibbs 1-form, typically expressed for a simple thermodynamic system as
\begin{equation}\label{eqn:gibbs}
    \eta = dU - T dS - \sum_{i=1}^{m} X_i dY_i - \sum_{j=1}^k \mu_j  dN_j,
\end{equation}
where $U$ is the internal energy, $S$ is the entropy, $Y_i$ are the extensive variables, $X_i$ are their conjugate intensive variables, $N_j$ are the particle numbers for $k$ species, and $T$ and $\mu_j$ are the temperature and chemical potentials, respectively, for $i=1,\ldots,m$, and $j=1,\ldots,k$. This 1-form encodes the first law of thermodynamics
\begin{equation}
    dU = T dS + \sum_{i=1}^{m} X_i dY_i + \sum_{j=1}^k \mu_j  dN_j,
\end{equation}
which relates changes in internal energy to work and heat exchanges in equilibrium processes. $(U, S, Y_1, \ldots, Y_m, N_1, \ldots, N_k, T, X_1, \ldots, X_m, \mu_1, \ldots, \mu_k)$ span the thermodynamic state space, with $\eta = 0$ defining the equilibrium submanifold $E \subset M$, where thermodynamic states reside.

Following Callen \cite{callen85_thermodynamics} and Bravetti et al. \cite{bravetti15_contact}, we formulate classical thermodynamics through the following postulates, which provide a mathematically consistent foundation for the geometric approach.
\begin{postulate}\label{post:extensive}
\emph{Equilibrium states} of a thermodynamic system are fully characterized by a finite set of extensive variables: the internal energy $U$, generalized displacements $Y_1, \ldots, Y_m$, and particle numbers $N_1, \ldots, N_k$.
\end{postulate}

\begin{postulate}
\label{post:entropy}
There exists a differentiable function 
\begin{equation}
\begin{split}
	S: E & \longrightarrow \mathbb{R}\\
	(U, Y_1, \ldots, Y_m, N_1, \ldots, N_k) & \longmapsto S(U, Y_1, \ldots, Y_m, N_1, \ldots, N_k), 
\end{split}
\end{equation}
called the \emph{entropy}, defined on the space of equilibrium states $E$. This function is such that whenever an internal constraint of the system is removed, the new values of the extensive parameters are those which maximize the entropy over the manifold of constrained equilibrium states.
\end{postulate}

\begin{postulate}
\label{post:additivity}
For a composite system comprising multiple subsystems, the total entropy is the sum of the entropies of the subsystems. Furthermore, the entropy $S$ is a differentiable, monotonically increasing function of the internal energy $U$.
\end{postulate}

These postulates have immediate geometric and physical implications. 
First, the equilibrium states form a finite-dimensional submanifold $E \subset M$, parameterized by the extensive variables $(U, Y_1, \ldots, Y_m, N_1, \ldots, N_k)$. On the equilibrium submanifold $E$, the Gibbs 1-form satisfies $\eta = 0$, reflecting the first law in differential form. Finally, entropy is an intrinsic property for each particular subsystem and allows us to define the notion of \emph{temperature} as the relation between the entropy and the energy, more concretely $T:= (\partial S / \partial U)^{-1}$. This relation follows from Postulate \ref{post:entropy} and the monotonicity condition in Postulate \ref{post:additivity}.

This geometric perspective provides a robust foundation for extending thermodynamic concepts to quantum systems.
In classical thermodynamics, the thermodynamic state space and its contact structure capture the interplay between extensive and intensive variables. 
In the quantum case, we aim to generalize this structure to account for quantum states, described by density operators on a Hilbert space, and their thermodynamic properties. The equilibrium submanifold is replaced by a quantum state space, and the Gibbs 1-form will be adapted to incorporate quantum mechanical constraints.

The \emph{third law of thermodynamics}, which governs the asymptotic behavior of entropy as temperature approaches absolute zero, is not considered in this geometric analysis in the classical regime. However, in this quantum case, a version of the third law will arise naturally from the mathematical structure of the theory.

\section{Contact geometry}
To establish the geometrical framework for quantum thermodynamics, we first review the necessary concepts in contact geometry that underpin the thermodynamic state space and its extension to quantum systems. As stated, contact geometry provides a natural mathematical structure for describing thermodynamic equilibrium states and processes. This section introduces the essential definitions and results, setting the stage for their application to quantum thermodynamics in subsequent sections.

\begin{definition}
\label{def:contact_form}
Let $M$ be a smooth manifold of dimension $2n+1$. A 1-form $\eta \in \Lambda^1(M)$ is a \emph{contact form} if it satisfies the non-integrability condition
\begin{equation}\label{eqn:non_integrability}
    \eta \wedge (d\eta)^n \neq 0,
\end{equation}
where $(d\eta)^n$ is the $n$-fold exterior product of $d\eta$. This condition ensures that $\eta$ defines a \emph{volume form} on $M$.
\end{definition}
The non-integrability condition implies that contact forms exist only on odd-dimensional manifolds, as the wedge product $\eta \wedge (d\eta)^n$ is a $(2n+1)$-form, matching the dimension of $M$. 
\begin{proposition}
    The contact form induces a contact distribution
\begin{equation}
   \mathscr{D} = \ker(\eta) = \{ v \in TM \mid \eta(v) = 0 \}, 
\end{equation}
where $TM$ is the tangent bundle of $M$.
\end{proposition}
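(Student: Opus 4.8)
The plan is to show that the pointwise kernels of $\eta$ assemble into a smooth rank-$2n$ subbundle of $TM$; the entire argument hinges on $\eta$ being nowhere vanishing, which is the only consequence of the contact condition actually needed here. First I would establish that $\eta_p \neq 0$ for every $p \in M$. Indeed, if $\eta_p = 0$ at some point, then the $(2n+1)$-form $\eta \wedge (d\eta)^n$ would vanish at $p$, since every term contains the factor $\eta$; this contradicts the non-integrability condition \eqref{eqn:non_integrability}. Hence $\eta_p \in T_p^*M$ is a nonzero linear functional at each $p$.

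Next, at a fixed $p$, the map $\eta_p \colon T_pM \to \mathbb{R}$ is linear and, being nonzero onto a one-dimensional target, surjective. By the rank–nullity theorem, $\ker(\eta_p)$ is then a linear subspace of $T_pM$ of constant dimension $\dim M - 1 = 2n$. Thus the assignment $p \mapsto \ker(\eta_p)$ has well-defined constant rank, and $\mathscr{D}$ is a family of $2n$-dimensional hyperplanes; it remains only to verify smoothness.

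For smoothness I would exhibit local smooth frames. Fixing $p$ and local coordinates $(x^1,\dots,x^{2n+1})$ around it, write $\eta = \sum_i a_i\, dx^i$ with smooth coefficients $a_i$ not all vanishing at $p$. Choosing an index $i_0$ with $a_{i_0}(p) \neq 0$, continuity gives $a_{i_0} \neq 0$ on a neighborhood $U$, and on $U$ the $2n$ vector fields $V_j = a_{i_0}\,\partial_{x^j} - a_j\,\partial_{x^{i_0}}$, for $j \neq i_0$, are smooth, pointwise linearly independent (each carries the distinct direction $\partial_{x^j}$ with nonzero coefficient $a_{i_0}$), and satisfy $\eta(V_j) = a_{i_0}a_j - a_j a_{i_0} = 0$. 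They therefore constitute a local frame for $\ker(\eta)$, proving that $\mathscr{D}$ is a smooth distribution of rank $2n$.

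I expect the only genuinely delicate point to be the constancy of the rank, that is, ruling out that $\dim \ker(\eta_p)$ jumps to $2n+1$ at isolated points where $\eta$ might degenerate; this is exactly what the non-vanishing of $\eta$ secured in the first step prevents. It is worth noting that the full strength of $\eta \wedge (d\eta)^n \neq 0$ is more than this statement requires—it additionally guarantees the maximal non-integrability of $\mathscr{D}$—whereas only the weaker fact $\eta_p \neq 0$ enters the construction of the distribution itself.
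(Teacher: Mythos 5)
Your proof is correct and complete. For comparison: the paper offers no proof of this proposition at all --- it simply asserts the statement and then remarks, without justification, that $\mathscr{D}$ has rank $2n$ and is maximally non-integrable. Your argument supplies exactly the missing content in the standard way: the contact condition \eqref{eqn:non_integrability} forces $\eta_p \neq 0$ everywhere (since $\eta_p = 0$ would kill the top-degree form $\eta\wedge(d\eta)^n$ at $p$), rank--nullity then gives constant fiber dimension $2n$, and your explicit local frame $V_j = a_{i_0}\,\partial_{x^j} - a_j\,\partial_{x^{i_0}}$ establishes smoothness of the subbundle; the linear-independence check is sound because a vanishing combination $\sum_j c_j V_j$ forces each $c_j a_{i_0}=0$ with $a_{i_0}\neq 0$. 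Your closing observation is also apt and matches the logical structure of the paper's surrounding discussion: only the nonvanishing of $\eta$ is needed to obtain the distribution itself, while the full strength of $\eta\wedge(d\eta)^n\neq 0$ is what the paper invokes afterwards for maximal non-integrability and the bound on the dimension of integral submanifolds.
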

The distribution $\mathscr{D}$ has rank $2n$, and the condition $\eta \wedge (d\eta)^n \neq 0$ ensures that $\mathscr{D}$ is \emph{maximally non-integrable}. It follows from Frobenius' theorem that any submanifold of $M$ tangent to $\mathscr{D}$ has dimension at most $n$.

\begin{definition}
\label{def:contact_manifold}
A \emph{contact manifold} is a pair $(M, \eta)$, where $M$ is a $(2n+1)$-dimensional smooth manifold and $\eta$ is a contact form on $M$.
\end{definition}
In the context of thermodynamics, the thermodynamic state space is modeled as a contact manifold, with the Gibbs 1-form, Eq. (\ref{eqn:gibbs}), serving as the contact form. The contact structure encodes the thermodynamic relations, such as the first law.

 \begin{definition} An $n$-dimensional submanifold $\mathcal{L} \subset M$ of a contact manifold $(M, \eta)$ is \emph{Legendrian} if $\eta|_{\mathcal{L}}=0$.
 \end{definition}
In thermodynamics, Legendrian submanifolds correspond to equilibrium submanifolds $E \subset M$, where the Gibbs 1-form satisfies $\eta = 0$. These submanifolds represent the set of equilibrium states, parameterized by variables such as internal energy $U$, entropy $S$, volume $V$, and particle numbers $N_i$.

Since $\eta \wedge (d\eta)^n \neq 0$, the differential $d\eta$ must have one-dimensional kernel, which is transverse to  $\mathscr{D}$. This condition leads to the existence of a unique vector field on the manifold, which governs the dynamics.
 \begin{definition}
    The \emph{Reeb vector field} $\mathcal{R}_\eta$ on a contact manifold $(M, \eta)$ is the unique vector field satisfying the conditions $\eta(\mathcal{R_\eta})=1$, and  $d \eta(\mathcal{R_\eta}, \cdot) = 0$.
 \end{definition}
In thermodynamics, the Reeb field is associated with the evolution of thermodynamic processes. An essential result in contact geometry is the existence of canonical coordinates, which simplifies the local description of the contact structure.
\begin{theorem}[Darboux] \label{thm:darboux} Let $(M, \eta)$ be a $(2n+1)$-dimensional contact manifold. For any point $p \in M$, there exist local coordinates $(x_1, \ldots, x_n, y_1, \ldots, y_n, z)$ in a neighborhood of $p$ such that the contact form is expressed as
\begin{equation}
    \eta = dz - \sum_{i=1}^n y_i dx_i.
\end{equation}
\end{theorem}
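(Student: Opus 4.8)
The plan is to prove the contact Darboux theorem by the Moser homotopy method, constructing the desired chart as the time-one flow of a suitable time-dependent vector field that interpolates between the given contact form and the standard model $\eta_0 := dz - \sum_i y_i\, dx_i$. First I would fix $p \in M$ and choose any initial coordinate chart centered at $p$ in which $\eta_0$ is written as above. Using linear symplectic normal-form algebra on the $2n$-dimensional symplectic vector space $(\ker\eta_p,\, d\eta_p)$ together with the transverse Reeb direction (the linear version of Darboux), I would arrange that $\eta$ and $\eta_0$ agree to first order at $p$, i.e. $\eta_p = (\eta_0)_p$ and $(d\eta)_p = (d\eta_0)_p$. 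This guarantees that the difference $\sigma := \eta - \eta_0$ satisfies $\sigma_p = 0$ and $(d\sigma)_p = 0$, which will be essential later.

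Next I would interpolate by setting $\eta_t := \eta_0 + t\sigma = (1-t)\eta_0 + t\eta$ for $t \in [0,1]$. Because the contact condition $\eta\wedge(d\eta)^n \neq 0$ is open and all $\eta_t$ coincide with $\eta_0$ to first order at $p$, each $\eta_t$ is a contact form on a common neighborhood of $p$, uniformly in the compact parameter $t$; in particular each $\eta_t$ carries its own Reeb field $\mathcal{R}_t$. I would then seek a time-dependent vector field $X_t$ whose flow $\psi_t$ fixes $p$ and satisfies $\psi_t^*\eta_t = \eta_0$. Differentiating this identity in $t$ and invoking Cartan's magic formula reduces the problem to the Moser equation $\mathcal{L}_{X_t}\eta_t + \sigma = 0$, equivalently $d(\iota_{X_t}\eta_t) + \iota_{X_t}d\eta_t = -\sigma$.

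To solve this I would exploit the splitting $TM = \mathbb{R}\,\mathcal{R}_t \oplus \ker\eta_t$, writing $X_t = g_t\,\mathcal{R}_t + Y_t$ with $Y_t \in \ker\eta_t$, so that $\iota_{X_t}\eta_t = g_t$ and $\iota_{X_t}d\eta_t = \iota_{Y_t}d\eta_t$ since $\iota_{\mathcal{R}_t}d\eta_t = 0$. Contracting the Moser equation with $\mathcal{R}_t$ annihilates the $Y_t$-term and isolates a first-order transport equation $\mathcal{R}_t(g_t) = -\sigma(\mathcal{R}_t)$ for the scalar $g_t$, solvable locally along the Reeb flow; the nondegeneracy of $d\eta_t$ on $\ker\eta_t$ (precisely the contact condition) then determines the horizontal part $Y_t$ uniquely from $-\sigma - dg_t$. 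Because $\sigma$ and $d\sigma$ vanish at $p$, I can arrange $g_t(p) = 0$ and $(dg_t)_p = 0$, forcing $X_t(p) = 0$; hence $p$ is a fixed point and the flow $\psi_t$ exists for all $t \in [0,1]$ on a sufficiently small neighborhood. Pulling back by $\psi_1$ identifies $\eta$ with $\eta_0$, and transporting $(x_i, y_i, z)$ by $\psi_1$ yields the Darboux chart.

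The hard part will be the solvability and regularity of the Moser equation in the genuinely contact (rather than purely symplectic) setting: unlike the symplectic Moser argument, the term $\iota_{X_t}\eta_t = g_t$ contributes nontrivially, so $X_t$ is not determined algebraically by $d\eta_t$ alone, and I must simultaneously control the transport scalar $g_t$ and the horizontal field $Y_t$ while ensuring both vanish to the right order at $p$ so that the flow is complete on $[0,1]$. Verifying joint smoothness of $X_t$ in $(t,\text{point})$ and the uniform contact condition on a single neighborhood for all $t$ is the delicate bookkeeping. An alternative that sidesteps part of this difficulty is induction on $n$: use the Reeb flow to split off the $dz$ factor and then apply the symplectic Darboux theorem to the transverse symplectic form $d\eta|_{\ker\eta}$.
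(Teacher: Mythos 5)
The paper does not actually prove this theorem: Darboux's theorem for contact manifolds is stated as a classical result and used without proof, so there is nothing internal to compare against. Your Moser-homotopy outline is the standard textbook argument (Gray--Moser stability, as in Geiges's treatment) and is essentially correct: the linear normal form at $p$ making $\eta$ and $\eta_0$ agree to first order, the interpolation $\eta_t = \eta_0 + t\sigma$, the openness-plus-compactness argument for a uniform contact neighborhood, the reduction via Cartan's formula to $dg_t + \iota_{Y_t} d\eta_t = -\sigma$ with $X_t = g_t\,\mathcal{R}_t + Y_t$, the transport equation $\mathcal{R}_t(g_t) = -\sigma(\mathcal{R}_t)$, and the recovery of $Y_t$ from the nondegeneracy of $d\eta_t$ on $\ker\eta_t$ are all the right steps, and the "hard parts" you flag are the genuine ones. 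Two details deserve explicit treatment in a full write-up. First, after solving the transport equation for $g_t$ you must verify that the $1$-form $-\sigma - dg_t$ annihilates $\mathcal{R}_t$; this is exactly the content of the transport equation, and it is what lets you solve $\iota_{Y_t} d\eta_t = -\sigma - dg_t$ on $\ker\eta_t$ and then conclude the identity holds on all of $T\mathcal{M}$ (both sides kill $\mathcal{R}_t$). Second, to get $(dg_t)_p = 0$ you should prescribe $g_t \equiv 0$ on a hypersurface through $p$ transverse to $\mathcal{R}_t$: this kills $dg_t$ at $p$ in the hypersurface directions, while the Reeb-direction derivative vanishes because $\sigma_p = 0$; the condition $(d\sigma)_p = 0$ is what you actually need for the uniform contact condition on the family $\eta_t$, not for the vanishing of $X_t$ at $p$. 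With those points made explicit your argument closes and yields the strict normal form $\psi_1^*\eta = \eta_0$, from which the chart follows.
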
 
In thermodynamics, Darboux's theorem ensures that the thermodynamic state space can be locally described using coordinates that align the Gibbs 1-form with the contact structure. For example, for a simple system, coordinates can be chosen such that $\eta = dU - T dS + p dV$. This canonical form facilitates the analysis of thermodynamic relations and their extension to quantum systems.

The elements of contact geometry collectively provide a rigorous mathematical encoding of the intrinsic constraints and processes in thermodynamics. Physically speaking, the contact form $\eta$ embodies the non-integrable nature of the first law.
The contact distribution $\ker(\eta)$ corresponds to directions in the state space where reversible transformations can occur. The Reeb vector field $  \mathcal{R}_\eta$, which corresponds to $\partial_S$ in Darboux's coordinates, defines directions transverse to the equilibrium submanifolds.

\section{Connection to quantum thermodynamics}
In this section, we construct the quantum thermodynamic state space as a contact manifold, generalizing the classical thermodynamic state space to quantum systems. The quantum thermodynamic state space provides a geometric framework for describing quantum equilibrium states, encoded by Gibbs states, and their thermodynamic relations.

\begin{definition}
Let $\mathcal{H}$ be an $m$-dimensional complex Hilbert space, so that $\mathcal{H} \cong \mathbb{C}^m$ for $ m < \infty $, and let $ \mathscr{B}(\mathcal{H}) $ be the set of bounded operators on $\mathcal{H}$. A \emph{quantum state} is any $\rho \in \mathscr{B}(\mathcal{H})$ such that $\rho$ is self-adjoint, positive semidefinite, and has unit trace.
\end{definition}
See \cite{gustafson03_mathematical, takhtadzhian08_quantum} for further details. In finite dimensions, $ \mathscr{B}(\mathcal{H}) \cong M_m(\mathbb{C}) $. By virtue of this isomorphism, we identify states with unit trace self-adjoint matrices and introduce the following topological and differentiable structure.

\begin{proposition}
The set of quantum states $\mathcal{D} = \{ \rho \in \mathscr{B}(\mathcal{H}) \mid  \rho \ \mathrm{ quantum \ state}  \}$ is a compact convex subset of the hyperplane $\Pi = \{ \pi \in \mathscr{B}(\mathcal{H}) \mid \pi \ \mathrm{self}{-}\mathrm{adjoint},\ \tr(\pi) = 1 \} \cong \mathbb{R}^{m^2 - 1}$. 
\end{proposition}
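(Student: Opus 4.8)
The plan is to verify the three assertions separately---containment in the hyperplane, convexity, and compactness---after first fixing the ambient real vector space structure. Working through the isomorphism $\mathscr{B}(\mathcal{H}) \cong M_m(\mathbb{C})$, I would note that the self-adjoint matrices form a real vector space of dimension $m^2$: a Hermitian matrix is determined by $m$ real diagonal entries together with the $\binom{m}{2}$ complex entries above the diagonal, giving $m + 2\binom{m}{2} = m^2$ real parameters. The trace map $\pi \mapsto \tr(\pi)$ is a nonzero real-linear functional on this space, so the affine constraint $\tr(\pi) = 1$ cuts out a hyperplane $\Pi$ of real dimension $m^2 - 1$; fixing an affine frame then gives the homeomorphism $\Pi \cong \mathbb{R}^{m^2 - 1}$. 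The containment $\mathcal{D} \subseteq \Pi$ is immediate, since every quantum state is by definition self-adjoint with unit trace.

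For convexity, I would take $\rho_1, \rho_2 \in \mathcal{D}$ and $\lambda \in [0,1]$ and check that $\rho_\lambda := \lambda \rho_1 + (1-\lambda)\rho_2$ satisfies all three defining conditions. Self-adjointness is preserved because real-linear combinations of self-adjoint operators are self-adjoint, and unit trace follows from linearity of the trace, $\tr(\rho_\lambda) = \lambda + (1-\lambda) = 1$. Positive semidefiniteness is the only condition that uses nonnegativity of the coefficients: for every $\psi \in \mathcal{H}$ one has $\langle \psi | \rho_\lambda | \psi \rangle = \lambda \langle \psi | \rho_1 | \psi \rangle + (1-\lambda)\langle \psi | \rho_2 | \psi \rangle \geq 0$, each summand being a nonnegative multiple of a nonnegative number.

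Compactness I would establish through the Heine--Borel theorem, valid since $\Pi$ is finite-dimensional: it suffices to show $\mathcal{D}$ is closed and bounded. For boundedness, any $\rho \in \mathcal{D}$ has eigenvalues $\lambda_i \geq 0$ summing to $\tr(\rho) = 1$, so each $\lambda_i \in [0,1]$; hence the Hilbert--Schmidt norm obeys $\|\rho\|^2 = \sum_i \lambda_i^2 \leq \big(\sum_i \lambda_i\big)^2 = 1$, confining $\mathcal{D}$ to a ball. For closedness, I would write $\mathcal{D}$ as an intersection of closed sets inside $\Pi$: the positive-semidefinite cone is closed because it equals $\bigcap_{\psi \in \mathcal{H}} \{\, \rho : \langle \psi | \rho | \psi \rangle \geq 0 \,\}$, an intersection of closed half-spaces each defined by a continuous linear functional; intersecting with the already-closed affine conditions yields a closed set.

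The only step beyond bookkeeping is the closedness of the positivity condition, and I expect the cleanest route is the half-space characterization above---equivalently, the observation that the smallest eigenvalue is a continuous function of the matrix, so $\{\lambda_{\min} \geq 0\}$ is the preimage of a closed set. Everything else reduces to linearity of the trace and the expectation-value characterization of positivity.
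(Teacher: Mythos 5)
Your proof is correct and is strictly more complete than the one in the paper. The paper's proof establishes only the convexity of $\mathcal{D}$: it verifies self-adjointness and unit trace by linearity, and positive semidefiniteness by appealing to the convexity of the positive-semidefinite cone (arguing via the spectrum of the combination); compactness and the containment in $\Pi$ are asserted in the statement but never actually proved. You prove all three claims. On the shared part, your positivity argument via the quadratic form $\langle \psi | \rho_\lambda | \psi \rangle \geq 0$ is cleaner and more elementary than the paper's eigenvalue-based phrasing, which is stated somewhat loosely (the precise justification would invoke the min--max principle to locate the spectrum of the sum). Beyond that, you supply what the paper omits: the dimension count identifying $\Pi \cong \mathbb{R}^{m^2-1}$, boundedness via the Hilbert--Schmidt norm estimate $\|\rho\|^2 = \sum_i \lambda_i^2 \leq 1$, and closedness by writing the positive cone as an intersection of closed half-spaces (equivalently, continuity of $\lambda_{\min}$), so that Heine--Borel applies in the finite-dimensional ambient space. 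Since compactness of $\mathcal{D}$ is used implicitly later in the paper (e.g., in the discussion of the boundary $\partial\mathcal{D}$ and its strata), your fuller argument is the one a careful reader would want.
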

\begin{proof}
Let $\rho_1, \rho_2 \in \mathcal{D}$ and $\lambda \in [0,1]$. Then $\lambda \rho_1 + (1 - \lambda) \rho_2$ is self-adjoint (as a convex combination of self-adjoint operators), positive semidefinite (since the positive semidefinite cone is convex: the eigenvalues of the combination lie in the convex hull of the eigenvalues of $\rho_1$ and $\rho_2$, hence remain non-negative), and satisfies $\tr[\lambda \rho_1 + (1 - \lambda) \rho_2] = 1$ by linearity of the trace. Thus, $\mathcal{D}$ is convex.
\end{proof}
\begin{definition}
    The \emph{relative interior} of $\mathcal{D}$ is the open set of \emph{positive definite} states, defined as $\mathcal{D}^\circ := \operatorname{int}_{\Pi}(\mathcal{D}) = \{ \rho \in \Pi \mid \rho > 0 \}$.
\end{definition}
\begin{remark}
    The positive definite states are those with full rank, i.e., $\mathrm{rank}(\rho) = m$, for all $\rho \in \mathcal{D}^\circ$.
\end{remark}
\begin{proposition}
    $\mathcal{D}^\circ$ is a smooth manifold of dimension $m^2 - 1$.
\end{proposition}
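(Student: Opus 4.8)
The plan is to recognize $\mathcal{D}^\circ$ as an open subset of the affine space $\Pi$ and then invoke the standard fact that any open subset of a smooth manifold is again a smooth manifold of the same dimension. The whole argument is therefore structural rather than computational.

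First I would record that $\Pi$ itself carries a canonical smooth structure. The real vector space of self-adjoint operators on $\mathcal{H} \cong \mathbb{C}^m$ has real dimension $m^2$ (the $m$ real diagonal entries contribute $m$ parameters, and the $m(m-1)/2$ independent complex off-diagonal entries contribute $m(m-1)$, for a total of $m^2$), and $\tr$ restricts to a nonzero $\mathbb{R}$-linear functional on it. Hence $\Pi = \tr^{-1}(\{1\})$ is an affine hyperplane, diffeomorphic to $\mathbb{R}^{m^2-1}$ exactly as asserted in the preceding proposition. In particular $\Pi$ is a smooth manifold of dimension $m^2 - 1$.

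Second, I would show that $\mathcal{D}^\circ$ is open in $\Pi$. By definition $\mathcal{D}^\circ = \operatorname{int}_\Pi(\mathcal{D})$, so openness is immediate; the substantive point is that this relative interior coincides with the set of positive definite states, and for this I would argue that positive definiteness is an open condition. The map sending a self-adjoint operator to its smallest eigenvalue $\lambda_{\min}$ is continuous (by the min–max principle it is an infimum of Rayleigh quotients, hence continuous in the operator), and $\rho > 0$ holds precisely when $\lambda_{\min}(\rho) > 0$. Thus $\mathcal{D}^\circ = \{ \rho \in \Pi \mid \lambda_{\min}(\rho) > 0 \}$ is the preimage of the open ray $(0,\infty)$ under a continuous map, and so is open. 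Finally I would conclude by restriction of charts: an open subset of a smooth manifold inherits a smooth structure of the same dimension, which applied to $\mathcal{D}^\circ \subset \Pi$ gives that $\mathcal{D}^\circ$ is a smooth manifold of dimension $m^2 - 1$.

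The main obstacle—indeed the only nontrivial step—is rigorously establishing the openness of the positive definite cone inside $\Pi$; everything else is bookkeeping about affine hyperplanes and open submanifolds. I would be prepared to replace the eigenvalue argument with one via Sylvester's criterion, under which positive definiteness is equivalent to the strict positivity of finitely many leading principal minors, each a polynomial and hence continuous function of the matrix entries; this makes the open-condition claim fully elementary and avoids any appeal to continuity of the spectrum.
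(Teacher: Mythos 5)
Your proposal is correct. The paper offers no actual argument here---its ``proof'' is the single sentence that the manifold structure is trivial, with a citation---so your write-up simply supplies the standard details the paper omits: $\Pi$ is an affine hyperplane of real dimension $m^2-1$ in the $m^2$-dimensional real vector space of self-adjoint operators, $\mathcal{D}^\circ$ is open in $\Pi$, and an open subset of a smooth manifold is a smooth manifold of the same dimension. One small simplification: since the paper \emph{defines} $\mathcal{D}^\circ := \operatorname{int}_\Pi(\mathcal{D})$, openness in $\Pi$ is immediate and the eigenvalue (or Sylvester) argument is only needed to justify the identification of that interior with the positive definite states, which is a separate definitional claim rather than part of this proposition.
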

\begin{proof}
    The manifold structure of $\mathcal{D}^\circ$ is trivial \cite{caja12_variedades}.
\end{proof}

\begin{definition}
The \emph{boundary} of the state space is $\partial \mathcal{D} := \mathcal{D} \setminus \mathcal{D}^\circ = \{ \rho \in \mathcal{D} \mid \mathrm{rank}(\rho) < m \}$.
\end{definition}
\begin{remark}
   The boundary $\partial \mathcal{D}$ contains all rank-deficient states, i.e., $\mathrm{rank}(\rho) < m$, including the \emph{pure states}, i.e., states with rank $1$. 
\end{remark}

The subsequent analysis is restricted to the interior $\mathcal{D}^\circ$. Although boundary effects must be considered in general equilibrium manifold theory due to finite-dimensional constraints, they play no role in the present context, as will be shown in Sec.~\ref{sec:boundary}.

\begin{definition}
Let $\mathcal{H}$ be a Hilbert space. An \emph{observable} $A$ is a self-adjoint operator on $\mathcal{H}$.
For any quantum state $\rho \in \mathcal{D}$, the \emph{expectation value} of $A$ in the state $\rho$ is defined as
\begin{equation}
a = \langle A \rangle_\rho = \tr(\rho A) \in \mathbb{R}.
\end{equation}
A finite collection $\{A_1, \ldots, A_n\}$ of observables is called a \emph{set of observables}.
\end{definition}

For later convenience, we introduce a finite set of observables $ \boldsymbol{A} = \{ A_i \in \mathscr{B}(\mathcal{H}) \mid A_i \text{ observable}, i=1,\ldots,n \}$.
We associate with this set a collection of real-valued \emph{conjugate variables} $\boldsymbol{\lambda} = \{ \lambda_i \in \mathbb{R} \mid i=1,\ldots,n \}$, acting as smooth coordinates on the thermodynamic state space. Mimicking classical thermodynamics, the interest resides in defining a contact manifold for the quantum states, for them to be labeled by several coordinates by Theorem \ref{thm:darboux}. 
\begin{theorem}[Quantum thermodynamic state space]
\label{thm:qtps}
The \emph{quantum thermodynamic state space} is a smooth manifold $\mathcal{M} \cong \mathbb{R}^{2n+1}$, with global coordinates $(S,\boldsymbol{a},\boldsymbol{\lambda}) = (S, a_1, \ldots, a_n, \lambda_1, \ldots, \lambda_n)$, endowed with a contact form
\begin{equation}
  \eta = dS - \sum_{i=1}^n \lambda_i da_i \in \Lambda^1(\mathcal{M}).  
\end{equation}
\end{theorem}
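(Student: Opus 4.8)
The plan is to decompose the statement into two independent claims: that $\mathcal{M}$ is a smooth $(2n+1)$-manifold with the listed global coordinates, and that the stated $\eta$ is genuinely a contact form in the sense of Definition~\ref{def:contact_form}. The first claim is essentially definitional, so almost all of the work lies in the second.

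For the manifold claim I would simply note that $\mathbb{R}^{2n+1}$ carries its standard smooth structure and that the tuple $(S,\boldsymbol{a},\boldsymbol{\lambda})$ is nothing but the identity chart, hence a single global coordinate system; the only conceptual point is that the expectation values $a_i$, their conjugate variables $\lambda_i$, and the entropy $S$ are being promoted to independent Cartesian coordinates. No atlas-gluing or compatibility check is required.

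For the contact claim the strategy is a direct evaluation of $\eta \wedge (d\eta)^n$. First I would compute $d\eta = -\sum_{i=1}^n d\lambda_i \wedge da_i = \sum_{i=1}^n da_i \wedge d\lambda_i$, a sum of decomposable $2$-forms $\omega_i := da_i \wedge d\lambda_i$. The two facts I would exploit are that each $\omega_i$ has even degree, so the $\omega_i$ commute under the wedge product, and that $\omega_i \wedge \omega_i = 0$ since $da_i$ then appears twice. Expanding $(d\eta)^n = \bigl(\sum_i \omega_i\bigr)^n$ multinomially, every monomial with a repeated index dies, leaving only the $n!$ reorderings of $\omega_1 \wedge \cdots \wedge \omega_n$, all equal by commutativity, so that $(d\eta)^n = n!\, da_1 \wedge d\lambda_1 \wedge \cdots \wedge da_n \wedge d\lambda_n$. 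Wedging with $\eta$ then kills every term coming from $-\sum_i \lambda_i da_i$, because each $da_i$ already appears in $(d\eta)^n$; only the $dS$ term survives, giving $\eta \wedge (d\eta)^n = n!\, dS \wedge da_1 \wedge d\lambda_1 \wedge \cdots \wedge da_n \wedge d\lambda_n$. This is a nonzero constant multiple of the standard volume form on $\mathbb{R}^{2n+1}$, hence nowhere vanishing, which is exactly the non-integrability condition.

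I do not expect a genuine obstacle here, since the asserted $\eta$ is precisely the Darboux normal form of Theorem~\ref{thm:darboux} read in reverse; the entire content is the wedge-product bookkeeping. The only place demanding care is the multinomial expansion: one must correctly justify that repeated-index terms vanish and that the surviving terms coincide, which rests on the even-degree commutativity of the $\omega_i$ and keeps the sign accounting routine rather than delicate.
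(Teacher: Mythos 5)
Your proposal is correct and follows essentially the same route as the paper: compute $d\eta = -\sum_i d\lambda_i \wedge da_i$, expand the $n$-fold wedge power so that only the distinct-index terms survive, and observe that wedging with $\eta$ leaves a nonzero multiple of the standard volume form $dS \wedge da_1 \wedge d\lambda_1 \wedge \cdots \wedge da_n \wedge d\lambda_n$. Your bookkeeping is in fact slightly more careful than the paper's (you keep the $n!$ factor and justify the even-degree commutativity explicitly), but the argument is the same.
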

\begin{proof}
The diffeomorphism $\mathcal{M}\cong\mathbb{R}^{2n+1}$ is immediate from the coordinate chart.
To verify that $\eta$ is a contact form, compute the exterior derivative
\begin{equation}
    d\eta = d\left(dS - \sum_{i=1}^n \lambda_i da_i\right) = -\sum_{i=1}^n d\lambda_i \wedge da_i,
\end{equation}
since $d(dS) = 0$ and $d(\lambda_i da_i) = d\lambda_i \wedge da_i$. Then
\begin{equation}
        (d\eta)^{ n} = (-1)^n \bigwedge_{i=1}^n \left(d\lambda_i \wedge da_i\right).
    \end{equation}
and the wedge product is
\begin{equation}
    \eta \wedge (d\eta)^n = dS \wedge (-1)^n \bigwedge_{j=1}^n (d\lambda_i \wedge da_i).
    \end{equation}
The set $\{dS,da_1,\dots,da_n,d\lambda_1,\dots,d\lambda_n\}$ is a basis of $T^*\mathcal{M}$, so the right-hand side is a non-vanishing volume form. Hence $\eta\wedge (d\eta)^n\neq 0$ everywhere and $(\mathcal{M},\eta)$ is a contact manifold.
\end{proof}
To characterize quantum states in thermodynamic equilibrium, we introduce the following postulate, which specifies their mathematical form.
\begin{postulate}\label{post:thermal_state}
    A quantum state $\rho\in \mathcal{D}$ is in \emph{thermodynamic equilibrium}  with respect to a set of observables $\{A_1,\dots,A_n\}\subset\mathscr{B}(\mathcal{H})$ if it is given by
    \begin{equation}
\rho_{\boldsymbol{\lambda}}
 = \frac{\exp\left(-\sum_{i=1}^n\lambda_i A_i\right)}{Z(\boldsymbol{\lambda})},
\label{eq:gibbs}
\end{equation}
with
\begin{equation}
Z(\boldsymbol{\lambda})=\tr \left[\exp\left(-\sum_{i=1}^n\lambda_i A_i\right)\right],
\end{equation}
where $\boldsymbol{\lambda}=(\lambda_1,\dots,\lambda_n)\in \mathbb{R}^n$. Such states are known as \emph{Gibbs states}, and maximize the entropy for a given set of observables $\{A_1,\dots,A_n\}\subset\mathscr{B}(\mathcal{H})$.
\end{postulate}
\begin{remark}
	Since the exponential of a matrix is always invertible, it has maximum rank, meaning that all states of the form given by Eq. (\ref{eq:gibbs}) lie in $\mathcal{D}^\circ$.
\end{remark}
To establish a connection between the contact structure and the quantum viewpoint, we define a state function that maps thermodynamic coordinates to quantum states.

\begin{definition}
\label{def:state_function}
Let $\mathcal{D}$ be the set of all density operators on a finite-dimensional Hilbert space $\mathcal{H}\cong\mathbb{C}^m$.  
The \emph{state function} $\Xi: \mathcal{M} \to \mathcal{D}$ assigns to each thermodynamic point $(S,\boldsymbol{a},\boldsymbol{\lambda})$ the density operator
\begin{equation}
\Xi(S,\boldsymbol{a},\boldsymbol{\lambda})
 = \frac{\exp\left(-\sum_{i=1}^n\mu_i(S,\boldsymbol{a},\boldsymbol{\lambda})A_i\right)}
       {\tr\left[\exp\left(-\sum_{i=1}^n\mu_i(S,\boldsymbol{a},\boldsymbol{\lambda})A_i\right)\right]},
\label{eq:state-function}
\end{equation}
where the smooth functions $\mu_i:\mathcal{M}\to\mathbb{R}$ satisfy the \emph{equilibrium consistency condition}, i.e., given the embedding
\begin{equation}
\begin{split}
	\iota: \mathbb{R}^n & \longrightarrow\mathcal{M}\\
	\boldsymbol{\lambda} &\longmapsto\left(S(\boldsymbol{\lambda}),\boldsymbol{a}(\boldsymbol{\lambda}),\boldsymbol{\lambda}\right), 
\end{split}
\end{equation}
where
\begin{equation}
\begin{split}
	a_i(\boldsymbol{\lambda})& =\langle A_i\rangle_{\rho_{\boldsymbol{\lambda}}}=-\frac{\partial\log Z}{\partial\lambda_i},\\
	S(\boldsymbol{\lambda})&=-\tr\left(\rho_{\boldsymbol{\lambda}}\log\rho_{\boldsymbol{\lambda}}\right)
	=\log Z(\boldsymbol{\lambda})+\sum_{i=1}^n\lambda_i a_i(\boldsymbol{\lambda}),
\end{split}
\end{equation}
we have
\begin{equation}
(\mu_i \circ \iota)(\boldsymbol{\lambda})=\lambda_i,
\label{eq:mu-equilibrium}
\end{equation}
for all $\boldsymbol{\lambda}\in \mathbb{R}^n$, for $i = 1, \ldots, n$. Additionally, for fixed values of $S$ and $\boldsymbol{a}$, and also $c_1, \dots, c_n \in \mathbb{R}$, the system
\begin{equation}
	\mu_1(S,\boldsymbol{a},\boldsymbol{\lambda}) = c_1, \hspace{3pc} \hdots \hspace{3pc} 
	\mu_n(S,\boldsymbol{a},\boldsymbol{\lambda}) = c_n,
\end{equation}
has a unique solution $\lambda_1, \dots, \lambda_n \in \mathbb{R}$. This condition will be called \textit{joint injectivity} of the $\mu_i$ functions.

\end{definition}
The functions $\mu_i$ generalize the conjugate parameters $\lambda_i$ to non-equilibrium points, ensuring that $\Xi$ is a smooth map across $\mathcal{M}$.
On the equilibrium submanifold, where $\mu_i = \lambda_i$, the state reduces to a Gibbs state, which maximizes the entropy for given expectation values $a_i = \tr(A_i \rho)$.

\begin{proposition}
\label{prop:equilibrium}
The image of the previous embedding is an $n$-dimensional \emph{Legendrian submanifold} $\mathcal{E}=\iota(\mathbb{R}^n)\subset\mathcal{M}$.
\end{proposition}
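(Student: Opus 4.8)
The plan is to split the statement into two independent parts: first, showing that $\iota$ is a smooth embedding, so that $\mathcal{E}=\iota(\mathbb{R}^n)$ is genuinely an $n$-dimensional submanifold; and second, verifying the Legendrian condition $\iota^*\eta=0$. Since the definition of \emph{Legendrian} requires both $\dim\mathcal{E}=n$ and $\eta|_{\mathcal{E}}=0$, these two steps together suffice.

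For the embedding part I would argue as follows. The map $\iota$ is smooth because $\log Z(\boldsymbol{\lambda})=\tr[\exp(-\sum_i\lambda_i A_i)]$ is a smooth (indeed real-analytic, and strictly positive) function of $\boldsymbol{\lambda}$, hence so are $a_i=-\partial_{\lambda_i}\log Z$ and $S=\log Z+\sum_i\lambda_i a_i$. Injectivity and the immersion property are both immediate from the fact that the last $n$ coordinates of $\iota(\boldsymbol{\lambda})$ are $\boldsymbol{\lambda}$ itself: the coordinate projection $\pi:\mathcal{M}\to\mathbb{R}^n$, $(S,\boldsymbol{a},\boldsymbol{\lambda})\mapsto\boldsymbol{\lambda}$, satisfies $\pi\circ\iota=\mathrm{id}_{\mathbb{R}^n}$, which forces $\iota$ to be injective and forces its Jacobian to contain an $n\times n$ identity block, hence to have full rank $n$. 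The same relation $\pi\circ\iota=\mathrm{id}$ shows that $\iota^{-1}$ on the image is the restriction of the continuous map $\pi$, so $\iota$ is a homeomorphism onto its image. Therefore $\mathcal{E}$ is an embedded $n$-dimensional submanifold.

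The heart of the proof is the computation $\iota^*\eta=0$. Pulling back, $\iota^*\eta=d(S\circ\iota)-\sum_i\lambda_i\,d(a_i\circ\iota)$, where each $\lambda_i$ restricts to the standard coordinate on $\mathbb{R}^n$. Differentiating $S=\log Z+\sum_j\lambda_j a_j$ yields $dS=d\log Z+\sum_j(a_j\,d\lambda_j+\lambda_j\,da_j)$, and substituting $d\log Z=\sum_k(\partial_{\lambda_k}\log Z)\,d\lambda_k=-\sum_k a_k\,d\lambda_k$ cancels the $a_j\,d\lambda_j$ terms, leaving $dS=\sum_j\lambda_j\,da_j$. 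Hence $\iota^*\eta=\sum_j\lambda_j\,da_j-\sum_i\lambda_i\,da_i=0$. This identity is precisely the statement that the equilibrium entropy is the Legendre transform of $\log Z$, so the content of the step is thermodynamic: it is the differential form of the first law along the family of Gibbs states.

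With both parts in hand, $\mathcal{E}$ is an $n$-dimensional submanifold with $\eta|_{\mathcal{E}}=0$, hence Legendrian; this is consistent with the maximality noted after the Frobenius remark, since $n$ is the largest possible dimension for a submanifold on which $\eta$ vanishes. The only step requiring genuine care is the pullback computation, and even there the main subtlety is bookkeeping — keeping track of which variables are independent along $\iota$ (only $\boldsymbol{\lambda}$) versus dependent ($S$ and $\boldsymbol{a}$). I expect no real obstacle; the smoothness and positivity of $Z$, guaranteed because $\rho_{\boldsymbol{\lambda}}\in\mathcal{D}^\circ$ has full rank, is what underwrites every differentiation performed.
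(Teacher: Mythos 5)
Your proof is correct and its core—the computation $\iota^*\eta = dS - \sum_i\lambda_i\,da_i = 0$ via differentiating $S=\log Z+\sum_i\lambda_i a_i$ and using $\partial_{\lambda_i}\log Z=-a_i$—is exactly the argument the paper gives; your explicit verification that $\iota$ is an embedding (via the left inverse $\pi\circ\iota=\mathrm{id}$) is a welcome supplement that the paper leaves implicit. One trivial slip: you wrote $\log Z(\boldsymbol{\lambda})=\tr[\exp(-\sum_i\lambda_i A_i)]$ where the right-hand side is $Z$ itself, though your intended point (that $Z$ is smooth and strictly positive, hence $\log Z$ is smooth) is sound.
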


\begin{proof}
On $\mathcal{E}$ the functions $\mu_i$ reduce to $\lambda_i$ by Eq. \eqref{eq:mu-equilibrium}, so
\begin{equation}
\Xi\bigl(S(\boldsymbol{\lambda}),\boldsymbol{a}(\boldsymbol{\lambda}),\boldsymbol{\lambda}\bigr)=\rho_{\boldsymbol{\lambda}}.
    \end{equation}
Differentiating the identity $S=\log Z+\sum_i\lambda_i a_i$
\begin{equation}
dS = \sum_i\lambda_ida_i + \sum_i a_id\lambda_i
          + \sum_i\frac{\partial\log Z}{\partial\lambda_i}d\lambda_i.
    \end{equation}
The partial derivative $\partial\log Z/\partial\lambda_i=-a_i$, hence the last two terms cancel and $dS = \sum_{i=1}^n\lambda_ida_i$.
The pull-back of the contact form is therefore $\iota^*\eta = dS - \sum_i\lambda_ida_i = 0$.
Since $\dim\mathcal{E}=n$ and $\iota^*\eta=0$, $\mathcal{E}$ is Legendrian.
\end{proof}

In thermodynamic equilibrium, i.e., on $\mathcal{E}$, the coordinates have physical interpretations. 
The coordinate $S$ is the \emph{von Neumann entropy} $S=-\tr(\rho\log\rho)$.  
Each $a_i=\tr(A_i\rho)$ is the \emph{expectation value} of the observable $A_i$.  
The conjugate parameters $\lambda_i$ play the role of \emph{intensive variables}, e.g., inverse temperatures or chemical potentials. The contact relation $dS=\sum_i\lambda_ida_i$ is the quantum version of the classical first law in the entropy representation.

\begin{remark}[First law of quantum thermodynamics]
\label{remark:quantum_first_law}
The quantum first law is encoded in the contact form $\eta = dS - \sum_{i=1}^n \lambda_i da_i$ on the quantum thermodynamic state space $\mathcal{M}$. On the Legendrian submanifold $\mathcal{E}$ of equilibrium states, $\eta|_{\mathcal{E}} = 0$ implies
\begin{equation}
dS = \sum_{i=1}^n \lambda_i da_i,
\end{equation}
expressing the differential change of the entropy in terms of variations in expectation values of observables and intensive parameters. This generalizes the classical first law to quantum systems.
\end{remark}

\begin{remark}
Henceforth, thermodynamic entropy in the quantum regime is quantified by the \emph{von Neumann entropy}, defined as $ S(\rho) = -\tr(\rho \log \rho)$, for all $\rho \in \mathcal{D}^\circ$.
Although the precise interpretation of quantum thermodynamic entropy remains a subject of active debate, for our purposes, the specific interpretation is immaterial, provided the entropy is a smooth, positive, and concave function $S: \mathcal{D}^\circ \to \mathbb{R}$ on the manifold of density operators. These conditions ensure compatibility with the differential geometric framework developed herein.
\end{remark}

The functions $\mu_i: \mathcal{M} \to \mathbb{R}$, $i=1,\dots,n$ allow $\Xi$ to be defined smoothly everywhere in $\mathcal{M}$, generalizing Gibbs states to non-equilibrium points.
On the \emph{equilibrium submanifold} $\mathcal{E}$, the consistency condition given by Eq. \eqref{eq:mu-equilibrium} enforces $\mu_i\left(S(\boldsymbol{\lambda}), \boldsymbol{a}(\boldsymbol{\lambda}), \boldsymbol{\lambda}\right) = \lambda_i$, for all $i = 1, \ldots,n$
so that $\Xi\bigl(S(\boldsymbol{\lambda}), \boldsymbol{a}(\boldsymbol{\lambda}), \boldsymbol{\lambda}\bigr) = \rho_{\boldsymbol{\lambda}}$
recovers the standard Gibbs state as in Eq. \eqref{eq:gibbs}.
Off equilibrium, i.e., for points away from $\mathcal{E}$, the functions $\mu_i$ are \emph{not} constrained by the Gibbs form and may be chosen flexibly to model non-equilibrium dynamics. For example, the correspondence $\mu_i = \lambda_i + f_i(S, \boldsymbol{a}, \boldsymbol{\lambda})$,
where each $f_i \in C^\infty(\mathcal{M})$ vanishes on $\mathcal{E}$, i.e., $f_i|_\mathcal{E} = 0$, provides a valid extension, for all $i = 1, \ldots,n$. 

A naive attempt to define $a_i = \tr(A_i \Xi(S, \boldsymbol{a}, \boldsymbol{\lambda}))$ globally on $\mathcal{M}$ would render $\Xi$ implicitly dependent on its own output, leading to circularity. This is avoided by treating the coordinates $(S, \boldsymbol{a}, \boldsymbol{\lambda})$ as \emph{independent} in $\mathcal{M} \cong \mathbb{R}^{2n+1}$. The physical constraint by $a_i = \tr(A_i \rho)$
is imposed only on the equilibrium submanifold $\mathcal{E}$, where it is automatically satisfied by the Gibbs construction: see Proposition~\ref{prop:equilibrium}. Elsewhere, $\boldsymbol{a}$ represents \emph{target} or \emph{virtual} expectation values, not necessarily realized by $\Xi$, enabling the modeling of constrained or fictitious ensembles.

\section{Quantum thermodynamic fiber bundle}
Under appropriate circumstances, the state function introduced in Definition~\ref{def:state_function} induces a rich geometric structure: $\mathcal{M}$ is a \emph{fiber bundle} over the base space of quantum states, with fibers encoding all thermodynamic configurations compatible with a given density operator. This framework unifies equilibrium and non-equilibrium quantum thermodynamics within a single differential-geometric object.

Let us suppose that the manifold of Gibbs states generated by the fixed observables $\{A_1, \dots, A_n\}$, which is given by $\mathcal{B} = \bigl\{ \rho_{\boldsymbol{\lambda}} \big| \boldsymbol{\lambda} \in U \subseteq \mathbb{R}^n\bigr\} \subseteq \mathcal{D}^\circ$, has maximal dimension $n$. Then, the correspondence $\Xi$ is a smooth map onto the $n$-dimensional submanifold $\mathcal{B}$. Additionally, if $\Xi$ is a surjective \emph{submersion}, i.e., the differential $ d\Xi_p: T_p \mathcal{M} \to T_{\Xi(p)} \mathcal{B} $ is surjective for all $ p \in \mathcal{M} $, we can claim the following:

\begin{definition}
Let $\mathcal{M} \cong \mathbb{R}^{2n+1}$ be the quantum thermodynamic state space with coordinates $(S, \boldsymbol{a}, \boldsymbol{\lambda})$, and let $\mathcal{B}$ be the manifold of Gibbs states generated by the fixed observables $\{A_1, \dots, A_n\}$. The \emph{quantum thermodynamic fiber} over a Gibbs state $\sigma \in \mathcal{B}$ is the preimage
\begin{equation}
F_\sigma := \Xi^{-1}(\sigma) = \bigl\{ (S, \boldsymbol{a}, \boldsymbol{\lambda}) \in \mathcal{M} \;\big|\; \Xi(S, \boldsymbol{a}, \boldsymbol{\lambda}) = \sigma \bigr\}.
\label{eq:fiber}
\end{equation}
If $ \Xi $ is a submersion, the fiber $F_\sigma$ is a smooth submanifold of $ \mathcal{M} $ with dimension
\begin{equation}
	\dim F_\sigma = \dim \mathcal{M} - \dim \mathcal{B} = (2n + 1) - n = n + 1.
	\label{eq:fiber_dim}
\end{equation}
\end{definition}

\begin{remark}
	The conditions required for the previous definition to make sense, namely that the submanifold $\mathcal{B}$ has dimension $n$ and that the map $\Xi$ be a submersion, are not guaranteed to hold. A first immediate requirement for $\mathcal{B}$ to have maximal dimension is that the observables $\{A_1, \dots, A_n\}$ be linearly independent. This ensures that the submanifold generated by the exponential of the linear combinations of such observables has the correct dimension $n$. However, dividing by the trace may entail a reduction in the dimensionality of the submanifold. For example, if the set of observables only has one element $\{ A_1 \}$ and $A_1$ is a multiple of the identity matrix, the set $\mathcal{B}$ contains just one element. On the contrary, if $A_1$ is an $n \times n$ matrix having $n$ different eigenvalues, $\mathcal{B}$ has dimension 1 and $\Xi$ is a submersion. Cases where the set of observables has more than one element have to be studied separately.
\end{remark}

If all the level sets $\Xi^{-1}(\sigma)$ for $\sigma \in \mathcal{B}$ are diffeomorphic, the previous construction gives rise to a fiber bundle:
\begin{definition}
A \emph{quantum thermodynamic fiber bundle} is a tuple $ (\mathcal{M}, \mathcal{B}, \Xi, F)$, where 
$\mathcal{M} $ is the quantum thermodynamic state space (total space), with $\dim = 2n+1$;
$\mathcal{B} \subseteq \mathcal{D}^\circ$ is the base space, with $\dim = n$;
$F$ is the \emph{fiber}; 
and $\Xi: \mathcal{M} \to \mathcal{B}$ is the smooth \emph{projection} (state function).
The bundle satisfies \emph{local triviality}: for every $\sigma \in \mathcal{B}$, there exists a neighborhood $U \subset \mathcal{B}$ and a diffeomorphism $\phi: \Xi^{-1}(U) \to U \times F$, such that the following diagram commutes
\begin{equation*}
\begin{tikzcd}
\Xi^{-1}(U) \arrow[r, "\phi"] \arrow[dr, "\Xi"'] & U \times F \arrow[d, "\mathrm{pr}_1"] \\
& U
\end{tikzcd}
\end{equation*}
where $ \mathrm{pr}_1: U \times F \to U $ is the projection onto the first factor. Thus, $(\mathcal{M}, \Xi, \mathcal{B})$ is a \emph{smooth fiber bundle} with fiber $F$.
\end{definition}

The state $\Xi(S, \boldsymbol{a}, \boldsymbol{\lambda})$ depends \emph{only} on the $n$ values $\boldsymbol{\mu}(S, \boldsymbol{a}, \boldsymbol{\lambda}) = (\mu_1, \dots, \mu_n) \in \mathbb{R}^n$. Hence, $\sigma = \Xi(S, \boldsymbol{a}, \boldsymbol{\lambda})$ if and only if $\mu_i(S, \boldsymbol{a}, \boldsymbol{\lambda}) = c_i$, where $\boldsymbol{c} = (c_1, \dots, c_n)$ determines $\sigma = \rho_{\boldsymbol{c}}$ via the Gibbs form.  
The fiber $F_\sigma$ is therefore the \emph{joint level set}
\begin{equation}
F_\sigma = \bigl\{ (S, \boldsymbol{a}, \boldsymbol{\lambda}) \in \mathcal{M} \;\big|\; \mu_i(S, \boldsymbol{a}, \boldsymbol{\lambda}) = c_i , i=1,\ldots,n \bigr\},
\end{equation}
a smooth $(n+1)$-dimensional submanifold under the assumptions stated earlier.

\begin{remark}
	Under the conditions thus far assumed, the quantum thermodynamic fiber bundle is a \textit{trivial bundle}, in the sense that it is globally diffeomorphic to a cartesian product $\mathbb{R}^n \times \mathbb{R}^{n+1}$. Indeed, the base space of the bundle, namely the submanifold of Gibbs states $\mathcal{B}$, is always contractible for a quantum system of finite-dimensional state space. It can be seen that $\mathcal{B}$ is homeomorphic to $\mathbb{R}^n$. As is well known, any fiber bundle over a contractible base space is necessarily trivial. In other words, $\mathbb{R}^{2n+1}$ is homeomorphic to $\mathbb{R}^n \times \Xi^{-1}(\sigma)$ for any $\sigma \in \mathcal{B}$. From topological considerations, it then follows that $\Xi^{-1}(\sigma)$ must also be contractible. In fact, $\Xi^{-1}(\sigma)$ can be argued to be diffeomorphic to $\mathbb{R}^{n+1}$.
\end{remark}

The equilibrium submanifold $\mathcal{E} \subset \mathcal{M}$ is a \emph{Legendrian section} of the bundle. For $\sigma \in \mathcal{B}$, the intersection $F_\sigma \cap \mathcal{E}$
consists of a single element, assuming the map $\boldsymbol{\lambda} \mapsto \rho_{\boldsymbol{\lambda}}$ is injective, corresponding to the unique thermodynamic coordinates $\left(S(\boldsymbol{\lambda}), \boldsymbol{a}(\boldsymbol{\lambda}), \boldsymbol{\lambda}\right)$,
where $\mu_i = \lambda_i$, $S = -\tr(\rho \log \rho)$, and $\boldsymbol{a} = \langle \boldsymbol{A} \rangle_\rho$. This point satisfies the contact constraint $\eta = 0$, i.e.,
\begin{equation}
dS = \sum_{i=1}^n \lambda_ida_i.
\end{equation}
Points in $F_\sigma \setminus \mathcal{E}$ represent \emph{non-equilibrium thermodynamic configurations} that still yield the same physical state $\sigma$ but violate equilibrium relations, e.g., inconsistent $\boldsymbol{a}$.

The physical interpretation is that the fiber $F_\sigma$ is the set of all thermodynamic labels $(S, \boldsymbol{a}, \boldsymbol{\lambda})$ compatible with the same physical density matrix, $\sigma$.  
On $\mathcal{E}$, we have \emph{equilibrium}, i.e., unique $S$, correct $\boldsymbol{a} = \langle \boldsymbol{A} \rangle_\sigma$, $\boldsymbol{\lambda}$ as intensive parameters. Off $\mathcal{E}$, the points represent \emph{non-equilibrium}, meaning same $\sigma$, but $S \neq -\tr(\sigma \log \sigma)$, or $\boldsymbol{a} \neq \langle \boldsymbol{A} \rangle_\sigma$

\begin{example}
\label{example:qubit_fiber}
To build intuition, consider a \emph{qubit}, with $\mathcal{H} \cong \mathbb{C}^2$, so $m=2$, with a single observable $A_1 = \sigma_z$, the Pauli-$Z$ matrix. This is given by
      \begin{equation}
	\sigma_z = \begin{pmatrix} 1 & 0 \\ 0 & -1 \end{pmatrix}.
\end{equation}
Then, $\mathcal{M} \cong \mathbb{R}^3$ with coordinates $(S, a, \lambda)$; $\mathcal{B} \subset \mathcal{D}^\circ$ is the \emph{thermal curve} of Gibbs states
      \begin{equation}
      \rho_\lambda = \frac{1}{2}\begin{pmatrix} 1 + \tanh\lambda & 0 \\ 0 & 1 - \tanh\lambda \end{pmatrix},
      \end{equation}
      parametrized by inverse temperature $\lambda \in \mathbb{R}$; and the expectation value of the observable and the entropy are
      \begin{equation}
          \begin{split}
              a( \lambda) &= \langle \sigma_z \rangle = \tanh\lambda, \\
              S(\lambda) &= \log(2\cosh \lambda) - \lambda \tanh\lambda.
          \end{split}
      \end{equation}
The state function $\Xi: \mathbb{R}^3 \to \mathcal{B}$ assigns to each thermodynamic point $(S, a, \lambda)$ the density matrix
\begin{equation}
\Xi(S, a, \lambda) = \frac{\exp(-\mu(S, a, \lambda) \sigma_z)}{\tr[\exp(-\mu(S, a, \lambda) \sigma_z)]},
\end{equation}
where $\mu: \mathbb{R}^3 \to \mathbb{R}$ is any smooth extension of the equilibrium relation $\mu(S(\lambda), a(\lambda), \lambda) = \lambda$ satisfying injectivity over the $\lambda$ variable.

Geometrically, this means that $\mathcal{M} \cong \mathbb{R}^3$ is a 3D volume; $\mathcal{B} \subset \mathcal{D}^\circ$ is a 1D curve in the Bloch ball interior; the fiber $F_\sigma = \Xi^{-1}(\sigma)$ over a fixed thermal state $\sigma = \rho_{\lambda_0}$ is the set of all $(S, a, \lambda)$ such that $\mu(S, a, \lambda) = \lambda_0$, i.e., a 2D surface (codimension 1) in $\mathbb{R}^3$.

The \emph{equilibrium submanifold} $\mathcal{E}$ is the curve $
\mathcal{E} = \{ (S(\lambda), a(\lambda), \lambda) \mid \lambda \in \mathbb{R}\}$,
which intersects each fiber $F_\sigma$ at \emph{exactly one point}---the unique thermodynamic coordinates consistent with thermal equilibrium. Thus, multiple points in $\mathcal{M}$ map to the same physical state $\sigma$, but only one lies on $\mathcal{E}$. Points off $\mathcal{E}$ in $F_\sigma$ represent non-equilibrium thermodynamic descriptions of $\sigma$: same density matrix, but incorrect entropy $S$, mismatched expectation $a$, or inconsistent intensive parameter $\lambda$. This redundancy is the hallmark of the fiber bundle structure. This fact is clearer in the following sections.
\end{example}

\begin{remark}[Physical implications of the injectivity and the zeroth law of quantum thermodynamics]
\label{remark:gibbs_injectivity_zeroth}
The zeroth law---the transitivity of thermal equilibrium---rests on the injectivity of the Gibbs map $\boldsymbol{\lambda} \mapsto \rho_{\boldsymbol{\lambda}}$.

In the quantum thermodynamic fiber bundle, this injectivity ensures that each Gibbs state $\sigma \in \mathcal{B}$ intersects the equilibrium submanifold $\mathcal{E}$ at exactly one point.
Thermal equilibrium between $A$ and $B$ therefore corresponds to both systems mapping to the same fiber $F_{\rho_A = \rho_B}$, and thus to the same unique equilibrium point $p_\sigma \in \mathcal{E}$. The shared coordinates $\boldsymbol{\lambda}$ define a universal intensive parameter field on $\mathcal{E}$, with level sets, e.g., fixed inverse temperature $\beta$, forming transitive equilibrium classes.
The zeroth law is then seen to be a consequence of the geometric structure:  thermal equilibrium is transitive because equilibrium is unique.

If injectivity fails, by e.g., redundant observables $A_2 = c A_1$, with $c \in \mathbb{R}$, multiple $\boldsymbol{\lambda}$ provide the same $\rho$, destroying uniqueness and allowing inconsistent temperature assignments, the zeroth law breaks. Thus, injectivity is not an additional assumption but the geometric expression of thermodynamic uniqueness: a single quantum state admits one and only one equilibrium thermodynamic description. This structural fact is what allows temperature (and all intensive parameters) to be unambiguously defined, shared across systems in contact, and transitive under thermal equilibration---the very essence of the quantum zeroth law.
\end{remark}

	\begin{remark}
Under the assumption that the Gibbs map $\boldsymbol{\lambda} \mapsto \rho_{\boldsymbol{\lambda}}$ is injective (as required for the zeroth law, see Remark \ref{remark:gibbs_injectivity_zeroth}), the equilibrium submanifold $\mathcal{E}$ can be identified diffeomorphically with the manifold of Gibbs states $\mathcal{B}$. This diffeomorphism is explicitly realized by the map $\Phi : \mathcal{E} \longrightarrow \mathcal{B}$ given by $\rho_{\boldsymbol{\lambda}} \circ \pi$, where $\pi$ denotes the projection $(S, \boldsymbol{a}, \boldsymbol{\lambda}) \longmapsto \boldsymbol{\lambda}$. As a consequence of this identification, we may understand the base space of the quantum thermodynamic fiber bundle to be either $\mathcal{B}$ or $\mathcal{E}$, depending on what is most suitable for the issue being discussed.

By smoothly extending the functions $\mu_i$ off $\mathcal{E}$, the state function $\Xi$ becomes well-defined everywhere, with the property that its image is precisely the submanifold $\mathcal{B}$.
In this extended construction, $\mathcal{B}$ appears as the distinguished $n$-dimensional submanifold of thermal equilibrium states inside $\mathcal{D}^\circ \supset \mathcal{B}$, and $\mathcal{E} \cong_{\Phi} \mathcal{B} \subset \mathcal{D}^\circ$. 
For every arbitrary state $\rho \in \mathcal{D}^\circ$ associate a unique effective thermal state $\Phi(p_\rho) \in \mathcal{B}$, where $p_\rho \in \mathcal{E}$ is the equilibrium point sharing the same expectation values $\boldsymbol{a}(\rho)$ and entropy $S(\rho)$. This fact enables a consistent fiber-bundle description for every full-rank density operator in $\mathcal{D}^\circ$ through its projection onto the Gibbs submanifold $\mathcal{B}$.
  \end{remark}

\begin{example}
\label{example:nonthermal_qubit}
Consider again the qubit with $\mathcal{H} \cong \mathbb{C}^2$ and observable $A = \sigma_z$. Take a non-thermal full-rank state, e.g., the diagonal density operator $\rho = \mathrm{diag} (0.9,  0.1)$, with expectation value $a = \tr(\rho \sigma_z) = 0.8$ and von Neumann entropy $S(\rho) = 0.325$.

If we choose $\mu(S, a, \lambda) = \lambda$, the unique equilibrium intersection is the thermal Gibbs state with the same $a=0.8$, i.e., $\rho_{\lambda} = \mathrm{diag}((1 + \tanh\lambda)/2, (1 - \tanh\lambda)/2)$ where $\tanh\lambda = 0.8$, so $\lambda = 1.099$. The equilibrium labels are $S_{\mathrm{eq}} = 0.420 > S(\rho)$ (maximum entropy for fixed $a$), $a_{\mathrm{eq}} = 0.8$, $\lambda_{\mathrm{eq}} = 1.099$. Points off in $F_{\rho_\lambda}$ correspond to alternative thermodynamic descriptions of this same non-thermal $\rho$.

Physically, this non-thermal $\rho$ has excess free energy relative to its passive thermal counterpart: work can be extracted unitarily to reach the reordered passive state, then further to the ground. The fiber geometry quantifies distance to equilibrium, e.g., via pseudo-metric lengths for relaxation paths projecting to thermalization under a bath.
A horizontal lift of a curve from $\rho$ to $\rho_{\lambda_{\mathrm{eq}}}$ in the base represents an optimal quasistatic driving toward the effective thermal state.
\end{example}

\section{Movement along fibers}
The fibers $F_\sigma = \Xi^{-1}(\sigma)$ of the quantum thermodynamic fiber bundle provide a natural arena for analyzing thermodynamic processes at a fixed quantum state $\sigma \in \mathcal{B}$.
A path $\gamma: [0,T] \to \mathcal{M}$ confined to $F_\sigma$ evolves the thermodynamic coordinates $(S, \boldsymbol{a}, \boldsymbol{\lambda})$ while preserving the density operator $\Xi[\gamma(t)] = \sigma$, for all $t \in [0,T]$.

\begin{definition}
A smooth path $\gamma: [0,T] \to \mathcal{M}$ is state-preserving with respect to $\sigma \in \mathcal{B}$ if $\gamma(t) \in F_\sigma$, for all $t \in [0,T]$,
i.e., $\mu_i\big(S(t), \boldsymbol{a}(t), \boldsymbol{\lambda}(t)\bigr) = c_i$, for $i=1,\dots,n$, where $\boldsymbol{c} \in \mathbb{R}^n$ satisfies $\sigma = \rho_{\boldsymbol{c}}$ via the Gibbs form.
\end{definition}
The fiber $F_\sigma$ is a smooth $(n+1)$-dimensional submanifold of $\mathcal{M}$, defined by $n$ independent level set constraints.
Assuming injectivity of $\boldsymbol{\lambda} \mapsto \rho_{\boldsymbol{\lambda}}$, the equilibrium point in $F_\sigma$ is the unique intersection $
p_\sigma := F_\sigma \cap \mathcal{B}$,
where $\mu_i = \lambda_i$, $a_i = \tr(A_i \sigma)$, and $S = -\tr(\sigma \log \sigma)$. A relaxation process to the equilibrium is a path $\gamma: [0,\infty) \to F_\sigma$ with
the initial condition $\gamma(0) \in F_\sigma \setminus \mathcal{B}$, and limit $ \lim_{t \to \infty} \gamma(t) = p_\sigma$.
The contact distribution $\ker \eta \subset T\mathcal{M}$ defines Legendre submanifolds of dimension $n$. While $F_\sigma$ is $(n+1)$-dimensional, its intersection with $\ker \eta$ gives reversible directions. 

Paths $\gamma(t)\in F_\sigma$ keep the density operator $\sigma$ fixed, so the expectation values $a_i(t)=\tr(A_i\sigma)$ are target coordinates that may differ from the actual $\tr(A_i\sigma)$ except on the equilibrium point $p_\sigma=F_\sigma\cap\mathcal{B}$.
Thermodynamically, these paths in $F_\sigma$ enable computation of quantities like work or heat. The quantum first law can be expressed through changes in the expectation values, constrained by the structure of the fiber. For a quasistatic process, the work done is related to changes in $\boldsymbol{\lambda}(t)$, while heat is associated with $S'(t)$. The contact distribution $\ker\eta\subset T\mathcal{M}$ selects the reversible directions inside $F_\sigma$. Any deviation from $\ker\eta$ generates positive entropy production, providing a geometric criterion for thermodynamic reversibility. Note that if $\gamma(t) \in \ker \eta$, the process is reversible since there is no entropy production associated with such a process.

\section{Distances}
The preceding discussion on relaxation processes within the fiber suggests that this geometric formulation can quantify the separation between quantum states, whether in equilibrium or non-equilibrium, to measure their distance from the equilibrium configuration on the Legendrian submanifold $\mathcal{E}$.

In quantum theory, the \emph{Bures-Wasserstein distance} defines a distance measure comparing quantum states, represented by density matrices, by considering the properties of positive-definite self-adjoint matrices \cite{oostrum22_bures}. This distance is particularly suited for the quantum thermodynamic state space, as it aligns with the Riemannian geometry of the space of quantum states.
\begin{definition}
\label{def:bures_wasserstein}
Let $A, B$ be positive-definite self-adjoint matrices on a Hilbert space $\mathcal{H}$. The \emph{Bures-Wasserstein distance }is defined as
\begin{equation}
        d_{\mathrm{BW}}(A,B) = \left\{\tr(A) + \tr(B) - 2 \tr \left[\left(A^{1/2}BA^{1/2} \right)^{1/2}\right] \right\}^{1/2}.
    \end{equation}
    \end{definition}
Note that, if $\tr(A) = \tr(B) =1$, $ d_{\mathrm{BW}}(A,B)$ simplifies to 
    \begin{equation}
        d_{\mathrm{BW}}(A,B) = \left\{2 - 2 \tr \left[\left(A^{1/2}BA^{1/2} \right)^{1/2}\right] \right\}^{1/2},
    \end{equation}
where we can define $F(A,B) := \tr \left[ \sqrt{A^{1/2} B A^{1/2}} \right] $ the \emph{fidelity} between matrices $A$ and $B$. The Bures-Wasserstein distance is a Riemannian distance, inducing a Riemannian metric on the space of quantum states.

\begin{definition}
For a tangent vector $X \in T_{\rho_{\boldsymbol{\lambda}_0}} \mathcal{B}$ at a Gibbs state $\rho_{\boldsymbol{\lambda}_0} \in \mathcal{B}$, the directional derivative of the state map is given by
\begin{equation}
    X(\rho_{\boldsymbol{\lambda}}) = \left. \frac{d}{dt} \right|_{t=0} (\rho_{\boldsymbol{\lambda}} \circ \gamma)(t),
\end{equation}
where $\gamma : (-\varepsilon, \varepsilon) \longrightarrow \mathcal{B}$ is a smooth curve such that $\gamma(0) = \rho_{\boldsymbol{\lambda}_0}$ and $\gamma'(0) = X$.
\end{definition}

\begin{remark}
The above definition can be shown to be independent of the specific curve chosen \cite{caja12_variedades}.
\end{remark}

\begin{proposition}
\label{prop:bures_metric}
Let $\rho_{\boldsymbol{\lambda}} \in \mathcal{B}$ be a Gibbs state, and let $X, Y \in T_{\rho_{\boldsymbol{\lambda}}} \mathcal{B}$ be tangent vectors. The Bures-Wasserstein metric is given by
\begin{equation}
    g_{\mathrm{BW}}(X, Y) = \mathrm{Re} \left[ \tr \left( L_X \rho_{\boldsymbol{\lambda}} L_Y \right) \right],
\end{equation}
where $L_X$ is the symmetric logarithmic derivative satisfying $\rho_{\boldsymbol{\lambda}} L_X + L_X \rho_{\boldsymbol{\lambda}} = 2 X(\rho_{\boldsymbol{\lambda}})$, and similarly for $L_Y$. This defines a Riemannian metric on $\mathcal{B}$.
\end{proposition}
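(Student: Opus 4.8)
The plan is to recover $g_{\mathrm{BW}}$ as the second-order term in the Taylor expansion of the squared distance $d_{\mathrm{BW}}^2$ along curves through $\rho_{\boldsymbol{\lambda}}$, and then to identify that term with the symmetric-logarithmic-derivative expression. First I would fix a smooth curve $\gamma:(-\varepsilon,\varepsilon)\to\mathcal{B}$ with $\gamma(0)=\rho_{\boldsymbol{\lambda}}$ and $\gamma'(0)=X(\rho_{\boldsymbol{\lambda}})=:\delta\rho$, the pushed-forward tangent vector, which is traceless and self-adjoint. Because all Gibbs states lie in the relative interior $\mathcal{D}^\circ$, $\rho_{\boldsymbol{\lambda}}$ is positive definite, and since $\tr\gamma(t)=1$ I may use the reduced form $d_{\mathrm{BW}}^2(\rho_{\boldsymbol{\lambda}},\gamma(t))=2-2F(\rho_{\boldsymbol{\lambda}},\gamma(t))$. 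The induced Riemannian norm is then read off from $d_{\mathrm{BW}}^2(\rho_{\boldsymbol{\lambda}},\gamma(t))=g_{\mathrm{BW}}(X,X)\,t^2+o(t^2)$, i.e. $g_{\mathrm{BW}}(X,X)=\tfrac12\,\frac{d^2}{dt^2}\big|_{t=0}\,d_{\mathrm{BW}}^2(\rho_{\boldsymbol{\lambda}},\gamma(t))$, so the whole problem reduces to the second-order expansion of the fidelity $F(\rho_{\boldsymbol{\lambda}},\gamma(t))=\tr[(\rho_{\boldsymbol{\lambda}}^{1/2}\gamma(t)\rho_{\boldsymbol{\lambda}}^{1/2})^{1/2}]$.

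Next I would introduce the symmetric logarithmic derivative. Since $\rho_{\boldsymbol{\lambda}}>0$, the Lyapunov (Sylvester) equation $\rho_{\boldsymbol{\lambda}}L_X+L_X\rho_{\boldsymbol{\lambda}}=2\,\delta\rho$ has a unique self-adjoint solution $L_X$, and in an eigenbasis $\{|k\rangle\}$ of $\rho_{\boldsymbol{\lambda}}$ with eigenvalues $p_k>0$ it reads $\langle j|L_X|k\rangle=2\langle j|\delta\rho|k\rangle/(p_j+p_k)$; the map $X\mapsto L_X$ is manifestly linear. The key computational step is to expand the matrix square root inside the fidelity to second order in $t$: the clean route is to diagonalize $\rho_{\boldsymbol{\lambda}}$ and treat $\gamma(t)=\rho_{\boldsymbol{\lambda}}+t\,\delta\rho+\tfrac{t^2}{2}\,\gamma''(0)+o(t^2)$ as a perturbation, using the standard fact that the Fr\'echet derivative of $M\mapsto M^{1/2}$ is itself the solution of a Sylvester equation. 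Carrying this expansion to order $t^2$ and tracing, the first-order term vanishes and the second-order coefficient collapses, after symmetrizing the double sum over $j,k$, to $\sum_{j,k}|\langle j|\delta\rho|k\rangle|^2/(p_j+p_k)$ up to a fixed normalization. Comparing with the eigenbasis evaluation $\tr(\rho_{\boldsymbol{\lambda}}L_X^2)=\sum_{j,k}p_j|\langle j|L_X|k\rangle|^2$, which symmetrizes to the same expression, identifies $g_{\mathrm{BW}}(X,X)=\mathrm{Re}[\tr(L_X\rho_{\boldsymbol{\lambda}}L_X)]$, using the cyclicity identity $\tr(L_X\rho_{\boldsymbol{\lambda}}L_X)=\tr(\rho_{\boldsymbol{\lambda}}L_X^2)$, which is real.

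To obtain the bilinear form I would polarize, setting $g_{\mathrm{BW}}(X,Y)=\tfrac12[g_{\mathrm{BW}}(X+Y,X+Y)-g_{\mathrm{BW}}(X,X)-g_{\mathrm{BW}}(Y,Y)]$; linearity of $X\mapsto L_X$ guarantees this is well defined and equals $\mathrm{Re}[\tr(L_X\rho_{\boldsymbol{\lambda}}L_Y)]$. Finally I would verify the Riemannian-metric axioms: symmetry follows from cyclicity of the trace together with taking the real part; bilinearity follows from linearity of $X\mapsto L_X$; and positive definiteness follows because $\tr(\rho_{\boldsymbol{\lambda}}L_X^2)\ge 0$ with $\rho_{\boldsymbol{\lambda}}>0$ and $L_X$ self-adjoint, vanishing iff $L_X=0$, i.e. iff $\delta\rho=0$, i.e. iff $X=0$ --- here positivity of $\rho_{\boldsymbol{\lambda}}$ on $\mathcal{D}^\circ$ is essential. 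Smoothness of $g_{\mathrm{BW}}$ over $\mathcal{B}$ follows because the unique solution of the Lyapunov equation depends smoothly on a positive-definite $\rho_{\boldsymbol{\lambda}}$.

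I expect the main obstacle to be the second-order expansion of the matrix square root, because $\rho_{\boldsymbol{\lambda}}$ and $\gamma(t)$ do not commute, so a naive scalar-style Taylor expansion of $\sqrt{\cdot}$ is invalid; one must track the non-commuting corrections carefully, most transparently by working eigenbasis-wise and invoking the Sylvester-equation form of the derivative of the square root, and then check that the contributions of the a priori unknown second derivative $\gamma''(0)$ drop out of the final trace, so that $g_{\mathrm{BW}}$ depends only on $X$ through $\delta\rho$.
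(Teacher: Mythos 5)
Your proposal is correct and takes essentially the same approach as the paper, whose own ``proof'' merely defers to the cited reference \cite{oostrum22_bures} and sketches exactly your strategy---obtain $g_{\mathrm{BW}}$ as the infinitesimal (second-order) form of the Bures--Wasserstein distance, with $L_X$ the unique self-adjoint solution of the Lyapunov equation---so you are simply supplying the details the paper omits, including the correct identification of the non-commutative square-root expansion as the delicate step and the check that $\gamma''(0)$ drops out by the trace constraint. The only caveat is normalization: with the convention $\rho_{\boldsymbol{\lambda}} L_X + L_X \rho_{\boldsymbol{\lambda}} = 2 X(\rho_{\boldsymbol{\lambda}})$ the second-order coefficient of $d_{\mathrm{BW}}^2$ equals $\tfrac{1}{4}\tr(\rho_{\boldsymbol{\lambda}} L_X^2)$ rather than $\tr(\rho_{\boldsymbol{\lambda}} L_X^2)$, a constant factor you rightly flag as ``a fixed normalization'' and which is a convention baked into the proposition's statement rather than a flaw in your argument.
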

\begin{proof}
The proof is detailed in \cite{oostrum22_bures}. In brief, the metric $g_{\mathrm{BW}}$ arises from the infinitesimal form of the Bures-Wasserstein distance, where $L_X$ solves the Lyapunov equation for the perturbation $X(\rho_{\boldsymbol{\lambda}})$. The real part ensures symmetry, and the trace preserves positive-definiteness, establishing $g_{\mathrm{BW}}$ as a Riemannian metric.
\end{proof}
In local coordinates, the metric components are
\begin{equation}
g_{ij}(\boldsymbol{\lambda}) = \textrm{Re} \left[ \tr \left( \rho_{\boldsymbol{\lambda}} L_i L_j \right) \right], \ L_i = \frac{\partial \ln (\rho_{\boldsymbol{\lambda}})}{\partial \lambda_i},
\end{equation}
with line element
\begin{equation}
    ds^2 = \sum_{i,j=1}^n g_{ij} d\lambda^i d\lambda^j.
\end{equation}
\begin{corollary}
\label{cor:riemannian}
$(\mathcal{B}, g_{\mathrm{BW}})$ is a Riemannian manifold.
\end{corollary}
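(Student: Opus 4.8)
The plan is to assemble the corollary directly from the structural facts already in place: by definition a Riemannian manifold is a smooth manifold equipped with a smooth, symmetric, positive-definite $(0,2)$-tensor field, so I only need to certify each of these ingredients for the pair $(\mathcal{B}, g_{\mathrm{BW}})$. The smooth manifold structure of $\mathcal{B}$ is already guaranteed under the standing assumptions that $\{A_1,\dots,A_n\}$ are linearly independent and $\Xi$ is a submersion, so that $\mathcal{B}\subseteq\mathcal{D}^\circ$ is an $n$-dimensional embedded submanifold, and Proposition~\ref{prop:bures_metric} already provides the tensor $g_{\mathrm{BW}}$ together with the assertion that it is a Riemannian metric. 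The corollary is thus essentially a matter of unpacking definitions; the substance lies in making the three metric axioms explicit.

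First I would record symmetry. Writing $g_{\mathrm{BW}}(X,Y)=\mathrm{Re}[\tr(L_X\rho_{\boldsymbol{\lambda}}L_Y)]$ with $L_X,L_Y$ self-adjoint, as solutions of the Lyapunov equation with self-adjoint data, complex conjugation of the trace together with the cyclic property gives $\overline{\tr(L_X\rho_{\boldsymbol{\lambda}}L_Y)}=\tr(L_Y\rho_{\boldsymbol{\lambda}}L_X)$, so taking real parts yields $g_{\mathrm{BW}}(X,Y)=g_{\mathrm{BW}}(Y,X)$. Bilinearity is immediate from the linearity of the assignment $X\mapsto L_X$, since the Lyapunov equation is linear in its right-hand side $X(\rho_{\boldsymbol{\lambda}})$, combined with the linearity of the trace.

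The load-bearing step, and the one I would treat most carefully, is positive-definiteness, because this is exactly where the restriction to the interior $\mathcal{D}^\circ$ becomes essential. For $X\neq 0$ I would rewrite $g_{\mathrm{BW}}(X,X)=\tr(L_X\rho_{\boldsymbol{\lambda}}L_X)=\tr\!\bigl((\rho_{\boldsymbol{\lambda}}^{1/2}L_X)^\dagger(\rho_{\boldsymbol{\lambda}}^{1/2}L_X)\bigr)=\|\rho_{\boldsymbol{\lambda}}^{1/2}L_X\|_{\mathrm{HS}}^2\geq 0$, which is automatically real and non-negative. Since $\rho_{\boldsymbol{\lambda}}\in\mathcal{D}^\circ$ is positive definite, $\rho_{\boldsymbol{\lambda}}^{1/2}$ is invertible, so this quantity vanishes only if $L_X=0$, hence only if $X(\rho_{\boldsymbol{\lambda}})=0$ through the Lyapunov equation, and finally only if $X=0$ because the parametrization of $\mathcal{B}$ is an immersion. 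This is precisely the point that would fail on $\partial\mathcal{D}$, where $\rho_{\boldsymbol{\lambda}}^{1/2}$ acquires a kernel and the form may degenerate, consistent with the earlier restriction of the analysis to $\mathcal{D}^\circ$.

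Finally, for smoothness I would argue that the components $g_{ij}(\boldsymbol{\lambda})=\mathrm{Re}[\tr(\rho_{\boldsymbol{\lambda}}L_iL_j)]$ with $L_i=\partial_{\lambda_i}\ln\rho_{\boldsymbol{\lambda}}$ depend smoothly on $\boldsymbol{\lambda}$: the Gibbs state $\rho_{\boldsymbol{\lambda}}$ is a smooth, indeed analytic, function of $\boldsymbol{\lambda}$ through the matrix exponential, the matrix logarithm is smooth on the open cone of positive-definite matrices, and differentiation and the trace preserve smoothness. Collecting symmetry, positive-definiteness and smoothness establishes that $g_{\mathrm{BW}}$ is a genuine Riemannian metric, whence $(\mathcal{B},g_{\mathrm{BW}})$ is a Riemannian manifold. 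The only real obstacle is the definiteness argument, which is structural rather than computational: it hinges entirely on keeping $\rho_{\boldsymbol{\lambda}}$ strictly positive.
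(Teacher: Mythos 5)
Your proposal is correct and follows essentially the same route as the paper: the corollary is an immediate consequence of Proposition~\ref{prop:bures_metric}, which already asserts that $g_{\mathrm{BW}}$ is a Riemannian metric on the smooth manifold $\mathcal{B}$. The only difference is that you explicitly verify symmetry, positive-definiteness (correctly isolating the role of $\rho_{\boldsymbol{\lambda}}>0$ on $\mathcal{D}^\circ$ and the immersion hypothesis) and smoothness, whereas the paper delegates these checks to the cited reference; your verification is sound and consistent with that reference.
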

The Riemannian structure of $(\mathcal{B}, g_{\mathrm{BW}})$ is further supported by the following fundamental result.

\begin{theorem}[Hopf-Rinow]
\label{thm:hopf_rinow}
For a connected Riemannian manifold $(M, g)$, the following are equivalent:
\begin{enumerate}
\item $M$ is complete as a metric space.
\item $M$ is geodesically complete.
\item A subset in $M$ is compact if and only if it is closed and bounded.
\end{enumerate}
Additionally, if any of the above holds, then any two points $p, q \in M$ can be joined by a minimizing geodesic.
\end{theorem}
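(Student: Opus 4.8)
The plan is to prove the theorem as a cycle of implications $(2)\Rightarrow(3)\Rightarrow(1)\Rightarrow(2)$, after first isolating the technical core as a standalone lemma, and then to deduce the final statement about minimizing geodesics as a corollary of that lemma. The essential geometric tool is the exponential map $\exp_p: T_pM \to M$, together with the standard fact (from existence and uniqueness of geodesics and the inverse function theorem) that every point has a \emph{normal neighborhood} on which $\exp_p$ is a diffeomorphism and small geodesic balls realize the Riemannian distance. Throughout I parametrize geodesics by arc length and use that the distance $d$ satisfies $d(\gamma(s),\gamma(t)) \le |t-s|$ for any unit-speed curve.

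The hard part, and the heart of the argument, is the following lemma: if $\exp_p$ is defined on all of $T_pM$ for a single point $p$, then every $q \in M$ is joined to $p$ by a minimizing geodesic. To prove it, set $r = d(p,q)$ and choose a small geodesic sphere $S_\delta(p)$ of radius $\delta < r$ inside a normal neighborhood of $p$. Since $S_\delta(p)$ is compact, the continuous function $x \mapsto d(x,q)$ attains a minimum at some $x_0 = \exp_p(\delta v)$ with $|v|=1$, and a first-variation argument shows $d(p,q) = \delta + d(x_0,q)$. Consider the geodesic $\gamma(t) = \exp_p(tv)$, defined for all $t$ by hypothesis, and the set $A = \{ t \in [0,r] : d(\gamma(t),q) = r - t \}$. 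I would show $A$ is closed (it is the zero set of a continuous function) and nonempty (it contains $\delta$), then run a continuation argument: if $t_0 = \sup A < r$, repeating the sphere construction around $\gamma(t_0)$ produces a minimizing segment that, by the triangle inequality and concatenation, must fit onto $\gamma$ and force $t_0 + \delta' \in A$, contradicting maximality. Hence $r \in A$, so $\gamma(r) = q$ and $\gamma|_{[0,r]}$ is a unit-speed minimizing geodesic. The delicate point is the concatenation step, where one argues that the broken path from $p$ through $\gamma(t_0)$ to the new minimizing point has length exactly equal to the distance and is therefore an unbroken (smooth) geodesic with no corner.

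With the lemma in hand the cycle closes quickly. Assuming $(2)$, geodesic completeness makes the lemma applicable at every point, whence every closed metric ball $\overline{B}_R(p) = \exp_p(\overline{B}_R(0))$ is the continuous image of a compact set and hence compact; this gives that closed and bounded sets are compact, yielding $(3)$ (the converse, compact implies closed and bounded, is automatic). For $(3)\Rightarrow(1)$, any Cauchy sequence is bounded, so its closure is compact, and a Cauchy sequence with a convergent subsequence converges; thus $M$ is complete as a metric space. For $(1)\Rightarrow(2)$, suppose a unit-speed geodesic $\gamma$ is defined on $[0,b)$ with $b<\infty$; taking $t_n \to b$, the estimate $d(\gamma(t_n),\gamma(t_m)) \le |t_n - t_m|$ shows $\{\gamma(t_n)\}$ is Cauchy, so it converges to some $q$, and the uniform size of normal neighborhoods near $q$ lets me extend $\gamma$ beyond $b$, contradicting maximality; hence every geodesic extends to all of $\mathbb{R}$. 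Finally, once any of the equivalent conditions holds, $(2)$ holds in particular, so the lemma applies at any point and delivers a minimizing geodesic between any two points $p,q \in M$, which is the concluding assertion.
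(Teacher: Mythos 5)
The paper states the Hopf--Rinow theorem without proof: it is invoked as a classical result of Riemannian geometry (of the kind found in the cited references such as do Carmo), so there is no in-paper argument to compare yours against. Your sketch is the standard textbook proof and its architecture is sound: isolating the key lemma (completeness of $\exp_p$ at a single point implies every $q$ is joined to $p$ by a minimizing geodesic) and then closing the cycle $(2)\Rightarrow(3)\Rightarrow(1)\Rightarrow(2)$ is exactly how the result is usually established, and the final assertion does follow from the lemma once any of the equivalent conditions holds. Two small points of precision. First, the identity $d(p,q)=\delta+d(x_0,q)$ is not really a first-variation argument: it follows from the observation that every curve from $p$ to $q$ must cross the geodesic sphere $S_\delta(p)$, so $d(p,q)\ge\min_{x\in S_\delta(p)}\bigl(d(p,x)+d(x,q)\bigr)=\delta+d(x_0,q)$, with the reverse inequality from the triangle inequality; the fact that $d(p,x)=\delta$ on the sphere is the Gauss lemma, which you implicitly invoke when you say small geodesic balls realize the distance. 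Second, the concatenation step you flag as delicate genuinely requires the statement that a piecewise-geodesic curve whose length equals the distance between its endpoints is an unbroken geodesic; this again rests on the Gauss lemma (a curve realizing the distance inside a normal ball must be a radial geodesic), and since you acknowledge the issue rather than eliding it, the outline is acceptable. The remaining implications ($\overline{B}_R(p)\subseteq\exp_p(\overline{B}_R(0))$ compact, Cauchy sequences with convergent subsequences converge, and the extension of a geodesic past a finite endpoint using a uniformly totally normal neighborhood) are all handled correctly.
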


\section{Geodesics and quasistatic processes}

The Bures-Wasserstein metric $g_{\mathrm{BW}}$ endows the submanifold of Gibbs states $\mathcal{B}$ with a Riemannian structure, enabling the geometric framework for analyzing quasistatic thermodynamic processes.

\begin{definition}
Let $(\mathcal{B}, g_{\mathrm{BW}})$ be the Riemannian manifold of Gibbs states.
 Let $\boldsymbol{\lambda}(t) = (\lambda_1(t), \dots, \\\lambda_n(t))$ be a curve in coordinate space. The corresponding path in $\mathcal{B}$ is then $\gamma(t) = \rho_{\boldsymbol{\lambda}(t)}: [0,T] \to \mathcal{B}$. The \emph{thermodynamic length} $L(\gamma)$ is
\begin{equation}
L(\gamma)= \int_0^T \sqrt{g_{\mathrm{BW}}(\gamma'(t), \gamma'(t))}  dt
          = \int_0^T \sqrt{\sum_{i,j=1}^n g_{ij}(\boldsymbol{\lambda}(t)) \lambda_i'(t) \lambda_j'(t)}  dt,
\end{equation}
where $\gamma'(t) = \sum_{i=1}^n \lambda_i'(t) \partial_{
\lambda^i} \big|_{\gamma(t)} \in T_{\gamma(t)}\mathcal{B}$ is the pushforward of the coordinate velocity, and $\lambda_i'(t) = d\lambda_i(t)/dt$, for $i=1,\ldots,n$.
\end{definition}

The Bures-Wasserstein distance quantifies the distinguishability between Gibbs states $\rho_{\boldsymbol{\lambda}}$, encoding quantum fluctuations in response to changes in control parameters. Paths $\gamma(t)$ in $\mathcal{B}$ correspond to sequences of instantaneous equilibrium states.

A quasistatic process varies $\lambda(t)$ sufficiently slowly so that the system remains in the Gibbs state $\rho_{\boldsymbol{\lambda}}$ at each instant. Such processes are reversible, producing zero entropy production. 
Minimizing geodesics on $(\mathcal{B}, g_{\mathrm{BW}})$, which assuming completeness of the Riemannian manifold $(\mathcal{B}, g_{\mathrm{BW}})$ exist by virtue of the Hopf-Rinow theorem, thus represent optimal quasistatic transformations, evolving the system through Gibbs states while minimizing the thermodynamic length $L(\gamma)$. This length quantifies cumulative state change and bounds the minimal work required to drive the transformation.

\begin{definition}
    The \emph{entropy production rate} along a finite-speed path $\gamma: [0,T] \to \mathcal{B}$ is
\begin{equation}  
\varsigma_\gamma (t) = \kappa  g_{\mathrm{BW}}(\gamma'(t), \gamma'(t))
               = \kappa \sum_{i,j=1}^n g_{ij}(\boldsymbol{\lambda}(t)
               )\lambda_i'(t) \lambda_j'(t),
\end{equation}
where $\kappa$ is a system-dependent constant ensuring $\varsigma(t)$ has units of entropy per time. Total entropy production along $\gamma(t)$ is
\begin{equation}
\Sigma (t)= \int_0^T \varsigma(t)  dt =  \int_0^T \kappa g_{\mathrm{BW}}(\gamma'(t), \gamma'(t))  dt.
\end{equation}
\end{definition}

For finite-speed processes, $\varsigma_\gamma (t) > 0$ due to non-zero velocity $\gamma'(t)$, indicating irreversibility \cite{tejero2025entropy}. In the quasistatic limit ($\lambda_i' \to 0$), $\varsigma_\gamma (t) \to 0$ and $\Sigma(t)  \to 0$. Minimizing geodesics minimize both $L(\gamma)$ and $\Sigma$, optimizing the path to reduce the entropy production, thus achieving maximal reversibility.

The term $g_{\mathrm{BW}}(\gamma', \gamma')$ measures the instantaneous rate of state evolution, driving irreversibility via excitations or non-equilibrium effects.
Minimizing geodesics mitigate these by following the smoothest path in state space, thereby allowing the system to remain in equilibrium at each step, analogous to adiabatic transformations.
The proportionality $\varsigma_\gamma \propto g_{\mathrm{BW}}(\gamma', \gamma')$ reflects that faster transformations disrupt equilibrium more severely, increasing $\Sigma$.
The following results formalize this connection between geodesics and quasistatic processes.

\begin{proposition}\label{prop:quasistatic_geodesics}
Let $(\mathcal{B}, g_{\mathrm{BW}})$ be the Riemannian manifold of equilibrium states. Optimal quasistatic quantum thermodynamic processes are minimizing geodesics in $(\mathcal{B}, g_{\mathrm{BW}})$.
\end{proposition}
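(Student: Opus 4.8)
The plan is to identify ``optimal quasistatic process'' with the solution of a well-posed variational problem and then invoke the standard Riemannian equivalence between energy-minimizing curves and constant-speed minimizing geodesics. First I would fix the endpoints $\rho_0 = \gamma(0)$ and $\rho_T = \gamma(T)$ in $\mathcal{B}$ together with the parametrization interval $[0,T]$, and recall that a quasistatic process is precisely a smooth path $\gamma(t) = \rho_{\boldsymbol{\lambda}(t)}$ through instantaneous Gibbs states. Optimality is encoded by demanding that such a path minimize the total entropy production $\Sigma_\gamma = \kappa \int_0^T g_{\mathrm{BW}}(\gamma', \gamma')\, dt$, which is the only dissipative cost assigned to the process in the finite-speed regime.

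The central observation is that $\Sigma_\gamma$ is, up to the positive constant $\kappa$, the Riemannian energy functional $E(\gamma) = \int_0^T g_{\mathrm{BW}}(\gamma', \gamma')\, dt$. The next step is to compute the first variation of $E$ with fixed endpoints: the Euler--Lagrange equations yield the geodesic equation $\nabla_{\gamma'}\gamma' = 0$ for the Levi-Civita connection of $g_{\mathrm{BW}}$, so every critical point of $\Sigma_\gamma$ is an affinely parametrized geodesic, automatically of constant speed.

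To close the loop with \emph{minimizing} geodesics (distance realizers), I would invoke the Cauchy--Schwarz inequality on $[0,T]$,
\begin{equation}
L(\gamma)^2 = \left( \int_0^T \sqrt{g_{\mathrm{BW}}(\gamma',\gamma')}\, dt \right)^2 \le T \int_0^T g_{\mathrm{BW}}(\gamma',\gamma')\, dt = T\, E(\gamma),
\end{equation}
with equality if and only if $\sqrt{g_{\mathrm{BW}}(\gamma',\gamma')}$ is constant in $t$. Since any curve can be reparametrized to constant speed without changing $L$, the infimum of $E$ over fixed endpoints equals $L_{\min}^2 / T$, attained exactly by constant-speed reparametrizations of length-minimizing geodesics. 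Existence of such a minimizer follows from Hopf--Rinow (Theorem~\ref{thm:hopf_rinow}) under the assumed completeness of $(\mathcal{B}, g_{\mathrm{BW}})$. Combining these facts shows that minimizing $\Sigma_\gamma$ and minimizing the thermodynamic length $L(\gamma)$ select the same curves, namely the minimizing geodesics, which are therefore the optimal quasistatic processes.

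The main obstacle I anticipate is conceptual rather than computational: pinning down what ``optimal'' means operationally so that the variational problem is well posed. One must argue that minimizing entropy production (energy) and minimizing thermodynamic length (distance) are genuinely the same optimization up to reparametrization, which is exactly the content of the Cauchy--Schwarz step; the subtlety is that $L$ is reparametrization-invariant while $E$ is not, so the statement implicitly fixes the constant-speed (hence geodesic) parametrization. The Riemannian variational calculus and the existence result are otherwise routine given Corollary~\ref{cor:riemannian} and the earlier propositions.
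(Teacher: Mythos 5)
Your proposal is correct and follows essentially the same route as the paper: identify optimality with minimal entropy production, relate this to the thermodynamic length, and conclude that the optimal paths are the minimizing geodesics of $(\mathcal{B}, g_{\mathrm{BW}})$. The paper's own proof is a brief assertion of this chain of identifications, whereas you supply the missing technical content --- the first variation of the energy functional and the Cauchy--Schwarz comparison $L(\gamma)^2 \le T\,E(\gamma)$ showing that minimizing $\Sigma_\gamma$ and minimizing $L(\gamma)$ select the same curves up to constant-speed reparametrization --- so your version is a rigorous completion of the argument the paper only sketches.
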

\begin{proof}
Quasistatic processes evolve through Gibbs states $\rho_{\boldsymbol{\lambda}}$ with minimal dissipation. The Bures-Wasserstein metric quantifies state distinguishability, and minimizing geodesics minimize thermodynamic length $L(\gamma)$, corresponding to paths of least entropy production $\Sigma$, as required for optimal quasistatic transformations.
\end{proof}

\begin{theorem}[Geodesic connectivity]\label{thm:geodesic_connectivity}
Let $(\mathcal{B}, g_{\mathrm{BW}})$ be a complete Riemannian manifold. Then any two equilibrium states $\rho_{\boldsymbol{\lambda}_1}, \rho_{\boldsymbol{\lambda}_2} \in \mathcal{B}$ are connected by a minimizing geodesic, representing an optimal quasistatic transformation.
\end{theorem}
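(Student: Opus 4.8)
The plan is to derive this statement essentially as a corollary of the Hopf--Rinow theorem (Theorem~\ref{thm:hopf_rinow}), whose hypotheses we must verify for the pair $(\mathcal{B}, g_{\mathrm{BW}})$. By Corollary~\ref{cor:riemannian} we already know that $(\mathcal{B}, g_{\mathrm{BW}})$ is a Riemannian manifold, so the only structural ingredients that remain to be checked before invoking Hopf--Rinow are \emph{connectedness} and \emph{completeness}. Completeness is granted by hypothesis. For connectedness, I would observe that $\mathcal{B}$ is by construction the image of the (connected) parameter domain $U \subseteq \mathbb{R}^n$ under the continuous Gibbs map $\boldsymbol{\lambda} \mapsto \rho_{\boldsymbol{\lambda}}$; since the continuous image of a connected set is connected, $\mathcal{B}$ is connected.

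With connectedness and completeness in hand, I would apply the final clause of Theorem~\ref{thm:hopf_rinow}: in a connected, complete Riemannian manifold, any two points are joined by a minimizing geodesic. Taking $p = \rho_{\boldsymbol{\lambda}_1}$ and $q = \rho_{\boldsymbol{\lambda}_2}$ then yields a geodesic $\gamma$ with $\gamma(0) = \rho_{\boldsymbol{\lambda}_1}$, $\gamma(1) = \rho_{\boldsymbol{\lambda}_2}$ whose length equals the Bures--Wasserstein distance $d_{\mathrm{BW}}(\rho_{\boldsymbol{\lambda}_1}, \rho_{\boldsymbol{\lambda}_2})$, so that $L(\gamma)$ is minimal among all admissible paths.

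To complete the thermodynamic content of the statement, I would invoke Proposition~\ref{prop:quasistatic_geodesics}, which identifies minimizing geodesics of $(\mathcal{B}, g_{\mathrm{BW}})$ with optimal quasistatic transformations: the geodesic $\gamma$ evolves the system through instantaneous Gibbs states while minimizing the thermodynamic length and, correspondingly, the total entropy production $\Sigma_\gamma$. This furnishes the physical interpretation asserted in the theorem, namely that the geodesic realizes the reversible, least-dissipation path connecting the two equilibrium states.

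The genuine mathematical substance here is thin --- the result is a direct specialization of Hopf--Rinow --- so the main obstacle is not analytical but lies in justifying that the hypotheses of Hopf--Rinow actually hold for $\mathcal{B}$. In particular, completeness is a nontrivial assumption rather than an automatic fact: the discussion of the third law and geodesic incompleteness toward rank-deficient boundary states (Sec.~\ref{sec:boundary}) shows that $(\mathcal{B}, g_{\mathrm{BW}})$ need \emph{not} be complete in general. Thus the delicate point to emphasize is that the theorem is conditional on the stated completeness hypothesis, and that where completeness fails, the conclusion --- existence of a \emph{minimizing} geodesic between arbitrary equilibrium states --- may fail as well.
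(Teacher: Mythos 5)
Your proposal is correct and follows essentially the same route as the paper: a direct application of the Hopf--Rinow theorem (Theorem~\ref{thm:hopf_rinow}) under the stated completeness hypothesis, together with the identification of minimizing geodesics with optimal quasistatic processes. Your additional verification of connectedness and your caveat that completeness can fail near $\partial\mathcal{D}$ are welcome refinements, but do not change the argument.
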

\begin{proof}
This result is a direct consequence of the Hopf-Rinow theorem.
Since $(\mathcal{B}, g_{\mathrm{BW}})$ is a Riemannian manifold, by the Hopf-Rinow theorem, completeness of $(\mathcal{B}, g_{\mathrm{BW}})$ implies that any two points can be joined by a minimizing geodesic.
\end{proof}
The Hopf-Rinow theorem ensures that $\mathcal{B}$ is a globally accessible manifold for thermodynamic transformations, with minimizing geodesics providing the optimal paths for quasistatic processes, minimizing both thermodynamic length and entropy production.

There might be situations where the curve $\gamma$ in $\mathcal{B}$ does not necessarily represent a geodesic curve.
Such curves, deviating from these shortest paths, describe quasistatic processes with redundant or oscillatory changes in $\boldsymbol{\lambda}$, such as an isothermal process with inefficient parameter adjustments, yet still preserving the Gibbs state. These non-optimal paths incur higher thermodynamic costs, reflecting practical constraints or suboptimal control in quantum protocols, like those in thermal machines or quantum control. For example, an isothermal transformation maintaining $\rho_{\beta}(t)$ at all times is quasistatic (always in equilibrium), but only the geodesics can minimize dissipation.

\section{Boundary effects}\label{sec:boundary}
In this section, we analyze the consequences of the boundary effects in the manifold of quantum states $\mathcal{D}$. While the core construction uses the finite-dimensional manifold $\mathcal{B}$ of Gibbs states generated by the chosen observables, the framework extends naturally to the full interior $\mathcal{D}^\circ$. 
Only in this section, we adopt $\mathcal{D}$ instead of the base $\mathcal{B}$ in order to account for the difference between the interior of the manifold and its boundary.

\begin{proposition}
Let $\mathcal{D}^\circ$ be the $(m^2 - 1)$-dimensional smooth manifold of full-rank density operators on a finite-dimensional Hilbert space $\mathcal{H}$ with $\dim \mathcal{H} = m$, embedded as the interior of the compact convex set $\mathcal{D}$ of all density operators. Then:
\begin{enumerate}
    \item For a state $\rho \in \mathcal{D}^\circ $, with eigenvalues $p_i$ with $i = 1, \dots, m$ (not necessarily all distinct), the entropy extends continuously from $\mathcal{D}^\circ$ to the boundary $\partial \mathcal{D}$.

    \item For any smooth curve $\gamma: [0, 1) \to \mathcal{D}^\circ$ such that $\lim_{t \to 1^-} \gamma(t) = \bar{\rho} \in \partial \mathcal{D}$ with $\mathrm{rank}(\bar{\rho}) = k < m$, the entropy satisfies
\begin{equation}
\lim_{t \to 1^-} S(\gamma(t)) = S(\bar{\rho}),
\end{equation}
where  $S(\bar{\rho}) = - \sum_{i=1}^k \bar{p}_i\ln \bar{p}_i$, and $\bar{p}_i$ with $i = 1, \dots, k$ (not necessarily all distinct) are the positive eigenvalues of $\bar{\rho}$, where $k$ is the rank of $\bar{\rho}$. 
In particular, if $\bar{\rho}$ is a pure state $(k = 1)$, then $S(\gamma(t)) \to 0$, and if $\bar{\rho}$ is maximally mixed on its support, i.e., $\bar{p}_i = 1/k$ for $i=1,\ldots,k$, then $S(\gamma(t)) \to \ln k$.
\end{enumerate}
\end{proposition}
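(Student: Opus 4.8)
The plan is to reduce both parts to the elementary observation that the single-variable function $g(x) = -x \ln x$, with the convention $g(0) = 0$, is continuous on the compact interval $[0,1]$, since $\lim_{x \to 0^+} x \ln x = 0$. Via the spectral theorem and the functional calculus, the von Neumann entropy is $S(\rho) = -\tr(\rho \ln \rho) = \sum_{i=1}^m g(p_i)$, a symmetric function of the eigenvalues $p_i$ of $\rho$. Because every $\rho \in \mathcal{D}$ has spectrum contained in $[0,1]$, the entropy is a sum of $m$ finite, bounded terms even on $\partial\mathcal{D}$, where some $p_i$ vanish.

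For Part 1, I would establish that $\rho \mapsto S(\rho)$ is continuous on all of $\mathcal{D}$, not merely on $\mathcal{D}^\circ$. The cleanest route avoids any appeal to smoothness of individual eigenvalue branches, which can fail at spectral crossings. First I would invoke the Weierstrass approximation theorem to obtain polynomials $q_N \to g$ uniformly on $[0,1]$. For each such polynomial the map $\rho \mapsto \tr(q_N(\rho))$ is a polynomial in the matrix entries, hence continuous on $\mathcal{D}$. The uniform bound $|\tr(g(\rho)) - \tr(q_N(\rho))| \le m \, \|g - q_N\|_{\infty,[0,1]}$, valid for every $\rho \in \mathcal{D}$ since each contributes $m$ eigenvalues in $[0,1]$, shows that $\tr(q_N(\rho)) \to S(\rho)$ uniformly on $\mathcal{D}$. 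As a uniform limit of continuous functions is continuous, $S$ extends continuously from $\mathcal{D}^\circ$ to $\partial\mathcal{D}$, with boundary value $S(\overline{\rho}) = \sum_{i=1}^k g(\overline{p}_i) = -\sum_{i=1}^k \overline{p}_i \ln \overline{p}_i$, the vanishing eigenvalues contributing nothing since $g(0) = 0$.

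Part 2 is then an immediate corollary: since $\gamma(t) \to \overline{\rho}$ in $\mathcal{D}$ and $S$ is continuous there, $\lim_{t \to 1^-} S(\gamma(t)) = S(\overline{\rho})$, independently of the particular curve. The two stated special cases reduce to direct evaluation. For a pure limit state $(k=1)$ one has $\overline{p}_1 = 1$, whence $S(\overline{\rho}) = -1 \cdot \ln 1 = 0$. For a state maximally mixed on its $k$-dimensional support, $\overline{p}_i = 1/k$, so $S(\overline{\rho}) = -\sum_{i=1}^k \tfrac{1}{k} \ln \tfrac{1}{k} = \ln k$.

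The only genuine subtlety---hence the step I would treat most carefully---is the passage to the boundary, where the factor $\ln p_i$ in $-\tr(\rho \ln \rho)$ diverges as $p_i \to 0^+$. The polynomial-approximation argument is designed precisely to sidestep this: it works uniformly with the continuous extension $g$ rather than with $\ln\rho$, which is undefined on $\partial\mathcal{D}$. An equivalent route would invoke continuity of the spectrum of a Hermitian matrix as a function of its entries (via Weyl's perturbation inequality) together with continuity of $g$; the symmetric sum $\sum_i g(p_i)$ is then continuous even though individual eigenvalue functions need not be differentiable. Either way, the crucial input is the removable singularity of $x \ln x$ at the origin.
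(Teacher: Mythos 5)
Your proof is correct, and it takes a genuinely different route from the paper's. The paper works directly along the curve: it diagonalizes $\gamma(t) = U(t)\operatorname{diag}(p_1(t),\dots,p_m(t))U^*(t)$, assumes the eigenvalues converge to those of $\overline{\rho}$ (with $k$ of them staying positive and $m-k$ tending to $0^+$), splits the entropy sum into the surviving and vanishing parts, and uses the removable singularity of $x\ln x$ on the vanishing part. That argument implicitly relies on continuity of the spectrum along the curve --- the ``assume without loss of generality'' step --- which is true but left unjustified, and it only establishes convergence along smooth curves. Your Weierstrass argument instead proves the stronger statement that $S$ is continuous on all of $\mathcal{D}$: $\rho\mapsto\tr(q_N(\rho))$ is a polynomial in the entries, hence continuous, and the bound $|\tr(g(\rho))-\tr(q_N(\rho))|\le m\,\|g-q_N\|_{\infty,[0,1]}$ gives uniform convergence, so the limit is continuous without ever tracking individual eigenvalue branches. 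Part 2 then follows for any converging sequence, not just smooth curves. What the paper's approach buys is directness and a transparent physical picture of which eigenvalues die at the boundary; what yours buys is a cleaner, strictly more general statement that sidesteps the spectral-continuity subtlety entirely. Both hinge on the same essential fact, $\lim_{x\to 0^+}x\ln x = 0$, and your evaluation of the two special cases matches the paper's.
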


\begin{proof}
First, we establish the continuous extension of $S$ to $\partial \mathcal{D}$. For a state $\bar{\rho} \in \mathcal{D}$ with eigenvalues $p_i \geq 0$, some possibly zero, the entropy is defined by extending the function $f(x) = -x \ln x$ to the domain boundary $x = 0$.
Let $\gamma: [0,1) \to \mathcal{D}^\circ$ be a smooth curve with $\lim_{t \to 1^-} \gamma(t) = \bar{\rho} \in \partial \mathcal{D}$ of rank $k < m$.
In a neighborhood of $\bar{\rho}$, $\gamma(t)$ can be diagonalized: $\gamma(t) = U(t) \operatorname{diag}(p_1(t), \ldots, p_m(t)) U^*(t)$ with $p_i(t) > 0$, $\sum_{i=1}^m p_i(t) = 1$, and $U(t)$ is a $m \times m$ unitary matrix for each value of $t$. As $t \to 1^-$, assume without loss of generality that $p_i(t) \to \bar{p}_i > 0$ for $i=1,\ldots,k$, and $p_i(t) \to 0$ for $i=k+1,\ldots,m$, with $\sum_{i=1}^k \bar{p}_i = 1$.
Then
\begin{equation}
    S(\gamma(t)) = -\sum_{i=1}^k p_i(t) \ln p_i(t) - \sum_{i=k+1}^m p_i(t) \ln p_i(t).
\end{equation}
The first sum converges to $S(\bar{\rho})$ by continuity of $f(x)$ at $\bar{p}_i > 0$.
For the second sum, consider $f(p_i(t)) = -p_i(t) \ln p_i(t)$. As $p_i(t) \to 0^+$, $f(p_i(t)) \to 0$. The sum $\sum_{i=k+1}^m f(p_i(t))$ has $m-k$ terms and each term converges to 0, so $\sum_{i=k+1}^m p_i(t) \ln p_i(t) \to 0$. 
Thus,
\begin{equation}
    \lim_{t \to 1^-} S(\gamma(t)) = S(\bar{\rho}).
\end{equation}
The special cases follow immediately: pure states give $S(\bar{\rho}) = 0$; maximal mixing on support gives $S(\bar{\rho}) = \ln k$.
\end{proof}

Note that $\partial \mathcal{D}$ is stratified by rank, where the rank-$k$ stratum is a smooth manifold. The stratification ensures that $\partial \mathcal{D}$ is decomposed into smooth manifolds (strata), each corresponding to a fixed rank $k$. The entropy $S(\rho)$ for $\rho$ in the rank-$k$ stratum depends on the eigenvalue distribution.

\begin{theorem}[Geometric unattainability of the boundary]
Let $(\mathcal{B}, g_{\mathrm{BW}})$ be the Riemannian manifold of Gibbs states. Then no geodesic $\gamma: [0, T] \to \mathcal{B}$ of finite length can reach $\partial \mathcal{D}$.
\end{theorem}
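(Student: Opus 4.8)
The plan is to prove the statement by reducing the global claim to a one–dimensional estimate on the Bures–Wasserstein length near a vanishing eigenvalue. First I would note that, since $\mathcal{B}\subseteq\mathcal{D}^\circ$ consists of full–rank states, a geodesic $\gamma:[0,T]\to\mathcal{B}$ can only ``reach'' $\partial\mathcal{D}$ as a limit: there is a sequence $t_k\to T$ with $\gamma(t_k)\to\overline{\rho}\in\partial\mathcal{D}$, $\mathrm{rank}(\overline{\rho})=k<m$. By continuity of the spectrum (the same diagonalization argument used in the boundary Proposition), this forces at least one eigenvalue $p(t):=p_{\min}(\gamma(t))\to 0^+$ along the curve. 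It therefore suffices to show that the contribution of this collapsing mode to $L(\gamma)=\int_0^T\sqrt{g_{\mathrm{BW}}(\gamma',\gamma')}\,dt$ forces $L(\gamma)$ to be infinite, i.e. that the boundary stratum sits at infinite $g_{\mathrm{BW}}$–distance.

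Next I would make the metric behaviour near the stratum explicit by passing to the instantaneous eigenbasis of $\gamma(t)$. There the symmetric logarithmic derivative has entries $(L)_{ab}=2(\dot\rho)_{ab}/(p_a+p_b)$, so that $g_{\mathrm{BW}}(\gamma',\gamma')=\tfrac12\sum_{a,b}|(\dot\rho)_{ab}|^2/(p_a+p_b)$. Isolating the diagonal term of the collapsing mode, where $(\dot\rho)_{aa}=\dot p_a$, yields the pointwise lower bound $g_{\mathrm{BW}}(\gamma',\gamma')\ge (\dot p)^2/(4p)$, so the tail length is bounded below by $\int |\dot p|/(2\sqrt{p})\,dt$. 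My intended route is then to promote this to a \emph{divergent} bound, ideally by showing that along a genuine geodesic the integrand is non-integrable, and to invoke the rank stratification of $\partial\mathcal{D}$ to patch together the contributions of all eigenvalues that vanish simultaneously at $\overline{\rho}$.

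The decisive step, and where I expect the genuine obstacle, is precisely this promotion to divergence, and I would flag it as delicate rather than routine. The diagonal estimate only produces an integrand comparable to $|\dot p|/\sqrt{p}$, whose primitive $\sqrt{p}$ is \emph{finite} as $p\to 0^+$; this naive bound shows the tail length is controlled, not that it blows up, which is consistent with the elementary fact that the Bures diameter of the entire state space equals $\sqrt{2}$. To obtain a true divergence one would need a non-integrable $\sim 1/p$ singularity in the line element rather than the $\sim 1/\sqrt{p}$ actually supplied by the $\rho$–weighted Bures form $g_{\mathrm{BW}}(X,Y)=\mathrm{Re}[\tr(L_X\rho L_Y)]$. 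Consequently, the crux of any proof is to argue — using the geodesic equation itself, with Christoffel symbols blowing up like $1/p$ near the stratum, so that a geodesic is repelled from the cheap eigenvalue–collapse direction — that a $g_{\mathrm{BW}}$–geodesic cannot realize the finite-length approach that an arbitrary curve admits. Reconciling such a repulsion estimate with the finiteness of the Bures diameter is, in my assessment, the hard and potentially decisive part of the argument, and I would examine it carefully before accepting the stated conclusion; if it cannot be carried out, the natural correct statement would instead concern an unweighted information metric whose line element does scale as $(\dot p/p)^2$.
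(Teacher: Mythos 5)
Your proposal follows essentially the same route as the paper's own proof --- diagonalize along the curve, isolate the collapsing eigenvalue $p\to 0^+$, and examine the singular term of the Bures--Wasserstein line element --- but you carry the computation one step further than the paper does, and that step exposes a genuine error in the paper's argument rather than in yours. The paper's proof asserts that because $ds^2$ contains a term proportional to $p_k^{-1}$, ``the resulting integral is divergent.'' As you correctly observe, the quantity integrated to obtain $L(\gamma)$ is $ds=\sqrt{g_{\mathrm{BW}}(\gamma',\gamma')}\,dt$; the diagonal contribution of the collapsing mode is $\dot p^2/p$ \emph{inside} the square root, hence of order $|\dot p|/\sqrt{p}$ in the integrand, whose primitive $2\sqrt{p}$ remains bounded as $p\to 0^+$. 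A $p^{-1}$ factor in $ds^2$ therefore does not force the length to diverge, and the paper's conclusion is a non sequitur.

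Your suspicion that the ``promotion to divergence'' cannot be carried out is also well founded: for this metric it cannot. The Bures--Wasserstein distance is globally bounded ($d_{\mathrm{BW}}\le\sqrt{2}$ on unit-trace states), and the paper's own qubit example gives an explicit counterexample: for the thermal curve $\rho_\lambda=\tfrac12\,\mathrm{diag}(1+a,\,1-a)$ with $a=\tanh\lambda$, the induced line element is $da^2/(1-a^2)$ up to normalization, so the length to the pure boundary state is $\arcsin a\le\pi/2$, finite; moreover this curve is a genuine geodesic (the $z$-axis is totally geodesic in the round Bures metric on the Bloch ball), so no geodesic-repulsion argument can rescue the claim. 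The gap you flag is thus real and not closable: it lies in the paper's theorem and proof, not in your attempt. As you note in your closing sentence, a correct geometric statement of unattainability requires a metric whose line element scales as $|dp|/p$ near the boundary (a Fisher--Rao/Kubo--Mori or logarithmic-derivative type metric, for which the length picks up a divergent $|\ln p|$), or else a reformulation of the third law in terms of a cost functional other than Bures length.
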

\begin{proof}
Consider a boundary element described by a state $\rho \in \partial\mathcal{D}$ with $\mathrm{rank}(\rho) < m$. 
A geodesic curve $\gamma(t) : [0,1) \longrightarrow \mathcal{B}$ such that $\lim_{t \to 1^-} \gamma(t) = \rho$ has eigenvalues $p_i(t)$ with at least one $p_k(t) \to 0^+$ under such limit. Using the Bures-Wasserstein metric, the line element $ds^2$ is seen to contain a term proportional to $p_k^{-1}$ when evaluated over $\gamma$. Since the resulting integral is divergent, $L(\gamma)$ diverges.
\end{proof}

A thermodynamic process in the quantum thermodynamic state space $\mathcal{M}$ projects to a curve $\gamma$ in $\mathcal{D}^\circ$ via $\Xi$. Reaching boundary states in $\partial \mathcal{D}$ would require infinite thermodynamic length, hence infinite resources.

\begin{corollary}[Third law of quantum thermodynamics]
Let $\gamma : [0,1) \longrightarrow \mathcal{B}$ be a smooth curve such that $\lim_{t \longrightarrow 1^-} \gamma(t) \in \partial \mathcal{D}$ is a maximally mixed state. Then, the entropy function $S: \mathcal{D}^\circ \to \mathbb{R}^+$ satisfies
\begin{equation}
\lim_{t \longrightarrow 1^-} S(\gamma(t)) = \ln k,
\end{equation}
where $k$ is the rank of the target boundary stratum. Moreover, boundary strata are unattainable along any finite-length geodesic in $\mathcal{B}$.
\end{corollary}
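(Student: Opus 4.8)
The plan is to obtain this corollary directly as a synthesis of the two preceding results of this section: the continuous-extension proposition for the von Neumann entropy and the theorem on geometric unattainability of the boundary. Given those, both clauses are almost immediate, so the real work lies in the correct specialization to the maximally mixed case and in reconciling the two clauses, which at first glance appear to be in tension.

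First I would establish the entropy limit. Write $\overline{\rho} := \lim_{t\to 1^-}\gamma(t)\in\partial\mathcal{D}$ and let $k=\mathrm{rank}(\overline{\rho})<m$. By hypothesis $\overline{\rho}$ is maximally mixed on its support, so its nonzero eigenvalues are $\overline{p}_i = 1/k$ for $i=1,\dots,k$. The continuous-extension proposition guarantees that for any smooth curve in $\mathcal{D}^\circ$ converging to a rank-$k$ boundary state one has $\lim_{t\to 1^-}S(\gamma(t)) = S(\overline{\rho})$, and evaluating the entropy of $\overline{\rho}$ gives
\[
S(\overline{\rho}) = -\sum_{i=1}^k \overline{p}_i\ln\overline{p}_i = -\sum_{i=1}^k \tfrac{1}{k}\ln\tfrac{1}{k} = \ln k.
\]
This yields the first claim. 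The Nernst special case $k=1$ returns $S\to 0$, recovering the vanishing of entropy as one approaches a pure (zero-temperature) state.

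Second I would invoke the unattainability theorem. Since $k<m$, at least one eigenvalue $p_j(t)\to 0^+$ along $\gamma$, and the theorem shows that the Bures-Wasserstein line element then contains a contribution proportional to $p_j^{-1}$, forcing the length integral $L(\gamma)$ to diverge. Consequently no geodesic of finite length can meet any rank-deficient stratum of $\partial\mathcal{D}$, which is the second claim.

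The principal obstacle is conceptual rather than computational: one must show that the finite entropy limit $\ln k$ does not contradict the infinite geodesic length. The resolution is that these two statements refer to different geometric quantities. The limit $S(\gamma(t))\to\ln k$ concerns the value of a continuous function as the curve approaches $\overline{\rho}$, whereas unattainability concerns arc length with respect to $g_{\mathrm{BW}}$. A boundary stratum can be approached arbitrarily closely---so the entropy possesses a well-defined limit---while requiring infinite thermodynamic length, and hence infinite physical resources, to be reached exactly. This is precisely the geometric content of the third law: rank-deficient strata, including the maximally mixed strata and (for $k=1$) the pure states, are limiting configurations that are approached but never attained in finite length.
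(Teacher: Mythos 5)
Your proposal is correct and follows essentially the same route as the paper: the entropy limit is obtained from the continuous-extension proposition (specialized to the maximally mixed case, giving $\ln k$), and the unattainability clause is quoted directly from the geodesic-length divergence theorem. The added explicit evaluation of $-\sum_{i=1}^k \tfrac{1}{k}\ln\tfrac{1}{k}$ and the remark reconciling finite entropy limit with infinite length are harmless elaborations of what the paper's two-line proof already asserts.
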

\begin{proof}
From the theorem, no finite-length geodesic can reach $\partial \mathcal{D}$. By the proposition, $S(\gamma(t)) \to \ln k$, as $S$ is continuous on $\mathcal{B}$. For pure states, $k=1$, meaning $S \to 0$ in infinite time.
\end{proof}
\begin{corollary}
There exists no finite-length thermodynamic process transforming a full-rank Gibbs state into a pure state.
\end{corollary}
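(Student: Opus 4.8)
The plan is to identify a pure state as the rank-one stratum of $\partial\mathcal{D}$ and then reduce the statement to the preceding theorem on geometric unattainability of the boundary and its third-law corollary. For a Hilbert space of dimension $m \geq 2$, a pure state $\rho = \lvert\psi\rangle\langle\psi\rvert$ has a single nonzero eigenvalue, equal to unity, together with $m-1$ vanishing eigenvalues; hence $\mathrm{rank}(\rho) = 1 < m$ and $\rho \in \partial\mathcal{D}$. A full-rank Gibbs state, by contrast, lies in the interior $\mathcal{D}^\circ$, so any process connecting the two must leave $\mathcal{D}^\circ$ and reach the boundary.

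First, I would phrase a thermodynamic process transforming a full-rank Gibbs state into a pure state as a smooth curve $\gamma:[0,1)\to\mathcal{B}$ with $\gamma(0)$ a full-rank Gibbs state and $\lim_{t\to 1^-}\gamma(t) = \overline{\rho}$ a pure state in $\partial\mathcal{D}$. Since $\mathrm{rank}(\overline{\rho}) = 1$, at least one eigenvalue $p_k(t)$ of $\gamma(t)$ satisfies $p_k(t)\to 0^+$ as $t\to 1^-$ (indeed $m-1$ of them do), placing this situation squarely within the hypotheses of the unattainability theorem.

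Second, I would invoke the divergence of the Bures--Wasserstein length established there: near a rank-deficient state the line element $ds^2$ contains a term proportional to $p_k^{-1}$, so the integral defining the thermodynamic length $L(\gamma)$ diverges as $p_k(t)\to 0$. Consequently $L(\gamma) = \infty$, contradicting the assumption of a finite-length process. Equivalently, one may cite the third-law corollary directly, since a pure state occupies the $k=1$ boundary stratum, which that corollary already declares unattainable along any finite-length path in $\mathcal{B}$.

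The main obstacle is conceptual rather than computational: the unattainability theorem is phrased for geodesics, yet a generic thermodynamic process need not be a geodesic. I would resolve this by noting that the length divergence stems from the metric $g_{\mathrm{BW}}$ itself degenerating at rank-deficient states, a property valid along \emph{any} curve reaching the boundary; moreover, since minimizing geodesics realize the infimum of length over all connecting paths, the infinite length of the geodesic forces every other admissible path terminating at the pure state to have infinite length as well. Either viewpoint closes the argument.
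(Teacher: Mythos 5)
Your proposal is correct and follows essentially the same route as the paper, which states this corollary without separate proof as an immediate consequence of the boundary-unattainability theorem: a pure state lies in the rank-one stratum of $\partial\mathcal{D}$, and the divergence of the Bures--Wasserstein line element (the $p_k^{-1}$ term as an eigenvalue vanishes) forces $L(\gamma)=\infty$ for any curve reaching it. Your added remark resolving the geodesic-versus-general-curve point --- either by noting the divergence holds along any curve approaching the boundary, or by observing that an infinite infimum of lengths bounds every connecting path from below --- correctly fills a gap the paper leaves implicit.
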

\begin{remark}
The classical third law states that absolute zero temperature is unattainable in finite steps. Here, zero-entropy states (pure states) play an analogous role, but the unattainability arises purely from the Riemannian geometry of the state space, not from any explicit parameter. This geometric third law holds for any control protocol in $\mathcal{M}$ projecting to $\mathcal{D}^\circ$.
\end{remark}

The continuity of entropy to the boundary and the geometric unattainability of rank-deficient states provide a differential-geometric derivation of the third law in quantum thermodynamics. The divergence of geodesic lengths toward the boundary implies an infinite thermodynamic cost to reach low-entropy states.
Consequences include fundamental limits on finite-time thermodynamic processes, where complete purification, i.e., transforming a thermal state into a pure state, is impossible thermodynamically.

\section{Connections, curvature and holonomy in the quantum thermodynamic fiber bundle}
\subsection{Ehresmann connection}
The fiber $F_\sigma = \Xi^{-1}(\sigma)$ contains all thermodynamic configurations for the fixed quantum state $\sigma \in \mathcal{B}$, with only the equilibrium point $p_\sigma = F_\sigma \cap \mathcal{B}$ lying on the Legendrian submanifold $\mathcal{E}$. Geodesics on $\mathcal{E}$ connect equilibrium states across different fibers, while non-equilibrium points in $F_\sigma \setminus \mathcal{E}$ represent transient configurations. Relaxation paths within $F_\sigma$ converge to $p_\sigma$, and their dissipation can be quantified and minimized using a pseudo-Riemannian metric $g_{\mathcal{M}}$ on $\mathcal{M}$ that extends the Bures-Wasserstein metric $g_{\mathrm{BW}}$ from $\mathcal{E}$.

Thermodynamic processes, either quasistatic evolutions along Gibbs states or driven non-equilibrium dynamics, correspond to paths in $\mathcal{M}$ projecting via $\Xi$ to curves in $\mathcal{B}$. To model these transitions geometrically, we introduce an Ehresmann connection on the fiber bundle, decomposing $T\mathcal{M}$ into vertical and horizontal subbundles. This connection enables parallel transport of thermodynamic states along paths in $\mathcal{B}$, with curvature quantifying non-integrability and holonomy inducing geometric irreversibility in cyclic processes. This is analogous to the holonomies existent in gauge theories.

\begin{definition}
The \emph{vertical subbundle} $\mathscr{V} \subset T\mathcal{M}$ is
\begin{equation}
\mathscr{V}_p = \ker \left(d\Xi_p : T_p \mathcal{M} \to T_{\Xi(p)} \mathcal{B}\right) = T_p F_{\Xi(p)},
\end{equation}
the tangent space to the fiber $F_\sigma$ at $p \in F_\sigma$. Since $\dim \mathcal{M} = 2n+1$ and $\dim \mathcal{B} = n$, we have $\dim \mathscr{V}_p = n+1$. Assuming that $\Xi$ is a submersion, $d\Xi_p$ is surjective, and $\mathscr{V} = \ker (d\Xi)$ is a smooth subbundle.
\end{definition}

\begin{definition}\label{def:ehrensmann}
An \emph{Ehresmann connection} is a smooth horizontal subbundle $\mathscr{H} \subset T\mathcal{M}$ such that
\begin{enumerate}
\item $T_p \mathcal{M} = \mathscr{H}_p \oplus \mathscr{V}_p$ for all $p \in \mathcal{M}$,
\item $d\Xi_p|_{\mathscr{H}_p} : \mathscr{H}_p \to T_{\Xi(p)} \mathcal{B}$ is a linear isomorphism.
\end{enumerate}
The dimension of this subbundle is $\dim \mathscr{H}_p = n$.
\end{definition}

\begin{proposition} Consider a pseudo-Riemannian metric $g_\mathcal{M}$ on $\mathcal{M}$ such that $g_{\mathcal{M}} |_{\mathscr{V}_p}$ is non-degenerate. This metric induces an Ehresmann connection by defining
\begin{equation}
\mathscr{H}_p = \{ v \in T_p \mathcal{M} \mid g_{\mathcal{M}}(v, w) = 0, \ \forall w \in \mathscr{V}_p \}. 
\end{equation}
\end{proposition}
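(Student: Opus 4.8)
The plan is to verify, for each $p \in \mathcal{M}$, the two defining conditions of an Ehresmann connection in Definition~\ref{def:ehrensmann}, namely the direct-sum decomposition $T_p\mathcal{M} = \mathscr{H}_p \oplus \mathscr{V}_p$ and the fact that $d\Xi_p|_{\mathscr{H}_p}\colon \mathscr{H}_p \to T_{\Xi(p)}\mathcal{B}$ is a linear isomorphism onto the $n$-dimensional base, and then to confirm that $\mathscr{H} = \bigcup_p \mathscr{H}_p$ assembles into a smooth subbundle. The whole argument rests on one elementary piece of linear algebra, which I would isolate as the key lemma: since $g_\mathcal{M}$ is non-degenerate on $T_p\mathcal{M}$ and its restriction to $\mathscr{V}_p$ is positive-definite (hence non-degenerate), the $g_\mathcal{M}$-orthogonal complement of $\mathscr{V}_p$ is a genuine algebraic complement.

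First I would establish the pointwise decomposition. Non-degeneracy of $g_\mathcal{M}$ on the $(2n+1)$-dimensional space $T_p\mathcal{M}$ gives the dimension identity $\dim \mathscr{H}_p = \dim T_p\mathcal{M} - \dim \mathscr{V}_p = (2n+1) - (n+1) = n$, which holds for any subspace irrespective of the signature of the ambient form. The transversality $\mathscr{H}_p \cap \mathscr{V}_p = \{0\}$ is where the hypothesis does its work: any element of this intersection is a vector $w \in \mathscr{V}_p$ with $g_\mathcal{M}(w, w') = 0$ for all $w' \in \mathscr{V}_p$, i.e. an element of the radical of $g_\mathcal{M}|_{\mathscr{V}_p}$; since this restriction is positive-definite its radical is trivial, so $w = 0$. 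Combining the dimension count with the trivial intersection yields $T_p\mathcal{M} = \mathscr{H}_p \oplus \mathscr{V}_p$, which is condition~(1).

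Next I would treat the isomorphism condition~(2). Because $\Xi$ is a submersion, $\mathscr{V}_p = \ker d\Xi_p$ by the definition of the vertical subbundle, and $d\Xi_p$ is surjective onto the $n$-dimensional space $T_{\Xi(p)}\mathcal{B}$. Restricting to $\mathscr{H}_p$, injectivity is immediate, since $\ker\bigl(d\Xi_p|_{\mathscr{H}_p}\bigr) = \mathscr{H}_p \cap \ker d\Xi_p = \mathscr{H}_p \cap \mathscr{V}_p = \{0\}$ by the previous step. As $\dim \mathscr{H}_p = n = \dim T_{\Xi(p)}\mathcal{B}$, an injective linear map between spaces of equal dimension is bijective, so $d\Xi_p|_{\mathscr{H}_p}$ is the required isomorphism.

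Finally I would argue smoothness of $\mathscr{H}$. Around any $p$ one chooses a local smooth frame $\{e_1, \dots, e_{n+1}\}$ of the smooth vertical subbundle $\mathscr{V}$ and forms the Gram matrix $G_{ab} = g_\mathcal{M}(e_a, e_b)$, which is positive-definite by hypothesis and therefore smoothly invertible; the orthogonal projector $P = \sum_{a,b} G^{ab}\, g_\mathcal{M}(\cdot, e_b)\, e_a$ onto $\mathscr{V}$ is then a smooth bundle endomorphism with $\mathscr{H} = \ker P = \operatorname{im}(\mathrm{id} - P)$, exhibiting $\mathscr{H}$ as the image of a smooth idempotent of constant rank $n$, hence a smooth rank-$n$ subbundle. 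The main obstacle worth flagging is precisely the transversality step: in a genuinely pseudo-Riemannian (indefinite) setting an orthogonal complement need not be a complement, because a subspace can intersect its own orthogonal complement nontrivially whenever the induced form degenerates there. The positive-definiteness of $g_\mathcal{M}|_{\mathscr{V}_p}$ is exactly the minimal hypothesis that excludes this pathology, and once it is in place everything else reduces to routine dimension counting and the standard smooth construction of the orthogonal projector.
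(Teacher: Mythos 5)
Your proof is correct and follows essentially the same route as the paper's: both arguments hinge on the invertibility of the Gram matrix of $g_{\mathcal{M}}$ restricted to $\mathscr{V}_p$ (guaranteed by positive-definiteness) to obtain the splitting $T_p\mathcal{M} = \mathscr{H}_p \oplus \mathscr{V}_p$, and then use $\mathscr{V}_p = \ker d\Xi_p$ together with a dimension count to get the isomorphism onto $T_{\Xi(p)}\mathcal{B}$. The only difference is presentational---the paper constructs the vertical component by solving the linear system $g_{\mathcal{M}}(u-v,w_\beta)=0$ while you argue via trivial radical plus dimension count, and your explicit smooth projector makes precise the smoothness claim that the paper merely asserts.
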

\begin{proof}
For any tangent vector $u \in T_p \mathcal{M}$, decompose $u = h + v$, with $h \in \mathscr{H}_p$, $v \in \mathscr{V}_p$. The condition $g_{\mathcal{M}}(h, w) = 0$ for all $w \in \mathscr{V}_p$ determines $v$ via the linear system
\begin{equation}
g_{\mathcal{M}}(u - v, w_\beta) = 0,
\end{equation}
where $\beta = 1, \ldots, n+1$, for a basis $\{w_\beta\}$ of $\mathscr{V}_p$. The matrix $g_{\mathcal{M}}(w_\alpha, w_\beta)$ is invertible since $g_{\mathcal{M}} |_{\mathscr{V}_p}$ is non-degenerate, ensuring a unique $v$.

For the second condition in Definition \ref{def:ehrensmann}, since $\mathscr{V}_p = \ker d\Xi_p$,
\begin{equation}
d\Xi_p \left(T_p \mathcal{M} \right) = d\Xi_p \left(\mathscr{H}_p\right).
\end{equation}
Given $\dim \mathscr{H}_p = \dim T_{\Xi(p)} \mathcal{B} = n$ and $\Xi$ a submersion, $d\Xi_p |_{\mathscr{H}_p}$ is an isomorphism. The smoothness of $\mathscr{H}$ follows from the smoothness of $g_{\mathcal{M}}$ and $\Xi$.
\end{proof}

In order to define such a metric $g_\mathcal{M}$, the desiderata are:
\begin{enumerate}
    \item $g_{\mathcal{M}}|_{T\mathcal{E}} = g_{\mathrm{BW}}$,
    \item compatibility with the contact structure $\eta$,
    \item positive-definiteness on $\ker(\eta)$ and controlled signature in transverse directions.
\end{enumerate}

The construction of a metric with the previous properties will be based on the following considerations. Earlier, it was argued that each fiber $\Xi^{-1}(\sigma)$ is diffeomorphic to $\mathbb{R}^{n+1}$. We can further establish such fibers to be \textit{graphs} of smooth functions over the variables $(S, \boldsymbol{a})$. We now develop the essential idea behind this fact: since the functions $\mu_1, \dots, \mu_n$ satisfy the so-called joint injectivity condition, for a given fiber $\Xi^{-1}(\sigma)$ each point of the plane $(S, \boldsymbol{a})$ corresponds uniquely to a point of the fiber, given by the solution of the system of equations $\mu_1(S,\boldsymbol{a},\boldsymbol{\lambda}) = c_1, \hdots, \mu_n(S,\boldsymbol{a},\boldsymbol{\lambda}) = c_n$. Such correspondence defines a map $\varphi_{\sigma} : \mathbb{R}^{n+1} \longrightarrow \Xi^{-1}(\sigma)$. Moreover, due to the submersion condition, the inverse function theorem guarantees that $\varphi_{\sigma}$ is a diffeomorphism, so that the fiber is realized as the smooth graph
\begin{equation}
	\Xi^{-1}(\sigma) = \left\{ \left(S, \boldsymbol{a}, \pi \circ \varphi_{\sigma}(S, \boldsymbol{a})\right) \;\big|\; S, \boldsymbol{a} \in \mathbb{R}^{n+1} \right\},
\end{equation}
where $\pi$ is the projection $(S, \boldsymbol{a}, \boldsymbol{\lambda}) \longmapsto \boldsymbol{\lambda}$. In particular, we can use the variables $(S, \boldsymbol{a})$ to furnish a coordinate chart of each fiber, allowing to define metrics and other tensor fields on the fiber in a simple and explicit manner.

In what follows, it will be useful to enlarge the previous construction to a full coordinate chart on the quantum thermodynamic fiber bundle. Given an element $p \in \mathcal{M}$, $p$ belongs to a unique fiber. Under the assumption that the map $\boldsymbol{\lambda} \mapsto \rho_{\boldsymbol{\lambda}}$ is injective, we can assign to $p$ the values $\boldsymbol{\lambda}$ of the only point where the fiber intersects the equilibrium submanifold $\mathcal{E}$. Let us call these $\bar{\boldsymbol{\lambda}}$. Then, we use the diffeomorphism $\varphi_{\sigma}$, where $\sigma$ here denotes the corresponding Gibbs state. Specifically, we compute $\varphi_{\sigma}^{-1}(p) \in \mathbb{R}^{n+1}$ and denote the result by $(\bar{S}, \bar{\boldsymbol{a}})$. In this manner, each point $p \in \mathcal{M}$ is mapped to the $2n+1$ real numbers $(\bar{S}, \bar{\boldsymbol{a}}, \bar{\boldsymbol{\lambda}})$, and this mapping is a diffeomorphism by virtue of the preceding remarks. The new coordinates are adapted to the fiber bundle structure in the sense that two points $p, q \in \mathcal{M}$ belong to the same fiber if any only if their $\bar{\boldsymbol{\lambda}}$ coordinates coincide. We are now ready to provide the expression of a suitable metric on $\mathcal{M}$:

\begin{proposition}
    A pseudo-Riemannian metric on $\mathcal{M}$, $g_\mathcal{M}$, satisfying the former desiderata is 
\begin{equation}
    g_{\mathcal{M}} = g_{\bar{S}}  d\bar{S}^2 + \sum_{i=1}^n g_{\bar{a}_i}  d\bar{a}_i^2 + \Xi^* g_{\mathrm{BW}} + \sum_{i=1}^n h_i (d\bar{S} \otimes d\bar{\lambda}_i + d\bar{\lambda}_i \otimes d\bar{S}),
\end{equation}
where $g_{\bar{S}} \in \mathbb{R}$ controls the entropy direction, which may be negative; $g_{\bar{a}_i} > 0$ are positive-definite on expectation value directions, for all $i =1,\ldots,n$; $\Xi^* g_{\mathrm{BW}} = \sum_{i,j} g_{ij}  d\bar{\lambda}_i  d\bar{\lambda}_j$ pulls back the Bures-Wasserstein metric; and finally $h_i$ are cross-terms ensuring contact compatibility, for all $i =1,\ldots,n$.

\end{proposition}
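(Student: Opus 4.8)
The plan is to verify the three desiderata of Remark~\ref{remark:desiderata} by direct computation in the global coordinates $(S,\boldsymbol{a},\boldsymbol{\lambda})$. First I would record the coordinate Gram matrix of $g_{\mathcal{M}}$ in the basis $\{\partial_S,\partial_{a_1},\dots,\partial_{a_n},\partial_{\lambda_1},\dots,\partial_{\lambda_n}\}$, which by inspection of the four summands is block structured as
\begin{equation}
[g_{\mathcal{M}}]=\begin{pmatrix} g_S & 0 & h^{T}\\ 0 & D & 0\\ h & 0 & (g_{ij})\end{pmatrix},\qquad D=\mathrm{diag}(g_{a_1},\dots,g_{a_n}),
\end{equation}
with $h=(h_1,\dots,h_n)^{T}$. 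Alongside this I would compute $d\eta=-\sum_i d\lambda_i\wedge da_i$, identify the Reeb field as $\mathcal{R}_\eta=\partial_S$ (the unique solution of $\eta(\mathcal{R}_\eta)=1$ and $d\eta(\mathcal{R}_\eta,\cdot)=0$), and fix the basis $e_i=\partial_{a_i}+\lambda_i\partial_S$, $f_i=\partial_{\lambda_i}$ of $\ker\eta$. Desideratum (1) is then immediate: the horizontal subbundle $\mathscr{H}=\operatorname{span}\{\partial_{\lambda_i}\}$ projects isomorphically onto $T\mathcal{E}$ via $\pi_*$, and since $g_S\,dS^2$, the $g_{a_i}\,da_i^2$ and the cross terms all annihilate $\partial_{\lambda_i}$, only $\pi^*g_{\mathrm{BW}}$ survives, giving $g_{\mathcal{M}}(\partial_{\lambda_i},\partial_{\lambda_j})=g_{ij}$, i.e. the restriction reproduces $g_{\mathrm{BW}}$ by construction.

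For desideratum (3), which is the technical core, I would form the Gram matrix of $g_{\mathcal{M}}$ on $\ker\eta$ in the basis $\{e_i,f_j\}$, obtaining
\begin{equation}
G=\begin{pmatrix} D+g_S\,\lambda\lambda^{T} & \lambda h^{T}\\ h\lambda^{T} & (g_{ij})\end{pmatrix},
\end{equation}
and apply a Schur-complement argument relative to the block $(g_{ij})=g_{\mathrm{BW}}$, which is positive definite by Corollary~\ref{cor:riemannian}. Writing $c=g_S-\sum_{i,j}h_i(g_{\mathrm{BW}}^{-1})_{ij}h_j$, the Schur complement is $D+c\,\lambda\lambda^{T}$, so $G\succ 0$ if and only if the explicit admissibility condition
\begin{equation}
g_S-\sum_{i,j}h_i\,(g_{\mathrm{BW}}^{-1})_{ij}\,h_j>-\Big(\textstyle\sum_i \lambda_i^{2}/g_{a_i}\Big)^{-1}
\end{equation}
holds, which shows that $g_S$ may indeed be negative within a controlled range.

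For the transverse signature and desideratum (2) I would exploit that the coordinate Gram matrix splits orthogonally into the positive-definite $a$-block $D$ and the $(S,\boldsymbol{\lambda})$-block $\bigl(\begin{smallmatrix} g_S & h^{T}\\ h & g_{\mathrm{BW}}\end{smallmatrix}\bigr)$, whose signature is $n$ positive eigenvalues together with $\operatorname{sign}(c)$. Thus $g_{\mathcal{M}}$ is Riemannian when $c>0$ and Lorentzian of signature $(2n,1)$ when $c<0$, the single controlled direction being the entropy/Reeb direction; combined with the previous inequality this exhibits the ``controlled signature in transverse directions.'' Desideratum (2) then follows by checking that $\ker\eta$ is $g_{\mathcal{M}}$-nondegenerate (governed by the same Schur data, $c\neq-(\sum_i\lambda_i^2/g_{a_i})^{-1}$), so that $T\mathcal{M}$ splits orthogonally as $\ker\eta\oplus(\ker\eta)^{\perp}$, together with the verification that $d\eta$ restricted to $\ker\eta$ is the canonical pairing $d\eta(e_i,f_j)=\delta_{ij}$; this adaptation of the metric to the contact form is the precise sense of compatibility.

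The main obstacle I anticipate is the simultaneous satisfaction of the three conditions through the free functions $g_S,g_{a_i},h_i$: the cross terms $h_i$ couple the entropy and parameter directions, so the Schur complement mixes all of them, and one must show the admissibility inequality is compatible with both the recovery of $g_{\mathrm{BW}}$ on $\mathcal{E}$ and the desired (possibly Lorentzian) transverse signature. A secondary difficulty is conceptual rather than computational, namely fixing the exact meaning of ``compatibility with the contact structure''; the nondegeneracy of $g_{\mathcal{M}}|_{\ker\eta}$ together with the canonical form of $d\eta|_{\ker\eta}$ is the natural and verifiable criterion I would adopt. I also note that if one instead reads desideratum~(1) as the literal pullback $\iota^*g_{\mathcal{M}}=g_{\mathrm{BW}}$ along the embedding $\iota$ rather than along the horizontal identification, the entropy and cross terms no longer drop out and impose additional linear constraints on $g_S,g_{a_i},h_i$, which feed directly back into the same free-function reconciliation.
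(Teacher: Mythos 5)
Your proposal is correct in substance, and it is worth noting that the paper offers no proof of this proposition at all: the statement is followed only by an informal remark asserting that on $\mathcal{E}$ the relation $dS=\sum_i\lambda_i\,da_i$ forces $g_{\mathcal{M}}|_{T\mathcal{E}}=\pi^*g_{\mathrm{BW}}$, which as written is a non sequitur (substituting $dS=\sum_i\lambda_i\,da_i$ into $g_S\,dS^2+\sum_i g_{a_i}\,da_i^2$ does not make those terms vanish on $T\mathcal{E}$). Your Gram-matrix computation in the adapted basis $e_i=\partial_{a_i}+\lambda_i\partial_S$, $f_i=\partial_{\lambda_i}$ of $\ker\eta$ and the Schur-complement reduction to $D+c\,\lambda\lambda^{T}$ with $c=g_S-h^{T}g_{\mathrm{BW}}^{-1}h$ are accurate, and they buy two things the paper does not have: (i) an explicit admissibility inequality on the free functions $g_S,g_{a_i},h_i$ without which positive-definiteness on $\ker\eta$ can fail, showing that the proposition as stated needs this qualification; and (ii) a clean signature classification (signature $(2n+1,0)$ for $c>0$, $(2n,1)$ for $c<0$, degenerate at $c=0$) that makes precise what ``controlled signature in transverse directions'' means. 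Your closing observation that the first desideratum holds only under the horizontal identification of $T\mathcal{E}$ with $\mathrm{span}\{\partial_{\lambda_i}\}$ via $\pi_*$, and not as the literal pullback $\iota^*g_{\mathcal{M}}$, correctly isolates the ambiguity the paper glosses over; the paper's later statement that $\mathscr{H}_p=\mathrm{span}\{\partial_{\lambda_i}\}$ projects isomorphically to $T_{p_\sigma}\mathcal{E}$ confirms that the horizontal reading is the intended one. The only points I would tighten are the degenerate cases: at $\boldsymbol{\lambda}=0$ your admissibility bound is vacuous (which is fine but should be said explicitly), and nondegeneracy of the full metric additionally requires $c\neq 0$, which you state only implicitly through the signature dichotomy.
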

Note that, on $\mathcal{E}$
\begin{equation}
    g_{\mathcal{M}}|_{T\mathcal{E}} = \Xi^* g_{\mathrm{BW}},
\end{equation}
whereas on a fiber $F_\sigma$,
\begin{equation}
    g_{\mathcal{M}}|_{F_\sigma} = g_{\bar{S}}  d\bar{S}^2 + \sum_{i=1}^n g_{\bar{a}_i}  d\bar{a}_i^2.
\end{equation}
For a path $\gamma: [0,T] \to \mathcal{M}$ with $\gamma(t) = (\bar{S}(t), \bar{\boldsymbol{a}}(t), \bar{\boldsymbol{\lambda}}(t))$, the thermodynamic length is

\begin{equation}
    L(\gamma) = \int_0^T \sqrt{|g_{\mathcal{M}}(\gamma'(t), \gamma'(t))|}  dt,
\end{equation}
where
\begin{equation}
    g_{\mathcal{M}}(\gamma', \gamma') = g_{\bar{S}} \bar{S}'^2 + \sum_{i=1}^n g_{\bar{a}_i} (\bar{a}_i')^2 + \sum_{i,j=1}^n g_{ij} \bar{\lambda}_i' \bar{\lambda}_j' + 2 \sum_{i=1}^n h_i \bar{S}' \bar{\lambda}_i'.
\end{equation}
Along a relaxation path $\gamma(t) \in F_\sigma$
\begin{equation}
    L(\gamma) = \int_0^T \sqrt{\left| \displaystyle g_{\bar{S}} \bar{S}'^2 + \sum_{i=1}^n g_{\bar{a}_i} (\bar{a}_i')^2 \right|}  dt.
\end{equation}
This represents the thermodynamic length for a non-equilibrium point $\gamma(0)$ in the fiber out of the equilibrium submanifold, toward $\gamma(T) \in F_\sigma \cap \mathcal{E}$. In such process, the entropy production rate is
\begin{equation}
    \varsigma(t) = \kappa  g_{\mathcal{M}}(\gamma'(t), \gamma'(t)),
\end{equation}
with total
\begin{equation}
    \Sigma = \kappa \int_0^T g_{\mathcal{M}}(\gamma'(t), \gamma'(t))  dt,
\end{equation}
for $\kappa > 0 $.

The Ehresmann connection enables parallel transport of thermodynamic states.
\begin{definition}\label{def:horizontal_lift}
For a smooth path $\gamma: [0,1] \to \mathcal{B}$, with $\gamma(0) = \sigma_1$, $\gamma(1) = \sigma_2$, and $p_0 \in \Xi^{-1}(\sigma_1)$, the \emph{horizontal lift} is a curve $\tilde{\gamma}: [0,1] \to \mathcal{M}$ satisfying
\begin{equation}
\Xi \circ \tilde{\gamma}(t) = \gamma(t), \quad \tilde{\gamma}(0) = p_0, \quad \tilde{\gamma}'(t) \in \mathscr{H}_{\tilde{\gamma}(t)}.
\end{equation}
\end{definition}
The lift is unique due to the second condition in Definition \ref{def:ehrensmann}. Locally, the lift is determined by the horizontal vector fields of the connection.
From the connection, it is straightforward to define the curvature form, which measures the non-integrability of $\mathscr{H}$.

\begin{example}
    Return to the qubit example. The quantum thermodynamic state space $\mathcal{M} \cong \mathbb{R}^3$ has coordinates $(S, a, \lambda)$. We choose the function $\mu(S, a, \lambda) = \lambda$. The fibers $F_{\rho_\lambda} = \Xi^{-1}(\rho_\lambda)$ are the level sets of $\mu$, which are 2-dimensional submanifolds diffeomorphic to $\mathbb{R}^2$ with coordinates $S, a$ at fixed $\lambda$. The equilibrium submanifold $\mathcal{E}$ is the 1-dimensional Legendrian curve parameterized by $\lambda$
\begin{equation}
a(\lambda) = \tanh\lambda, \ S(\lambda) = \log(2\cosh\lambda) - \lambda \tanh\lambda.
\end{equation}
The contact form vanishes on $\mathcal{E}$, since
\begin{equation}
\frac{da}{d\lambda} = \sech^2\lambda, \ \frac{dS}{d\lambda} = \lambda \, \sech^2\lambda,
\end{equation}
so $dS = \lambda da$, as expected.

The Bures--Wasserstein metric on $\mathcal{E}$ is the quantum Fisher information matrix. The symmetric logarithmic derivative is directly $L = \partial_\lambda \log \rho_\lambda$, which satisfies $\rho_\lambda L + L \rho_\lambda = 2 \partial_\lambda \rho_\lambda$, where $L = -\sigma_z + \tanh\lambda \ \mathrm{Id}$. The metric component is then
\begin{equation}
g_{\lambda\lambda}(\lambda) = \tr(\rho_\lambda L^2) = 1 - \tanh^2\lambda = \sech^2\lambda.
\end{equation}
For a quasistatic path $\gamma(t) = (S,a,\lambda(t))$ on $\mathcal{E}$, the thermodynamic length is
\begin{equation}
L(\gamma) = \int_0^T \sqrt{g_{\lambda\lambda}(\lambda(t))  (\lambda'(t))^2}  dt = \int_0^T |\lambda'(t)| \sech(\lambda(t))  dt.
\end{equation}
Geodesics minimize this length and correspond to optimal quasistatic transformations.

For non-equilibrium dynamics, consider a point $p = (S_0, a_0, \lambda_0) \in F_{\rho_{\lambda_0}} \setminus \mathcal{E}$. A relaxation path within the fiber is $\gamma(t) = (S(t), a(t), \lambda_0)$ with, e.g., a linear relaxation dependence
\begin{equation}
S'(t) = -\kappa_S (S(t) - S(\lambda_0)), \quad a'(t) = -\kappa_a (a(t) - a(\lambda_0)),
\end{equation}
$\kappa_S, \kappa_a > 0$, converging exponentially to the equilibrium point as $t \to \infty$. To quantify the dissipation, the metric on $\mathcal{M}$ reads
\begin{equation}
g_{\mathcal{M}} = C^{-1}  dS^2 + \chi^{-1}  da^2 + g_{\lambda\lambda}  d\lambda^2 + \alpha \lambda (dS \otimes d\lambda + d\lambda \otimes dS),
\end{equation}
where $C > 0$ and $\chi > 0$ are capacity- and susceptibility-like functions (e.g., $C = \sech^2\lambda$, $\chi = 1$), and $\alpha \in \mathbb{R}$ controls coupling. On $\mathcal{E}$, $dS = \lambda  da$ ensures $g_{\mathcal{M}}|_{\mathcal{E}} = g_{\lambda\lambda}  d\lambda^2 = g_{\mathrm{BW}}$. Restricted to the fiber ($d\lambda = 0$),
\begin{equation}
g_{\mathcal{M}}|_{F_{\rho_{\lambda_0}}} = C^{-1}  dS^2 + \chi^{-1}  da^2.
\end{equation}
The thermodynamic length of the relaxation path is
\begin{equation}
L(\gamma) = \int_0^\infty \sqrt{ C^{-1} (S')^2 + \chi^{-1} (a')^2 }  dt.
\end{equation}
Assuming constant $C, \chi$ and $\kappa_S = \kappa_a = \kappa$,
\begin{equation}
L(\gamma) = \sqrt{ \frac{(S_0 - S(\lambda_0))^2}{C} + \frac{(a_0 - a(\lambda_0))^2}{\chi} }.
\end{equation}
The instantaneous entropy production rate is $\varsigma(t) = \kappa g_{\mathcal{M}}(\gamma', \gamma')$, and the entropy production reads
\begin{equation}
\Sigma = \kappa \left( \frac{(S_0 - S(\lambda_0))^2}{2C} + \frac{(a_0 - a(\lambda_0))^2}{2\chi} \right) > 0,
\end{equation}
vanishing only at equilibrium.

After relaxation, quasistatic transformations between $\lambda_0$ and $\lambda_1$ follow geodesics on $\mathcal{E}$, with length
\begin{equation}
L = \int_{\lambda_0}^{\lambda_1} \sech u \ du = \arctan(\sinh\lambda_1) - \arctan(\sinh\lambda_0),
\end{equation}
minimal and reversible (zero excess entropy production) by the properties of the Riemannian manifold $(\mathcal{B}, g_{\mathrm{BW}})$.

This simple example captures the full bundle geometry: fibers encode non-equilibrium redundancy, equilibrium is the unique gauge-fixed section, quasistatic processes are horizontal geodesics, and relaxation incurs geometric dissipation bounded by the pseudo-Riemannian structure.
\end{example}

\subsection{Curvature}
\begin{definition}\label{def:curvature_form}
For horizontal vector fields $X, Y \in  \mathfrak{X}(\mathscr{H})$, the \emph{curvature} is
\begin{equation} R(X,Y) = \proj_{\mathscr{V}} ([X, Y]) \in  \mathfrak{X}(\mathscr{V}), \end{equation}
where $[X,Y]$ is the Lie bracket and $\proj_{\mathscr{V}}: T\mathcal{M} \to \mathscr{V}$ is the projection onto the vertical subbundle.
\end{definition}
The curvature is a $\mathscr{V}$-valued two-form, expressed via the covariant derivative of the connection $\nabla$: $R(X,Y) = \nabla_X Y - \nabla_Y X - [X,Y]$. In a local trivialization of the bundle, let $\mathcal{B}$ have coordinates $\{b_k\}_{k=1}^n$, and fibers have coordinates $\{v_\alpha\}_{\alpha=0}^{n}$, so $\mathcal{M}$ has coordinates $(b_k, v_\alpha)$, and $\Xi(b_k, v_\alpha) = \sigma(b_k)$ 
The vertical subbundle is $\mathscr{V} = \mathrm{span}\{\partial_{v_\alpha}\}_{\alpha=0}^{n}$. Horizontal vector fields are
\begin{equation}
\label{horvector}
e_k = \dfrac{\partial}{\partial b_k} - \sum_{\alpha=0}^{n} \Gamma_{b_k}^{v_\alpha} \dfrac{\partial}{\partial  v_\alpha},
\end{equation}
where $\Gamma_{b_k}^{v_\alpha}$ are connection coefficients satisfying $g_{\mathcal{M}}(e_k, \partial_{v_\beta}) = 0$, for all $\beta=0,\ldots,n$. The curvature components are
\begin{equation} R(e_k, e_l) = \proj_{\mathscr{V}} ([e_k, e_l]). \end{equation}
The Lie bracket gives the general expression
\begin{equation} \label{bracket}
\begin{split}
[e_k, e_l] &= \left[ \dfrac{\partial}{\partial b_k} - \sum_{\alpha} \Gamma_{b_k}^{v_\alpha} \dfrac{\partial}{\partial  v_\alpha}, \dfrac{\partial}{\partial b_l} - \sum_{\alpha} \Gamma_{b_l}^{v_\alpha} \dfrac{\partial}{\partial  v_\alpha} \right] \\
&= \sum_{\alpha} \left( \dfrac{\partial}{\partial b_k} \Gamma_{b_l}^{v_\alpha} - \dfrac{\partial}{\partial b_l} \Gamma_{b_k}^{v_\alpha} + \sum_{\beta} [\Gamma_{b_k}^{v_\beta}, \Gamma_{b_l}^{v_\alpha}] \right) \dfrac{\partial}{\partial  v_\alpha}, 
\end{split}
\end{equation}
where the commutator is in the Lie algebra of the fiber structure group. The curvature two-form is
\begin{equation} R = \sum_{\alpha} \sum_{k,l} R_{b_k b_l}^{v_\alpha} db_k \wedge db_l \otimes \dfrac{\partial}{\partial  v_\alpha}. 
\end{equation}
\subsection{Holonomy}
Holonomy arises when parallel transporting along a closed loop in $\mathcal{B}$.
\begin{definition}\label{def:holonomy}
For a closed loop $\gamma: S^1 \to \mathcal{B}$ based at $\sigma \in \mathcal{B}$, with horizontal lift $\tilde{\gamma}: S^1 \to \mathcal{M}$ starting at $p_0 \in \Xi^{-1}(\sigma)$, the \emph{holonomy} is the vertical displacement
\begin{equation} \mathrm{Hol}(\gamma, p_0) = \tilde{\gamma}(1) - p_0 \in \mathscr{V}_{p_0}, \end{equation}
where $\tilde{\gamma}(1) \in \Xi^{-1}(\sigma)$.
\end{definition}
The idea is that for a closed loop $\gamma$ in $\mathcal{B}$, the horizontal lift starting at $p_0 \in \Xi^{-1}(\sigma)$ may end at a different point $p_1 \equiv \tilde{\gamma}(1) \in \Xi^{-1}(\sigma)$, with the displacement $h = p_1 - p_0 \in \mathscr{V}_{p_0}$. The holonomy is an element of the fiber structure group, acting as a translation in the thermodynamic variables.

\begin{proposition}\label{proposition:holonomy}
For a closed loop $\gamma$ bounding a surface $\mathscr{S} \subset \mathcal{B}$, the holonomy is
\begin{equation} \mathrm{Hol}(\gamma, p_0) = \mathcal{P}\exp\left(- \int_\mathscr{S} R \right), \end{equation}
where the integral is the fiber-valued integral of the curvature form over $\mathscr{S}$ for the abelian case.
\end{proposition}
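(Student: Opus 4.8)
The plan is to prove the statement as an instance of the non-abelian Stokes theorem, reducing the holonomy to a path-ordered exponential of the connection one-form and then converting that boundary line integral into a surface integral of the curvature.

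First I would pass to a local trivialization in which $\mathscr{H}$ is spanned by the horizontal fields $e_k$ of Eq.~\eqref{horvector}, so the connection is encoded by the Lie-algebra-valued one-form $A=\sum_{k,\alpha}\Gamma^{v_\alpha}_{b_k}\,db_k\otimes\partial_{v_\alpha}$. The horizontal-lift condition $\tilde{\gamma}'(t)\in\mathscr{H}_{\tilde{\gamma}(t)}$ of Definition~\ref{def:horizontal_lift}, written in fiber coordinates, becomes the linear transport equation $\dot{U}(t)=-A(\dot{\gamma}(t))\,U(t)$ with $U(0)=\mathrm{id}$, whose unique solution is $U(t)=\mathcal{P}\exp\!\big(-\int_0^t A\big)$. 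By Definition~\ref{def:holonomy} the holonomy is then the action of $U(1)=\mathcal{P}\exp\!\big(-\oint_\gamma A\big)$ on $p_0$, expressed as the vertical displacement $\tilde{\gamma}(1)-p_0$. Thus the problem is reduced to evaluating the loop-ordered exponential of $A$.

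Next I would fill the loop by a smooth homotopy $\Phi:[0,1]^2\to\mathcal{D}$ with $\Phi(1,\cdot)=\gamma$, $\Phi(0,\cdot)$ the constant loop at $\sigma$, $\Phi(s,0)=\Phi(s,1)=\sigma$, and $\Phi([0,1]^2)=\mathscr{S}$. Writing $W(s,t)$ for the partial transport along $\Phi(s,\cdot)$ from $0$ to $t$ and $g(s)=W(s,1)$ for the holonomy of the $s$-th loop, I would differentiate $W$ in $s$ using the standard variation formula for path-ordered exponentials. Invoking equality of mixed partials $\partial_s\partial_t\Phi=\partial_t\partial_s\Phi$ together with the structure equation $R=dA+A\wedge A$, and integrating by parts in $t$ (the boundary terms vanish since $\Phi(s,0)=\Phi(s,1)=\sigma$ is fixed), the $s$-derivative of the transport collapses to the curvature evaluated on $(\partial_s\Phi,\partial_t\Phi)$ and conjugated back to the basepoint by $W$. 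This yields the first-order equation
\begin{equation}
\frac{dg}{ds} = -\Big(\int_0^1 W(s,t)^{-1}\,R\big(\partial_s\Phi,\partial_t\Phi\big)\,W(s,t)\,dt\Big)\,g(s),\qquad g(0)=\mathrm{id}.
\end{equation}
Integrating this ODE in $s$ from $0$ to $1$ and recognizing the accumulated double integral $\int_0^1\!ds\int_0^1\!dt\,R(\partial_s\Phi,\partial_t\Phi)$ as the surface integral of the curvature two-form over $\mathscr{S}$ gives $g(1)=\mathcal{P}\exp\!\big(-\int_\mathscr{S} R\big)$, which is the claim.

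The main obstacle is the bookkeeping of non-commutativity. Because the structure group need not be abelian---the curvature \eqref{bracket} carries the bracket $[\Gamma^{v_\beta}_{b_k},\Gamma^{v_\alpha}_{b_l}]$---the curvature contribution of each surface element must be parallel-transported to the common basepoint before being summed, which is exactly what the conjugation by $W(s,t)^{-1}$ and the outer ordering symbol $\mathcal{P}$ encode; the symbol $\int_\mathscr{S}R$ in the statement is understood in this surface-ordered sense. Getting the structure equation, the conjugation, and the orientation and sign of $R$ mutually consistent is the delicate step, after which the remainder is the integration of a linear ODE. As a consistency check I would note that when the fibers act by translations, as in our bundle, the bracket drops out, $\mathcal{P}$ becomes trivial, and the formula reduces to the ordinary abelian Stokes theorem, giving the vertical displacement $\mathrm{Hol}(\gamma,p_0)=-\int_\mathscr{S}R$.
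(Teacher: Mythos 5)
Your proposal is correct and reaches the same endpoint, but it does genuinely more work than the paper's own proof. The paper \emph{asserts} the identity $\mathrm{Hol}(\gamma,p_0)=\mathcal{P}\exp\bigl(-\oint_\gamma\omega\bigr)$ as a known fact, and then justifies the surface form only in the abelian case, where ordinary Stokes applied to $\oint_\gamma\omega=\int_\mathscr{S}d\omega$ with $\omega\wedge\omega=0$ immediately gives $\exp\bigl(-\int_\mathscr{S}R\bigr)$; the non-abelian version of the statement is left unproved. You instead (i) derive the path-ordered exponential from first principles by writing the horizontal-lift condition of Definition~\ref{def:horizontal_lift} as the linear transport ODE $\dot U=-A(\dot\gamma)U$, and (ii) prove the passage from the boundary integral to the surface integral via the standard homotopy-variation argument for the non-abelian Stokes theorem, correctly noting that $\int_\mathscr{S}R$ must then be read as a surface-ordered integral with each curvature element conjugated back to the basepoint by the partial transports $W(s,t)$. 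This buys an actual proof of the proposition as stated in the general case, at the cost of the sign/conjugation bookkeeping you flag; the paper's route is shorter but only establishes the abelian specialization, which is admittedly the only case used later (the fiber is $(\mathbb{R}^{n+1},+)$). One caveat that applies equally to both arguments: reducing the lift to a \emph{linear} ODE, and hence to a path-ordered exponential, tacitly assumes the connection coefficients $\Gamma^{v_\alpha}_{b_k}$ take values in the Lie algebra of a structure group acting on the fiber, which for a general Ehresmann connection need not hold; this is implicit already in the commutator appearing in Eq.~\eqref{bracket}, so you are consistent with the paper's standing assumptions. Your closing consistency check, that the abelian case collapses to the vertical displacement $-\int_\mathscr{S}R$, matches the paper's conclusion exactly.
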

\begin{proof}
The connection one-form $\omega \in \Lambda^1(\mathcal{B})$ satisfies $\nabla_X Y = [X, Y] - \omega(X) Y$ for horizontal vector fields $X, Y.$ The curvature is $R = d\omega + \omega \wedge \omega$, such that for a loop $\gamma$, the holonomy is the path-ordered exponential
\begin{equation}\label{eqn:genera_holonomy}
\mathrm{Hol}(\gamma, p_0) = \mathcal{P}\exp \left( - \oint_\gamma \omega \right).
\end{equation}
where $\mathcal{P}$ denotes path ordering along $\gamma$, and $\omega$ is the $\mathfrak{g}$-valued connection 1-form.
For abelian structure groups (e.g., translations in $\mathbb{R}^{n+1}$), this simplifies to
\begin{equation} \mathrm{Hol}(\gamma, p_0) = \exp \left(- \oint_\gamma \omega \right) =\exp\left( - \int_\mathscr{S} d\omega\right) = \exp \left( - \int_\mathscr{S} R \right),
\end{equation}
by Stokes theorem, assuming $\omega \wedge \omega = 0$. For later interest, in the quantum thermodynamic fiber, the fiber is diffeomorphic to $\mathbb{R}^{n+1}$, with abelian translations, so the result applies.
\end{proof}
\begin{remark}
The path ordering is essential since $[R(t_1), R(t_2)] \neq 0$ in general, for $t_1 \neq t_2$, for all $t_1, t_2 \in S^1$. 
The result depends on the homotopy class of $\gamma$ within $\pi_1(\mathcal{B})$.

Even in the non-abelian case, holonomy represents a geometric phase acquired during a cyclic evolution of the parameters $\bar{\boldsymbol{\lambda}}$. Upon returning to the same quantum state $\sigma$, the thermodynamic labels $(\bar{S}, \bar{\boldsymbol{a}})$ are transformed by a non-trivial group element $h \neq e$, for all $h \in G$, and $e$ is the identity element of $G$. 
This transformation is not removable by local reparameterization and induces irreversible entropy production when the system is forced to return to its initial thermodynamic state. 

We shall see in the following section that the fiber is $\mathbb{R}^{n+1}$ with additive group structure, which is abelian. The non-abelian generalization would arise if the thermodynamic labels were subject to a non-commutative redundancy (e.g., in systems with internal symmetries or constrained ensembles), but such extensions lie beyond the current scope.
\end{remark}

\begin{proposition}\label{prop:flat_connection}
If the connection coefficients vanish in a trivialization, then $R = 0$, and the holonomy of any closed loop is zero.
\end{proposition}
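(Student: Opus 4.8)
The plan is to read off the result directly from the local formulas in Eq. (\ref{horvector}) and Eq. (\ref{bracket}). First I would set the connection coefficients $\Gamma_{b_k}^{v_\alpha} = 0$ for all $k$ and $\alpha$ in the given trivialization, so that the horizontal vector fields of Eq. (\ref{horvector}) collapse to the coordinate fields $e_k = \partial/\partial b_k$. Since coordinate vector fields commute, $[e_k, e_l] = [\partial/\partial b_k, \partial/\partial b_l] = 0$; equivalently, every term in the bracket expression (\ref{bracket}) carries at least one factor of $\Gamma$ (either a partial derivative $\partial_{b}\Gamma$ or a commutator $[\Gamma,\Gamma]$) and therefore vanishes identically. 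By Definition \ref{def:curvature_form}, the curvature is then $R(e_k, e_l) = \proj_{\mathscr{V}}([e_k, e_l]) = \proj_{\mathscr{V}}(0) = 0$. Because the frame $\{e_k\}$ spans $\mathscr{H}$ pointwise and $R$ is $C^\infty$-bilinear and antisymmetric in its horizontal arguments, vanishing on this frame is enough to conclude $R \equiv 0$ on the domain of the trivialization.

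For the holonomy I would give the direct lift argument and then cross-check it against Proposition \ref{proposition:holonomy}. With $\Gamma = 0$, the horizontal lift condition $\tilde{\gamma}'(t) \in \mathscr{H}_{\tilde{\gamma}(t)} = \operatorname{span}\{\partial_{b_k}\}$ of Definition \ref{def:horizontal_lift} forces the vertical component of $\tilde{\gamma}'$ to vanish, i.e. the fiber coordinates satisfy $\dot{v}_\alpha(t) = 0$ and hence stay constant along the lift. For a closed loop $\gamma: S^1 \to \mathcal{D}$ with $\gamma(0) = \gamma(1) = \sigma$, the lift therefore returns to $\tilde{\gamma}(1) = (b(0), v(0)) = p_0$, so that $\mathrm{Hol}(\gamma, p_0) = \tilde{\gamma}(1) - p_0 = 0$ by Definition \ref{def:holonomy}. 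This is consistent with Proposition \ref{proposition:holonomy}: since the fiber is $\mathbb{R}^{n+1}$ with abelian (translation) structure group, $\mathrm{Hol}(\gamma, p_0) = \exp\left(-\int_{\mathscr{S}} R\right) = \exp(0)$, the identity element (zero vertical displacement), for any surface $\mathscr{S}$ bounded by $\gamma$.

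This statement is essentially routine, so there is no substantial obstacle; the only points requiring a moment's care are organizational. The first is the passage from ``$R$ vanishes on the distinguished frame $\{e_k\}$'' to ``$R$ vanishes as a vertical-valued two-form,'' which I would justify by the tensoriality noted above. The second is that the hypothesis is phrased in a \emph{single} trivialization, so strictly the conclusions $R = 0$ and trivial holonomy hold on the region covered by that trivialization; in the present setting this is harmless, since the global chart $\mathcal{M} \cong \mathbb{R}^{2n+1}$ provides a single trivialization covering all of $\mathcal{M}$, and loops contractible within that region bound a surface $\mathscr{S}$ on which the curvature integral is identically zero.
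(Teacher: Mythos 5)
Your proof is correct and follows essentially the same route as the paper's: set $\Gamma = 0$, observe that the Lie bracket of horizontal fields has no vertical component so $R = 0$, and conclude that the lift of a closed loop returns to its starting point. The only cosmetic differences are that you argue on the coordinate frame $\{e_k\}$ and invoke tensoriality where the paper brackets general horizontal fields $X = \sum_k X^k \partial_{b_k}$, and that you derive trivial holonomy directly from the constancy of the fiber coordinates along the lift rather than from $\omega = 0$ in the path-ordered exponential --- both equally valid, and your care with the single-trivialization caveat and with $\exp(0)$ being the identity (zero translation) is if anything slightly more precise than the paper's.
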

\begin{proof}
If the connection coefficients $\Gamma = 0$, horizontal vector fields are $X = \sum_k X^k \partial_{b_k}$. The Lie bracket is
\begin{equation}[X,Y] = \sum_{k,l} \left(X^k \dfrac{\partial}{\partial  b_k} Y^l - Y^k \dfrac{\partial}{\partial  b_k} X^l\right) \dfrac{\partial}{\partial  b_l},
\end{equation}
which lies in $\mathscr{H}$, so $\proj_{\mathscr{V}}([X,Y]) = 0$, hence $R = 0$. The connection one-form $\omega = 0$, so $d\omega = 0$, and
\begin{equation} \mathrm{Hol}(\gamma, p_0) = \exp \left(- \oint_\gamma \omega \right)= 0.\end{equation}
\end{proof}
Some corollaries of particular relevance for quantum thermodynamics are the following:
\begin{corollary}
Any cyclic thermodynamic process corresponding to a closed loop $\gamma$ in $\mathcal{B}$ with non-zero holonomy $\mathrm{Hol}(\gamma, p_0) \neq 0$ induces geometric irreversibility, contributing to net entropy production $\Sigma \neq 0$, unless the connection is flat, reflecting irreversibility due to curvature.
\end{corollary}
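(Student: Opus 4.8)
The plan is to realize the cyclic process as a concatenation of the horizontal lift of the base loop with a vertical closing segment, and to locate the entire irreversibility in the latter, whose non-triviality is forced precisely by the holonomy. First I would take the closed loop $\gamma: S^1 \to \mathcal{D}$ based at $\sigma$ and form its horizontal lift $\tilde{\gamma}: S^1 \to \mathcal{M}$ with $\tilde{\gamma}(0) = p_0 \in \Xi^{-1}(\sigma)$, which exists and is unique by Definition~\ref{def:horizontal_lift} together with the isomorphism condition in Definition~\ref{def:ehrensmann}. By Definition~\ref{def:holonomy} one has $\tilde{\gamma}(1) = p_0 + h =: p_1$ with $h := \mathrm{Hol}(\gamma, p_0) \in \mathscr{V}_{p_0}$, and by Proposition~\ref{proposition:holonomy} this displacement is governed by the curvature, $h = \mathcal{P}\exp(-\int_\mathscr{S} R)$. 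In the local trivialization where $\mathscr{H}_p = \operatorname{span}\{\partial_{\lambda_i}\}$ and $\mathscr{V}_p = \operatorname{span}\{\partial_S, \partial_{a_i}\}$, a direct evaluation gives $\eta(\partial_{\lambda_i}) = 0$, so $\mathscr{H} \subset \ker\eta$ and the horizontal traversal is a reversible contact motion with $\eta(\tilde{\gamma}'(t)) = 0$ for all $t$; in the quasistatic limit its metric dissipation $\kappa \int g_{\mathcal{M}}(\tilde{\gamma}', \tilde{\gamma}')\,dt$ vanishes.

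Next I would observe that when $h \neq 0$ the lift fails to close in $\mathcal{M}$: the quantum state has returned to $\sigma$, but the thermodynamic labels $(S,\boldsymbol{a})$ have been displaced within the fiber $F_\sigma$. To obtain a genuine thermodynamic cycle one must append a vertical closing path $\delta: [0,T'] \to F_\sigma$ from $p_1$ back to $p_0$, whose tangent lies in $\mathscr{V}$. Since $g_{\mathcal{M}}|_{\mathscr{V}}$ is positive-definite—the standing hypothesis under which the Ehresmann connection was built—$\delta$ joins the distinct points $p_1 \neq p_0$ with strictly positive length, so for any finite-speed realization $\Sigma_\gamma \geq \kappa \int_0^{T'} g_{\mathcal{M}}(\delta', \delta')\,dt > 0$, the lower bound being controlled by the vertical separation $\|h\|_{g_{\mathcal{M}}|_{\mathscr{V}}}$. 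The converse is immediate from Proposition~\ref{prop:flat_connection}: a flat connection forces $R = 0$, hence $h = 0$ and $p_1 = p_0$, so no closing segment is required and the horizontal loop alone is reversible, tying the irreversibility directly to curvature.

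The main obstacle is that the metric dissipation $\kappa \int g_{\mathcal{M}}(\gamma', \gamma')\,dt$ is speed-dependent and can be driven to zero by traversing $\delta$ infinitely slowly, so strict positivity of $\Sigma_\gamma$ is not by itself a speed-independent signature of the holonomy. I would resolve this by passing to the reparametrization-invariant uncompensated-entropy functional $\oint \eta$ evaluated on the full cycle $\tilde{\gamma} \ast \delta$. Because $\mathscr{H} \subset \ker\eta$, the horizontal part contributes nothing; along the vertical closing, where $\lambda_i = c_i$ is fixed and $d\lambda_i = 0$, one computes $\int_\delta \eta = -h_S + \sum_i c_i h_{a_i} = -\eta_{p_0}(h)$, so that $\oint_{\tilde{\gamma}\ast\delta} \eta = -\eta_{p_0}(h)$. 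Whenever $\eta(h) \neq 0$—the generic case, since $h \in \mathscr{V}$ is transverse to $\ker\eta$ as $\eta(\partial_S) = 1 \neq 0$—this is a non-zero, speed-independent witness that the cycle cannot lie in $\ker\eta$ and is therefore irreversible, while the strictly positive finite-speed dissipation $\Sigma_\gamma > 0$ follows as above. Both quantities collapse exactly when $R = 0$, which is the content the corollary asserts.
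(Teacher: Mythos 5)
The paper offers essentially no proof of this corollary: it is asserted as an immediate consequence of Proposition~\ref{proposition:holonomy} and the remark that a non-closing horizontal lift must be ``corrected'' by a dissipative vertical displacement. Your central idea --- decompose the cycle as horizontal lift plus vertical closing segment, and extract $\Sigma_\gamma>0$ from the strictly positive length of the closing segment under the positive-definite metric $g_{\mathcal{M}}|_{\mathscr{V}}$ --- is exactly the paper's implicit reasoning, made explicit. Your observation that $\Sigma_\gamma=\kappa\int g_{\mathcal{M}}(\gamma',\gamma')\,dt$ is speed-dependent and can be driven to zero by slowing down is a legitimate criticism that the paper simply does not address, and supplementing it with a reparametrization-invariant witness is a genuine improvement in spirit.

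There is, however, a concrete gap in your execution: the claim that $\mathscr{H}\subset\ker\eta$. You justify it by working in the trivialization $\mathscr{H}_p=\operatorname{span}\{\partial_{\lambda_i}\}$, but by Proposition~\ref{prop:flat_connection} that is precisely the trivialization with vanishing connection coefficients, hence $R=0$ and $\mathrm{Hol}(\gamma,p_0)=0$ --- the hypothesis of the corollary cannot be satisfied there. For the metric-induced connection that actually carries curvature, the horizontal lifts are $e_k=\partial_{\lambda_k}-(h_k/g_S)\,\partial_S$ with $\Gamma^k_0=h_k/g_S$ (Eq.~\eqref{eq:Gamma_explicit}), so
\begin{equation}
\eta(e_k)=-\frac{h_k}{g_S}\neq 0
\end{equation}
generically; indeed non-flatness requires the $1$-form $\sum_k (h_k/g_S)\,d\lambda_k$ to be non-closed, hence non-zero. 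Consequently the horizontal leg is neither a contact ($\ker\eta$) motion nor automatically dissipation-free, and the step $\int_{\tilde{\gamma}}\eta=0$ in your $\oint\eta$ computation fails, so the clean identity $\oint_{\tilde{\gamma}\ast\delta}\eta=-\eta_{p_0}(h)$ does not hold as stated. A second, smaller gap: your witness requires $\eta(h)\neq 0$, which you concede is only generic; in this paper's explicit setting it can be rescued, since the curvature acts purely in the $\partial_S$ direction, so $h=h_S\,\partial_S$ and $\eta(h)=h_S\neq 0$ whenever $h\neq 0$ --- but you would need to invoke that computation rather than wave at genericity. The first paragraph of your argument (vertical closing segment of positive vertical length, and the flat case via Proposition~\ref{prop:flat_connection}) stands and already delivers the corollary as the paper intends it; the refinement needs repair.
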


\begin{corollary}
If the connection is flat $(R = 0)$, cyclic processes in $\mathcal{B}$ are geometrically reversible: the horizontal lift of $\gamma$ returns to the initial point in the fiber, with zero geometric entropy production $\Sigma = 0$.
\end{corollary}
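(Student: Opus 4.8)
The plan is to obtain this corollary directly from Proposition~\ref{prop:flat_connection} and the holonomy formula of Proposition~\ref{proposition:holonomy}, supplemented only by the interpretation of vertical displacement as accumulated thermodynamic irreversibility. First I would invoke Proposition~\ref{prop:flat_connection}: the hypothesis $R = 0$ feeds into the abelian expression $\mathrm{Hol}(\gamma, p_0) = \exp\left(- \int_\mathscr{S} R\right)$, yielding $\mathrm{Hol}(\gamma, p_0) = 0$ for every closed loop $\gamma \subset \mathcal{D}$ bounding a surface $\mathscr{S}$. By Definition~\ref{def:holonomy}, this holonomy is precisely the vertical displacement $\tilde{\gamma}(1) - p_0 \in \mathscr{V}_{p_0}$ of the unique horizontal lift $\tilde{\gamma}$ furnished by Definition~\ref{def:horizontal_lift}; hence $\tilde{\gamma}(1) = p_0$, so the lift closes up and returns to its initial point in the fiber $F_\sigma$. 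This establishes the geometric reversibility claim.

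Next I would translate the trivial holonomy into the vanishing of the geometric entropy production. Since the horizontal lift returns to $p_0$, the net change in the thermodynamic labels $(S, \boldsymbol{a})$ over the cycle is zero: every vertical coordinate returns to its initial value, so the process leaves no residual imprint in the fiber. The holonomic contribution to $\Sigma_\gamma$ is exactly the entropy accumulated through this net vertical displacement; with zero displacement it vanishes identically, giving $\Sigma_\gamma = 0$ for the geometric, curvature-induced part. This is the sharp contrast with the preceding corollary, where $\mathrm{Hol}(\gamma, p_0) \neq 0$ forces a strictly positive net production.

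I expect the main obstacle to be conceptual rather than computational: one must carefully distinguish the curvature-induced (holonomic) entropy production, which is what this corollary controls, from the finite-speed dissipation $\kappa \int_0^T g_{\mathcal{M}}(\gamma', \gamma')\,dt$ that is generically positive even when the connection is flat. The cleanest way to handle this is to observe that for $R = 0$ the horizontal distribution $\mathscr{H}$ is integrable by Frobenius' theorem, so the horizontal lift lies entirely within a single integral leaf transverse to the fibers; traversing a closed loop in $\mathcal{D}$ therefore transports $p_0$ within that leaf and back to itself, leaving no net vertical transport. Identifying $\Sigma_\gamma$ with the irreversible surplus attributable to holonomy then yields $\Sigma_\gamma = 0$ unambiguously, completing the proof.
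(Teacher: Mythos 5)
Your argument is correct and follows essentially the route the paper intends: the corollary is stated there without proof, as an immediate consequence of Proposition~\ref{prop:flat_connection} (flatness/vanishing connection coefficients give trivial holonomy) together with Definition~\ref{def:holonomy}, under which trivial holonomy means the horizontal lift closes up, so there is no net vertical shift in $(S,\boldsymbol{a})$ and hence no curvature-induced entropy production. Your two added refinements --- restricting the Stokes argument to loops bounding a surface and separating the holonomic contribution to $\Sigma_\gamma$ from the finite-speed dissipation $\kappa\int_0^T g_{\mathcal{M}}(\gamma',\gamma')\,dt$ --- are consistent with, and slightly sharper than, the paper's own subsequent remark on the homotopy dependence of holonomy.
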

This means cyclic processes (loops in parameter space like temperature) may return the quantum state but shift thermodynamic variables (e.g., entropy), leading to irreversibility---requiring dissipation to \emph{correct} the shift.

\begin{remark}
The holonomy $\mathrm{Hol}(\gamma, p_0)$ is a functional not of the specific path $\gamma: S^1 \to \mathcal{B}$, but of its homotopy class $[\gamma] \in \pi_1(\mathcal{B}, \sigma)$. That is, if $\gamma \sim \gamma'$ via a homotopy fixing the base point $\sigma$, then $\mathrm{Hol}(\gamma,p_0) = \mathrm{Hol}(\gamma',p_0)$. 
This follows from the horizontal nature of the lift: any deformation of $\gamma$ within $\mathcal{B}$ induces a corresponding deformation of $\tilde{\gamma}$ within $\mathcal{M}$ that remains horizontal, hence returns to the same fiber point after closure.

In the abelian case, this implies that holonomy defines a group homomorphism $\Phi: \pi_1(\mathcal{B}) \to G$, where $\Phi([\gamma]) = \mathrm{Hol}(\gamma, p_0)$, independent of base point $p_0$, since vertical translations commute. The image of $\Phi$ is generated by curvature integrals over a basis of 2-cycles in $H_2(\mathcal{B})$, via de Rham cohomology.

In the non-abelian case, holonomy defines a representation of the fundamental group $\rho^G: \pi_1(\mathcal{B}) \to G$, where conjugacy classes of $\rho^G([\gamma])$ are base-point independent, and the full holonomy group is the subgroup generated by such elements under concatenation of loops. Non-commutativity implies that the order of traversal matters: $\rho^G([\gamma_1 \cdot \gamma_2]) \neq \rho^G([\gamma_1]) \rho^G([\gamma_2])$ in general. 

Two cyclic protocols in intensive parameter space $\bar{\boldsymbol{\lambda}}(t)$ that return to the same Gibbs state $\sigma$ but traverse homotopically distinct paths accumulate different geometric entropy shifts, even in the quasistatic limit. This constitutes a topological contribution to irreversibility---unremovable by slowing down the process---and sets a fundamental lower bound on dissipation in cyclic processes operating over topologically non-trivial control manifolds. The \emph{topological irreversibility} is then tied to the concept of \emph{topological entropy production}. 

In the current framework, where $\mathcal{B}$ is contractible in many finite-dimensional models (e.g., full-rank Gibbs states), $\pi_1(\mathcal{B}) = 0$, and holonomy vanishes for contractible loops. However, in constrained ensembles, $\pi_1(\mathcal{B})$ may be non-trivial, opening the door to \emph{topological quantum thermodynamics}. 
These topological considerations are further expanded in Sec. \ref{subsec:topological} by elevating the bundle to a principal bundle structure.
\end{remark}

The vertical subbundle corresponds to shifts in thermodynamic labels at fixed quantum state, modeling internal adjustments without altering the density operator. On the other hand, the horizontal subbundle enables consistent evolution of labels as the state changes, akin to parallel transport in gauge theories.
The pseudo-Riemannian metric $ g_{\mathcal{M}}$ extends the Bures-Wasserstein metric off equilibrium, with a potentially indefinite signature along the entropy direction suggesting a \emph{timelike} character reminiscent of spacetime metrics. 
Curvature measures the non-integrability of horizontal fields, sourcing holonomy as a thermodynamic geometric phase in cyclic processes, which shifts labels (e.g., inducing an extra $\Delta \bar{S}$), and necessitates dissipative resets to close the thermodynamic cycle. Flat connections permit fully reversible cycles without geometric losses, whereas non-zero curvature introduces irreducible irreversibility even in quasistatic limits. The contact non-integrability and positive-definite aspects on $\ker(\eta)$ underpin the second law, ensuring some sort of non-zero entropy production. 
Therefore, the unification of thermodynamics with gauge-like principles offers analogs to effects like the Aharonov-Bohm phase for electrodynamics.
From a practical standpoint, the direct consequences involve geometric dissipation in cyclic processes in the quantum regime, such as quantum heat engines.

\section{Principal Bundle Structure}
\subsection{Principal bundle}
To elevate the quantum thermodynamic fiber bundle $(\mathcal{M}, \mathcal{B}, \Xi, F)$ to a \emph{principal bundle}, we recall the definition of a principal $G$-bundle. 
\begin{definition}
    A \emph{principal $G$-bundle} consists of a \emph{total space} $P$, a \emph{base manifold} $B$, a smooth surjective \emph{projection} $\pi: P \to B$, and a Lie group $G$ acting smoothly, freely, and transitively on the right on each fiber $\pi^{-1}(b) \cong G$, such that $B = P/G$ is the \emph{orbit space}. The bundle admits local trivializations $\pi^{-1}(U) \cong U \times G$ for an open cover $\{U_i \}$ of $B$, with transition functions taking values in $G$.
\end{definition}
\begin{theorem}[Principal $\mathbb{R}^{n+1}$-bundle structure]
The quantum thermodynamic fiber bundle $(\mathcal{M}, \mathcal{B},\\ \Xi, F)$ admits a principal $\mathbb{R}^{n+1}$-bundle structure.
\end{theorem}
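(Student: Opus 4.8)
The plan is to exhibit a smooth, free, and proper right action of the additive Lie group $G = \mathbb{R}^{n+1}$ on $\mathcal{M}$ whose orbits are exactly the fibers $F_\sigma = \Xi^{-1}(\sigma)$, and then to invoke the standard characterization of principal bundles: a free and proper action of a Lie group $G$ on a manifold $P$ makes $P \to P/G$ a principal $G$-bundle. Identifying the orbit space with $\mathcal{B}$ then yields the claim.

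First I would recall that $\Xi$ factors through $\boldsymbol{\mu} = (\mu_1, \dots, \mu_n) : \mathcal{M} \to \mathbb{R}^n$, since $\Xi(S, \boldsymbol{a}, \boldsymbol{\lambda}) = \rho_{\boldsymbol{\mu}(S,\boldsymbol{a},\boldsymbol{\lambda})}$ depends only on $\boldsymbol{\mu}$. By injectivity of the Gibbs map $\boldsymbol{c} \mapsto \rho_{\boldsymbol{c}}$ (the zeroth-law hypothesis), each fiber is the level set $F_\sigma = \boldsymbol{\mu}^{-1}(\boldsymbol{c})$ with $\sigma = \rho_{\boldsymbol{c}}$, a smooth $(n+1)$-dimensional submanifold diffeomorphic to $\mathbb{R}^{n+1}$ under the submersion hypothesis on $\Xi$. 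Then I would define the action using the vertical subbundle $\mathscr{V} = \ker d\Xi$, spanned by the $n+1$ coordinate vector fields $\{\partial_S, \partial_{a_1}, \dots, \partial_{a_n}\}$. These commute and, because $\mathcal{M} \cong \mathbb{R}^{2n+1}$ carries global coordinates, are complete; their joint time-$(t_0, \dots, t_n)$ flow therefore defines a smooth right $\mathbb{R}^{n+1}$-action. In the trivial extension $\mu_i = \lambda_i$ this is simply $(S, \boldsymbol{a}, \boldsymbol{\lambda}) \cdot (t_0, \dots, t_n) = (S + t_0, a_1 + t_1, \dots, a_n + t_n, \boldsymbol{\lambda})$. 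Freeness follows because a nontrivial translation has no fixed point, properness because the action is conjugate to ordinary translation on a Euclidean factor, and by construction the orbits are precisely the level sets of $\boldsymbol{\mu}$, i.e. the fibers.

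Next I would identify the quotient and verify local triviality. The orbit space $\mathcal{M}/G$ is the set of fibers, and $\Xi$ descends to a smooth bijection onto $\mathcal{B}$; since $\boldsymbol{\mu}$ is a submersion, this is a diffeomorphism, so $\mathcal{B} = \mathcal{M}/G$. The local trivializations $\phi : \Xi^{-1}(U) \to U \times \mathbb{R}^{n+1}$ already supplied by the fiber-bundle structure can be chosen $G$-equivariant, with $G$ acting by translation on the second factor; the resulting transition functions then take values in the translation subgroup, i.e. in $G = \mathbb{R}^{n+1}$ itself, which is exactly the data of a principal $\mathbb{R}^{n+1}$-bundle.

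The hard part will be the construction of the action in the general (nontrivial) extension, where the fibers are level sets of the possibly nonlinear map $\boldsymbol{\mu}$ rather than coordinate hyperplanes. There the identification $F_\sigma \cong \mathbb{R}^{n+1}$ is not canonical, and one must confirm that the chosen vertical frame generates an honest global group action with the fibers as orbits---not merely an integrable foliation---and that the $G$-equivariant trivializations glue with genuinely $G$-valued translation transition functions. Both facts rely essentially on the completeness and global commutativity of $\{\partial_S, \partial_{a_i}\}$, guaranteed here by $\mathcal{M} \cong \mathbb{R}^{2n+1}$; on a less trivial ambient manifold this completeness and the consistency of the gluing would be the genuine obstruction.
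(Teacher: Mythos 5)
Your proposal is correct and rests on the same core idea as the paper's proof: the structure group is $(\mathbb{R}^{n+1},+)$ acting by translation in the $(S,\boldsymbol{a})$ coordinates at fixed $\boldsymbol{\lambda}$. The packaging differs. The paper verifies directly that this action is free and transitive on each fiber, notes that it commutes with $\Xi$ because $\Xi$ factors through $\boldsymbol{\mu}$, and reads off local (in fact global) triviality from $\mathcal{M}\cong\mathbb{R}^{n+1}\times\mathbb{R}^n$; you instead invoke the quotient characterization (a free and proper action makes $P\to P/G$ a principal $G$-bundle) and then identify $\mathcal{M}/G$ with $\mathcal{B}$. Your route requires you to additionally establish properness and the $G$-equivariance of the trivializations, neither of which the paper needs to address explicitly, but it buys a statement that would survive on a less trivial total space. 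The most valuable part of your write-up is the caveat you raise at the end: the translation orbits are the coordinate planes $\{\boldsymbol{\lambda}=\mathrm{const}\}$, and these coincide with the fibers $F_\sigma=\boldsymbol{\mu}^{-1}(\boldsymbol{c})$ only when the extension functions $\mu_i$ depend on $\boldsymbol{\lambda}$ alone. The paper's proof asserts that $\boldsymbol{\mu}$ ``is invariant under translations in $(S,\boldsymbol{a})$,'' which is exactly this restriction stated as if it were automatic; for a generic admissible extension $\mu_i=\lambda_i+f_i(S,\boldsymbol{a},\boldsymbol{\lambda})$ with $f_i|_{\mathcal{E}}=0$ it fails, and the action would not preserve the fibers. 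So you have correctly located the one genuine hypothesis both arguments need, and you should state it as an explicit assumption (or replace the coordinate frame $\{\partial_S,\partial_{a_i}\}$ by a commuting complete frame of $\ker d\boldsymbol{\mu}$, which is the nontrivial step you anticipate) rather than leave it as a remark.
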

\begin{proof}
Identify the fiber $F \cong \mathbb{R}^{n+1}$ with the additive Lie group $(\mathbb{R}^{n+1}, +)$. Define the right action of $G = \mathbb{R}^{n+1}$ on the total space $\mathcal{M}$ by
\begin{equation}
\label{eq:group_action}
(\bar{S}, \bar{\boldsymbol{a}}, \bar{\boldsymbol{\lambda}}) \cdot g = (\bar{S} + \delta \bar{S}, \bar{\boldsymbol{a}} + \delta \bar{\boldsymbol{a}}, \bar{\boldsymbol{\lambda}}),
\end{equation}
for $g = (\delta \bar{S}, \delta \bar{\boldsymbol{a}}) \in \mathbb{R}^{n+1}$. This action is \emph{free}, since $p \cdot g = p$ implies $g = (0, \boldsymbol{0})$, the identity. It is \emph{transitive} on each fiber $F_\sigma = \Xi^{-1}(\sigma)$, because for any two points $p_1 = (\bar{S}_1, \bar{\boldsymbol{a}}_1, \bar{\boldsymbol{\lambda}})$, $p_2 = (\bar{S}_2, \bar{\boldsymbol{a}}_2, \bar{\boldsymbol{\lambda}}) \in F_\sigma$, the group element $g = (\bar{S}_2 - \bar{S}_1, \bar{\boldsymbol{a}}_2 - \bar{\boldsymbol{a}}_1)$ satisfies $p_1 \cdot g = p_2$. The action preserves the fibers (fixed $\bar{\boldsymbol{\lambda}}$).

The base manifold $\mathcal{B}$ is the quotient $\mathcal{M} / \mathbb{R}^{n+1}$, with each orbit \footnote{The quotient map
$\mathcal{M} \rightarrow \mathcal{M}/\mathbb{R}^{n+1} =\mathcal{B}, \ p\;\mapsto\;[p]=\operatorname{Orb}(p)$
sends every point to its \emph{orbit}, defined as the set of all points in $\mathcal{M}$ that are reached from a given starting point by acting with the group $\mathbb{R}^{n+1}$. Since the group acts freely and transitively on each fiber, the orbit of any point $p\in\mathcal{M}$ is exactly the whole fiber that contains $p$. (i.e., each equivalence class) corresponding to a fiber $F_\sigma$. Local triviality follows from the global coordinate structure $\mathcal{M} \cong \mathbb{R}^{n} \times \mathbb{R}^{n+1}$, where the first factor parametrizes $\mathcal{B}$ and the second corresponds to the group $G$.} Thus, $(\mathcal{M}, \mathcal{B}, \Xi, \mathbb{R}^{n+1})$ is a smooth principal $\mathbb{R}^{n+1}$-bundle with abelian structure group.
\end{proof}

This principal structure interprets the redundancy in thermodynamic labeling as a \emph{gauge symmetry}: group actions correspond to translations in $(\bar{S}, \bar{\boldsymbol{a}})$ at fixed $\bar{\boldsymbol{\lambda}}$, preserving the physical quantum state $\sigma$ but altering its thermodynamic description---analogous to gauge freedom in field theories, where different potentials describe the same physical field.

\begin{definition}
A \emph{principal connection} (or \emph{connection $1$-form}, or \emph{gauge field}) on the principal $\mathbb{R}^{n+1}$-bundle $(\mathcal{M}, \mathcal{B}, \Xi, \mathbb{R}^{n+1})$ is a smooth $\mathfrak{g}$-valued $1$-form $
\omega \in \Lambda^1(\mathcal{M}; \mathfrak{g})$,
where $\mathfrak{g} = \mathbb{R}^{n+1}$ is the Lie algebra of the structure group $G = (\mathbb{R}^{n+1}, +)$, satisfying the following two properties:
\begin{enumerate}
    \item For all $g \in G$, $R_g^* \omega = \mathrm{Ad}_{g^{-1}} \circ \omega$ where $R_g: \mathcal{M} \to \mathcal{M}$, $p \mapsto p \cdot g$, is the right action of the group, and $\mathrm{Ad}: G \to \mathrm{Aut}(\mathfrak{g})$ is the adjoint representation.  
    Since $G$ is abelian, $\mathrm{Ad}_g Y = Y$ for all $g, Y \in \mathfrak{g}$, so the condition simplifies to
    \begin{equation}
        R_g^* \omega = \omega, \ \text{for all } g \in \mathbb{R}^{n+1}.
    \end{equation}
    This means the connection is \emph{invariant under global translations} in the fiber coordinates $(\bar{S}, \bar{\boldsymbol{a}})$.

    \item For all $X \in \mathfrak{g}$, let $\xi_X \in \mathfrak{X}(\mathcal{M})$ be the \emph{fundamental vector field} defined by
    \begin{equation}
    \xi_X(p) = \left. \frac{d}{dt} \right|_{t=0} \bigl( p \cdot \exp(tX) \bigr).
    \end{equation}
    Since $\exp(tX) = tX$ for $G = \mathbb{R}^{n+1}$, in coordinates we have $\xi_X = X^0 \partial_{\bar{S}} + \sum_{i=1}^n X^i \partial_{\bar{a}_i} $.
    The connection must satisfy
    \begin{equation}
        \omega(\xi_X) = X,\ \text{for all }  X \in \mathfrak{g}.
    \end{equation}
    This identifies vertical tangent vectors with infinitesimal group translations.
\end{enumerate}
\end{definition}

The principal $\mathbb{R}^{n+1}$-bundle structure $(\mathcal{M}, \mathcal{B}, \Xi, \mathbb{R}^{n+1})$ reveals that the redundancy in thermodynamic labeling: multiple values of $(\bar{S}, \bar{\boldsymbol{a}})$ corresponding to the same physical quantum state $\sigma \in \mathcal{B}$ at fixed intensive parameters $\bar{\boldsymbol{\lambda}}$ is a \emph{gauge symmetry}.

\begin{definition}
Let $\Xi: \mathcal{M} \to \mathcal{B}$ be the principal quantum thermodynamic bundle. A \emph{global gauge transformation} is a bundle automorphism $f: \mathcal{M} \to \mathcal{M}$,
i.e., a diffeomorphism satisfying:
\begin{enumerate}
    \item $\Xi \circ f = \Xi$,
    \item $f(p \cdot g) = f(p) \cdot g$, for all $p \in \mathcal{M}$, $g \in G$.
\end{enumerate}
The set of all such $f$ forms the gauge group $
\mathcal{G}(\mathcal{M}) = \mathrm{Aut}(\mathcal{M})$. A \emph{local gauge transformation} is a bundle automorphism on $\Xi^{-1}(U) \to U$, where $U \subset \mathcal{B}$.
\end{definition}

In local coordinates $( \bar{S}, \bar{\boldsymbol{a}},\bar{\boldsymbol{\lambda}})$ on $\mathcal{M}$, a gauge transformation $f \in \mathcal{G}(\mathcal{M})$ acts as
\begin{equation}
f(\bar{S}, \bar{\boldsymbol{a}}, \bar{\boldsymbol{\lambda}}) = \bigl( \bar{S} + \phi_{\bar{S}}(\bar{\boldsymbol{\lambda}}), \bar{\boldsymbol{a}} + \boldsymbol{\phi}_{\bar{\boldsymbol{a}}}(\bar{\boldsymbol{\lambda}}), \bar{\boldsymbol{\lambda}} \bigr),
\end{equation}
where $\phi_{\bar{S}}: \mathcal{B} \to \mathbb{R}$ and $\boldsymbol{\phi}_{\bar{\boldsymbol{a}}}: \mathcal{B} \to \mathbb{R}^n$ are smooth functions. This translates thermodynamic labels within each fiber while preserving the physical state: $\Xi(f(p)) = \Xi(p) = \sigma$; and the intensive parameters: $\bar{\boldsymbol{\lambda}}(f(p)) = \bar{\boldsymbol{\lambda}}(p)$.

Assuming the injectivity of the Gibbs map $  \boldsymbol{\lambda}\mapsto\rho_{\boldsymbol{\lambda}}  $ (see Remark~\ref{remark:gibbs_injectivity_zeroth}), the equilibrium submanifold $  \mathcal{E}  $ intersects each fiber $  F_\sigma  $ at exactly one point. Equilibrium states therefore provide a natural gauge-fixing condition: for every physical quantum state $  \sigma\in\mathcal{B}  $ there exists a unique thermodynamic representative on $  \mathcal{E}  $, thereby eliminating the gauge redundancy carried by the fibers.

\begin{remark}
Note that the \emph{horizontal distribution} can be defined as the kernel of $\omega$:
\begin{equation}
    \mathscr{H}_p = \ker \omega_p = \{ v \in T_p\mathcal{M} \mid \omega(v) = 0 \}.
\end{equation}
This complements the vertical subbundle $\mathscr{V}_p = T_p F_{\Xi(p)} \cong \mathfrak{g}$ and defines parallel transport.    
\end{remark}

\subsection{Curvature on the principal bundle}
In a local trivialization $\Xi^{-1}(U) \cong U \times \mathbb{R}^{n+1}$ of the principal $\mathbb{R}^{n+1}$-bundle $(\mathcal{M}, \mathcal{B}, \Xi, \mathbb{R}^{n+1})$, let $U \subset \mathcal{B}$ be an open set with coordinates $\{ \bar{\lambda}_k \}_{k=1}^n$ (so $b_k = \bar{\lambda}_k$), and let the fiber $\mathbb{R}^{n+1}$ have coordinates $\{ v_\alpha \}_{\alpha=0}^n$, where $v_0 = \bar{S},$ and $v_i = \bar{a}_i, \ \text{for } i=1,\dots,n$. 
The left-invariant Maurer--Cartan form $\theta$ on the abelian Lie group $G = (\mathbb{R}^{n+1}, +)$ is
\begin{equation}
\theta = \sum_{\alpha=0}^n dv_\alpha \otimes \dfrac{\partial}{\partial  v_\alpha}.
\end{equation}
The principal connection $1$-form $\omega \in \Lambda^1(\mathcal{M}; \mathfrak{g})$ is defined globally, but it will be useful to consider the local form it takes in our particular trivialization. Concretely, knowing that the horizontal vector fields given by Eq. (\ref{horvector}) must satisfy $\omega(e_k)=0$ for all $k = 1, \dots, n$, the connection form can be shown to be given by
\begin{equation}
\label{eq:local_omega_standard}
\omega = \sum_{\alpha=0}^n \sum_{k=1}^n \left( dv_\alpha - \Gamma^k_\alpha  d\bar{\lambda}_k \right) \otimes \dfrac{\partial}{\partial  v_\alpha}.
\end{equation}
This is the pullback of the Maurer-Cartan form minus the gauge potential $\Gamma^k_\alpha$. The coefficients $\Gamma^k_\alpha(\bar{\boldsymbol{\lambda}})$ encode the geometric coupling between changes in intensive parameters $\bar{\boldsymbol{\lambda}}$ and shifts in thermodynamic labels $(\bar{S}, \bar{\boldsymbol{a}})$ during parallel transport. The connection coefficients $\Gamma^k_\alpha$ are determined by the orthogonality condition
$g_{\mathcal{M}}(e_k, \partial_{v_\beta}) = 0$. Explicitly:

\begin{itemize}
\item For $\beta = 0$, i.e., $\partial_{v_0} = \partial_{\bar{S}}$:
\begin{equation}
g_{\mathcal{M}}\left(e_k, \dfrac{\partial}{\partial  \bar{S}}\right) = g_{\mathcal{M}}\left(\dfrac{\partial}{\partial  \bar{\lambda}_k} - \sum_{\alpha=0}^n \Gamma^k_\alpha \dfrac{\partial}{\partial  v_\alpha},  \dfrac{\partial}{\partial  \bar{S}} \right) = g_{\bar{S} \bar{\lambda}_k} - \Gamma^k_0  g_{\bar{S}\bar{S}} = h_k - \Gamma^k_0  g_{\bar{S}} = 0.
\end{equation}
Thus, $\Gamma^k_0 = h_k / g_{\bar{S}}$.

\item For $\beta = i \in \{1,\dots,n\}$, i.e., $\partial_{v_i} = \partial_{\bar{a}_i}$:
\begin{equation}
\begin{split}
g_{\mathcal{M}}\left(e_k,  \dfrac{\partial}{\partial  \bar{a}_i}\right) &= g_{\mathcal{M}}\left(\dfrac{\partial}{\partial  \bar{\lambda}_k} - \sum_{\alpha=0}^n \Gamma^k_\alpha \dfrac{\partial}{\partial  v_\alpha},  \dfrac{\partial}{\partial  \bar{a_i}} \right) = -\Gamma^k_i  g_{\bar{a}_i \bar{a}_i} = -\Gamma^k_i  g_{\bar{a}_i} = 0.
\end{split}
\end{equation}
Thus, $\Gamma^k_i = 0, \text{ for all } i=1,\dots,n$.
\end{itemize}

Therefore, the connection coefficients are
\begin{equation}
\label{eq:Gamma_explicit}
\Gamma^k_\alpha(\boldsymbol{\bar{\lambda}}) =
\begin{cases}
\displaystyle h_k(\bar{\boldsymbol{\lambda}}) / g_{\bar{S}}(\bar{\boldsymbol{\lambda}}) & \text{if } \alpha = 0, \\
0 & \text{if } \alpha = 1,\dots,n.
\end{cases}
\end{equation}
\begin{remark}
Note that the connection acts exclusively in the entropic direction, reflecting that changes in intensive parameters $\bar{\boldsymbol{\lambda}}$ induce adjustments in the value of $\bar{S}$ but not in the value of the coordinates $\bar{a}_i$, consistent with the structure of the quantum thermodynamic fiber bundle.    
\end{remark}

\begin{proposition}
Since $\mathfrak{g}$ is abelian, the curvature $2$-form is $R = d\omega$. In the local trivialization, for horizontal lifts
\begin{equation}
    e_k = \dfrac{\partial}{\partial \bar{\lambda}_k} - \sum_{\alpha=0}^n \Gamma^k_\alpha  \dfrac{\partial}{\partial  v_\alpha},
\end{equation}
for all $k=1, \ldots, n$, the curvature evaluated on basis vectors is
\begin{equation}
\label{eq:curvature_general}
R(e_k, e_l) = \sum_{\alpha=0}^n \left( \dfrac{\partial}{\partial \bar{\lambda}_k} \Gamma^l_\alpha -  \dfrac{\partial}{\partial \bar{\lambda}_l} \Gamma^k_\alpha \right) \dfrac{\partial}{\partial  v_\alpha}.
\end{equation}
\end{proposition}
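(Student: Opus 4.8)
The plan is to establish the two assertions separately, drawing on the structure already assembled in the local trivialization. For the first claim, $R = d\omega$, I would invoke the Cartan structure equation for a principal connection, $R = d\omega + \tfrac{1}{2}[\omega, \omega]$, where the bracket is the graded $\mathfrak{g}$-valued bracket of forms. Since the structure group is $G = (\mathbb{R}^{n+1}, +)$, its Lie algebra $\mathfrak{g} = \mathbb{R}^{n+1}$ is abelian, so $[\cdot,\cdot] \equiv 0$ on $\mathfrak{g}$ and the quadratic term vanishes identically; hence $R = d\omega$. This is the structural analogue of the vanishing commutator $[\Gamma^{v_\beta}_{b_k}, \Gamma^{v_\alpha}_{b_l}] = 0$ already isolated in the general bracket formula, Eq. (\ref{bracket}).

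For the explicit formula, I would bypass a direct differentiation of the connection form by reusing the Lie-bracket computation in Eq. (\ref{bracket}). Setting $b_k = \lambda_k$ and dropping the commutator term by abelianness, that equation reads
\begin{equation*}
[e_k, e_l] = \sum_{\alpha=0}^n \left(\frac{\partial}{\partial \lambda_k}\Gamma^l_\alpha - \frac{\partial}{\partial \lambda_l}\Gamma^k_\alpha\right)\frac{\partial}{\partial v_\alpha},
\end{equation*}
which is already a section of the vertical subbundle $\mathscr{V} = \operatorname{span}\{\partial_{v_\alpha}\}$. Consequently, the vertical projection in Definition \ref{def:curvature_form} acts as the identity, and $R(e_k,e_l) = \proj_{\mathscr{V}}([e_k,e_l]) = [e_k,e_l]$ yields the claimed expression immediately. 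As an independent cross-check, I would differentiate $\omega$ from Eq. (\ref{eq:local_omega_standard}) directly: treating the fiber basis $\partial_{v_\alpha}$ as constant in the trivialization and using $d\Gamma^k_\alpha = \sum_l (\partial_{\lambda_l}\Gamma^k_\alpha)\, d\lambda_l$ together with $d\lambda_j(e_p) = \delta_{jp}$, evaluation of $d\omega$ on the pair $(e_k, e_l)$ reproduces the same antisymmetric combination $\partial_{\lambda_k}\Gamma^l_\alpha - \partial_{\lambda_l}\Gamma^k_\alpha$, up to the global sign addressed below.

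The step I expect to require the most care is reconciling the two characterizations of curvature---the intrinsic $\proj_{\mathscr{V}}([\cdot,\cdot])$ of Definition \ref{def:curvature_form} and the structure-equation form $d\omega$---under the canonical identification $\mathscr{V}_p \cong \mathfrak{g}$ furnished by $\omega$. Because $d\omega(e_k,e_l) = e_k(\omega(e_l)) - e_l(\omega(e_k)) - \omega([e_k,e_l]) = -\omega([e_k,e_l])$ on horizontal fields, the two expressions agree only up to an overall sign fixed jointly by the orientation convention in the structure equation and the sign in Definition \ref{def:curvature_form}. I would therefore fix and state this convention explicitly at the outset, so that the vertical projection of the bracket and the evaluation of $d\omega$ deliver the same components rather than negatives of one another; once this bookkeeping is settled, both routes produce the stated result.
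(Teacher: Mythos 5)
Your proposal is correct and takes essentially the same approach as the paper: the paper's proof differentiates the local connection form, Eq.~(\ref{eq:local_omega_standard}), evaluates on $e_k, e_l$, and then notes agreement with the bracket formula, Eq.~(\ref{bracket}), once the commutator term is killed by abelianness, while you simply lead with the bracket route and keep the direct differentiation of $\omega$ as a cross-check — the same two ingredients in reverse order. Your explicit attention to the sign mismatch between $R(X,Y)=\proj_{\mathscr{V}}([X,Y])$ in Definition~\ref{def:curvature_form} and $R=d\omega$ is warranted and in fact more careful than the paper, since on horizontal fields $d\omega(e_k,e_l)=-\omega([e_k,e_l])$, and the $\omega$ of Eq.~(\ref{eq:local_omega_standard}) annihilates the $e_k$ of Eq.~(\ref{horvector}) only after flipping one of the two relative signs in front of $\Gamma$ — bookkeeping the paper leaves implicit.
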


\begin{proof}
The local connection form is
\begin{equation}
\omega = \sum_{\alpha=0}^n \sum_{m=1}^n \left( dv_\alpha - \Gamma^m_\alpha  d\bar{\lambda}_m \right) \otimes \dfrac{\partial}{\partial  v_\alpha}.
\end{equation}
Its exterior derivative is directly given by
\begin{equation}
\begin{split}
d\omega &= \sum_{\alpha=0}^n \sum_{k=1}^n \sum_{l=1}^n \left(\dfrac{\partial}{\partial \bar{\lambda_k}} \Gamma^k_\alpha \right) d\bar{\lambda}_l \wedge d\bar{\lambda}_k \otimes \dfrac{\partial}{\partial  v_\alpha} \\
&= \sum_{\alpha=0}^n \sum_{k=1}^n \sum_{l=1}^n \frac{1}{2} \left(  \dfrac{\partial}{\partial \bar{\lambda}_k} \Gamma^l_\alpha -  \dfrac{\partial}{\partial \bar{\lambda}_l} \Gamma^k_\alpha \right) d\bar{\lambda}_k \wedge d\bar{\lambda}_l \otimes \dfrac{\partial}{\partial  v_\alpha}.
\end{split}
\end{equation}
Evaluating on the horizontal lifts $e_k, e_l$, the curvature is
\begin{equation}
R(e_k, e_l) = d\omega(e_k, e_l) = \sum_{\alpha=0}^n \left( \dfrac{\partial}{\partial \bar{\lambda}_k} \Gamma^l_\alpha -  \dfrac{\partial}{\partial \bar{\lambda}_l} \Gamma^k_\alpha \right) \dfrac{\partial}{\partial  v_\alpha},
\end{equation}
which is Eq.~\eqref{eq:curvature_general}. 
Note that this expression is similar to the general one given by Eq. (\ref{bracket}). The Lie bracket vanishes since the group is abelian.
\end{proof}
Substituting the explicit connection coefficients from Eq.~\eqref{eq:Gamma_explicit}, we have that only the $\alpha = 0$ term contributes to the curvature
\begin{align}
R(e_k, e_l) &= \left[\dfrac{\partial}{\partial \bar{\lambda}_k} \left( \frac{h_l}{g_{\bar{S}}} \right) - \dfrac{\partial}{\partial \bar{\lambda}_l} \left( \frac{h_k}{g_{\bar{S}}} \right) \right] \dfrac{\partial}{\partial \bar{S}}.
\end{align}
Thus, $R$ is a $\mathfrak{g}$-valued 2-form acting solely in the \emph{entropic} direction, reflecting that non-commutativities in intensive parameter variations induce non-integrable shifts in thermodynamic entropy.

\begin{remark}
The curvature $R$ of the principal connection is a $\mathfrak{g}$-valued $2$-form on $\mathcal{M}$, where $\mathfrak{g} = \mathbb{R}^{n+1}$ is the Lie algebra of the structure group $G = (\mathbb{R}^{n+1}, +)$. In the local trivialization, it takes the form
\begin{equation}\label{eqn:curvarure_form}
R = \frac{1}{2} \sum_{k,l=1}^n \left[ \frac{\partial}{\partial \bar{\lambda}_k} \left( \frac{h_l}{g_{\bar{S}}} \right) - \frac{\partial}{\partial \bar{\lambda}_l} \left( \frac{h_k}{g_{\bar{S}}} \right) \right] d\bar{\lambda}_k \wedge d\bar{\lambda}_l \otimes \frac{\partial}{\partial \bar{S}}.
\end{equation}
Since the structure group is abelian, $R = d\omega$. The tensor product with $\partial_{\bar{S}}$ reflects that the curvature takes values exclusively in the entropic component of $\mathfrak{g}$; all components in the $\partial_{\bar{a}_i}$ directions vanish due to the absence of cross terms coupling $d\bar{a}_i$ and $d\bar{\lambda}_j$ in the metric $g_{\mathcal{M}}$. The expression given by Eq. (\ref{eqn:curvarure_form}) is the standard coordinate expression of a Lie-algebra-valued $2$-form (see e.g.~\cite{nakahara2003geometry,Bleecker1981Gauge}).
\end{remark}

\begin{corollary}
Curvature acts exclusively in the entropic direction, $\partial_{\bar{S}}$, measuring the non-integrability of the entropy compensation induced by changes in the intensive parameters $\bar{\boldsymbol{\lambda}}$.
\end{corollary}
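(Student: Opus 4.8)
The plan is to read the corollary off the general curvature formula by substituting the explicit connection coefficients, and then to supply the geometric interpretation that justifies the phrase \emph{non-integrability of the entropy compensation}. The computational half is essentially immediate from the material already assembled; the substance of the statement lies in identifying the surviving direction with the Reeb field and relating the curvature to the closedness of a one-form on $\mathcal{B}$.

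First I would start from the curvature components in Eq.~\eqref{eq:curvature_general},
\begin{equation}
R(e_k,e_l)=\sum_{\alpha=0}^n\left(\frac{\partial}{\partial\lambda_k}\Gamma^l_\alpha-\frac{\partial}{\partial\lambda_l}\Gamma^k_\alpha\right)\frac{\partial}{\partial v_\alpha},
\end{equation}
and insert Eq.~\eqref{eq:Gamma_explicit}, where $\Gamma^k_i=0$ for every $i\in\{1,\dots,n\}$ and only $\Gamma^k_0=h_k/g_S$ is nonzero. Each summand with $\alpha\geq 1$ then contributes an antisymmetrized derivative of zero and drops out, leaving a single term proportional to $\partial/\partial v_0=\partial_S$. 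Since $\partial_S$ was identified earlier as the Reeb vector field $\mathcal{R}$, this already establishes that $R$ takes values in the one-dimensional, entropy-aligned subbundle of $\mathscr{V}$; that is, curvature acts exclusively in the entropy direction.

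For the interpretive claim I would introduce the \emph{entropy-compensation one-form} $\theta:=\sum_{k=1}^n (h_k/g_S)\,d\lambda_k\in\Lambda^1(\mathcal{B})$, whose components are precisely the nonzero connection coefficients $\Gamma^k_0$. The scalar prefactor multiplying $\partial_S$ above is exactly the $(k,l)$ component of $d\theta$, so that $R=d\theta\otimes\partial_S$. Hence $R=0$ if and only if $\theta$ is closed; on the contractible coordinate chart this is equivalent, by the Poincaré lemma, to $\theta=df$ for some scalar $f$, meaning the entropy shift required to compensate a change in $\boldsymbol{\lambda}$ is path-independent. By Proposition~\ref{proposition:holonomy}, a nonvanishing $d\theta$ yields $\mathrm{Hol}(\gamma,p_0)=\exp(-\int_{\mathscr{S}}R)$ directed along $\partial_S$, so the failure of integrability is measured precisely by the curvature.

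The computation presents no obstacle, so the only delicate point is the interpretive equivalence: I would make sure that the vanishing of the antisymmetrized derivative of $h_k/g_S$ is genuinely the integrability condition for the compensation, rather than a mere formal feature of the coordinate expression. This requires invoking closedness of $\theta$ (equivalently, the Frobenius/Poincaré criterion on a simply connected base) to conclude that $d\theta=0$ is the sharp condition separating reversible, path-independent entropy compensation from curvature-induced holonomy along the Reeb direction.
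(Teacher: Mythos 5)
Your proposal is correct and follows essentially the same route as the paper: substituting the explicit coefficients $\Gamma^k_i=0$, $\Gamma^k_0=h_k/g_S$ into the general curvature formula so that only the $\alpha=0$ term survives, and then interpreting the surviving component as $d$ of the one-form $\sum_k (h_k/g_S)\,d\lambda_k$, whose closedness is exactly the paper's flatness criterion stated in the remark that follows the corollary. The only quibble is notational: the symbol $\theta$ is already used in that section for the Maurer--Cartan form, so your entropy-compensation one-form should be given a different name.
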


\begin{remark}
The connection is \emph{flat} if and only if
\begin{equation}
    \dfrac{\partial}{\partial \bar{\lambda}_k} \left( \frac{h_l}{g_{\bar{S}}} \right) = \dfrac{\partial}{\partial \bar{\lambda}_l} \left( \frac{h_k}{g_{\bar{S}}} \right),
\end{equation}
for all $k,l$, i.e., the $1$-form 
\begin{equation}
    \psi_i = \dfrac{h_i}{g_{\bar{S}}} d\bar{\lambda}_i,
\end{equation}
is closed for all $i=1,\ldots,n$.
\end{remark}

\begin{corollary}
By Proposition~\ref{proposition:holonomy}, the holonomy of a closed loop $\gamma: S^1 \to \mathcal{B}$ bounding a surface $\mathscr{S} \subset \mathcal{B}$ is
\begin{equation}
\mathrm{Hol}(\gamma, p_0) = -\int_{\mathscr{S}} R,
\end{equation}
a vertical displacement in the fiber. Non-zero holonomy implies that parallel transport along a cycle in the space of Gibbs states shifts the thermodynamic coordinates $(\bar{S}, \bar{\boldsymbol{a}})$, inducing \emph{geometric irreversibility} $\Sigma \neq 0$ unless $R = 0$.
\end{corollary}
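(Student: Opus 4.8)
The plan is to reduce the first assertion to Proposition~\ref{proposition:holonomy} and then extract the physical consequences from the positive-definiteness of $g_{\mathcal{M}}$ on the vertical subbundle. First I would invoke Proposition~\ref{proposition:holonomy} for the principal bundle over $\mathcal{B}$: the structure group is $G = (\mathbb{R}^{n+1}, +)$, which is abelian, so the path ordering is vacuous and $\omega \wedge \omega = 0$. Writing $\gamma = \partial \mathscr{S}$ and applying Stokes' theorem to the connection $1$-form gives $\oint_\gamma \omega = \int_{\mathscr{S}} d\omega = \int_{\mathscr{S}} R$. The one extra observation is that the Lie-group exponential of the additive group $\mathbb{R}^{n+1}$ is the identity $\mathfrak{g} \to G$ (indeed $\exp(tX) = tX$), so the abstract group element $\exp(-\int_{\mathscr{S}} R)$ is literally the translation vector $-\int_{\mathscr{S}} R$. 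This matches Definition~\ref{def:holonomy}, where the holonomy is the vertical displacement $\tilde{\gamma}(1) - p_0 \in \mathscr{V}_{p_0}$, yielding the claimed identity.

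Next I would confirm the ``vertical displacement'' clause and identify which coordinates are shifted. By Definition~\ref{def:curvature_form}, $R$ is $\mathscr{V}$-valued, so $\int_{\mathscr{S}} R \in \mathscr{V}_{p_0} = \operatorname{span}\{\partial_S, \partial_{a_1}, \ldots, \partial_{a_n}\}$; hence the holonomy is genuinely a fiber translation in the labels $(S, \boldsymbol{a})$ at fixed $\boldsymbol{\lambda}$, the latter fixed because $\gamma$ closes in $\mathcal{B}$. Substituting the explicit coefficients of Eq.~\eqref{eq:Gamma_explicit}, for which $\Gamma^k_i = 0$, only the $\alpha = 0$ component of the curvature survives, so the shift is in fact purely along $\partial_S$; I would state the general $(S, \boldsymbol{a})$ form and note this specialization.

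Finally, for the irreversibility statement I would argue that a cyclic protocol closes in the base (the quantum state returns to $\sigma$) but, whenever $\mathrm{Hol}(\gamma, p_0) \neq 0$, the horizontal lift fails to close in $\mathcal{M}$, ending at $\tilde{\gamma}(1) = p_0 + \mathrm{Hol}(\gamma, p_0) \neq p_0$ in the same fiber $F_\sigma$. Genuinely restoring the thermodynamic description then demands a compensating displacement confined to $F_\sigma$, whose cost I would bound below using the positive-definiteness of $g_{\mathcal{M}}|_{\mathscr{V}}$: any nonconstant vertical path has strictly positive $g_{\mathcal{M}}$-length, so $\Sigma_\gamma = \kappa \int_0^T g_{\mathcal{M}}(\gamma', \gamma')\,dt$ with $\kappa > 0$ is strictly positive. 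Crucially, since $\mathrm{Hol}$ depends only on the homotopy class of $\gamma$ and not on its speed, this residual cost survives the quasistatic limit, unlike the dissipation of the traversal itself. Conversely, if $R = 0$ the integral formula forces $\mathrm{Hol}(\gamma, p_0) = 0$ (consistent with Proposition~\ref{prop:flat_connection}), the lift closes, no compensation is required, and $\Sigma_\gamma = 0$.

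The step I expect to be the main obstacle is converting the qualitative ``non-zero holonomy'' into the quantitative inequality $\Sigma_\gamma > 0$: because the base loop produces no net displacement, the positive entropy must be charged to the vertical correction, and one must argue that this correction is genuinely unavoidable, i.e. that $\tilde{\gamma}(1)$ and $p_0$ cannot be identified by any dissipationless operation. Establishing the sharper claim that the minimal compensating cost equals $\|\mathrm{Hol}(\gamma, p_0)\|_{g_{\mathcal{M}}}$ would require integrating the fiber metric explicitly, which I would flag but not attempt here.
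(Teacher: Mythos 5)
Your proposal is correct and follows essentially the same route as the paper, which offers no separate proof for this corollary beyond citing Proposition~\ref{proposition:holonomy}: the abelian structure group makes path ordering vacuous, Stokes' theorem converts $\oint_\gamma\omega$ into $\int_{\mathscr{S}}R$, and the identification $\exp(tX)=tX$ for $(\mathbb{R}^{n+1},+)$ turns the group element into the literal translation vector $-\int_{\mathscr{S}}R$. Your candid flagging of the gap in upgrading ``non-zero holonomy'' to the quantitative bound $\Sigma_\gamma>0$ is apt, since the paper likewise asserts this physical clause without a rigorous lower-bound argument.
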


In this gauge framework, quasistatic transformations correspond to paths with minimal holonomic deviation. Irreversibility in cyclic processes emerges as a curvature effect, unifying thermodynamic dissipation with geometric phases in gauge theories. This mirrors the Aharonov-Bohm phase in electromagnetism or Berry's phase in quantum mechanics, where holonomy in a principal bundle over parameter space leads to observable shifts despite local flatness. Here, non-zero $R$ generates non-integrable shifts in $(\bar{S}, \bar{\boldsymbol{a}})$, requiring dissipative corrections to close the cycle, manifesting as $\Sigma \neq 0$---a gauge-induced irreversibility intrinsic to the thermodynamic bundle geometry.

If the principal connection on the quantum thermodynamic bundle is flat, then all cyclic processes in the base manifold $\mathcal{B}$, corresponding to closed loops in intensive parameters $\bar{\boldsymbol{\lambda}}$, induce zero holonomy. 
This implies that parallel transport along such cycles returns the system to its initial thermodynamic configuration without any geometric shift in labels $(\bar{S}, \bar{\boldsymbol{a}})$, implying no additional entropy production from bundle geometry. 
In practical terms, flat connections characterize systems whose thermodynamic cycles have no gauge-like losses, provided the parameter space $\mathcal{B}$ possesses trivial topology.

The principal fiber bundle structure introduces gauge symmetry into quantum thermodynamics. The additive group acts by translating the coordinates $ (\bar{S}, \bar{\boldsymbol{a}})$ while preserving the quantum state and intensive parameters, representing a redundancy in thermodynamic descriptions akin to gauge freedoms in field theories where different potentials give the same physical fields.
Equilibrium serves as a gauge-fixing condition, ensuring a unique canonical representation per state due to the maximum entropy principle. 
The connection 1-form $  \omega  $ acts as a gauge field, coupling changes in parameters to label shifts, with (in this case) its abelian curvature $  R = d\omega$ (or $d\omega + \omega\wedge \omega$ in the non-abelian case) focused on the entropy direction, measuring non-commutativity in parameter evolutions.
As mentioned in the previous section, the direct consequences include holonomy-imposed bounds on quantum engine efficiencies, where cyclic protocols accumulate unremovable dissipation. 
This opens the avenue for optimization of control protocols exploiting gauge choices, and conceptual parallels to thermodynamic analogs of the Aharonov-Bohm effect, potentially inspiring experimental probes of geometric phases in thermal machines.

\begin{remark}[Second law of quantum thermodynamics]
\label{remark:quantum_second_law}
The quantum second law emerges from the non-integrability of the contact structure and the geometry of non-equilibrium paths. For a finite-speed process along a path $\gamma(t)$ in $\mathcal{E}$, the entropy production rate is $\varsigma(t) = \kappa g_{\mathrm{BW}}(\gamma'(t), \gamma'(t)) \geq 0$, with total $\Sigma = \kappa \int_0^T g_{\mathrm{BW}}(\gamma'(t), \gamma'(t))  dt \geq 0$. Irreversibility is minimal along geodesics and vanishes in the quasistatic limit. Off equilibrium, in fibers $F_\sigma$ and assuming that the full bundle metric $g_{\mathcal{M}}$ is Riemannian, relaxation paths incur positive entropy production $\Sigma > 0$ due to deviations from $\mathcal{E}$. Curvature-induced holonomy in cyclic processes generates additional geometric entropy production.
\end{remark}

\subsection{Gauge and physical interpretation of curvature}
The curvature of the principal connection on the quantum thermodynamic fiber bundle is the standard curvature $2$-form of mathematical gauge theory, as defined in standard references such as \cite{hamilton2017mathematical, nakahara2003geometry, Bleecker1981Gauge}. This object admits several complementary physical interpretations that clarify its thermodynamic significance.

First, in the Ehresmann picture, the curvature measures the failure of horizontal integrability. Specifically, $R(X,Y) = \proj_{\mathscr{V}}([X,Y])$ for horizontal vector fields $X,Y \in \mathfrak{X}(\mathscr{H})$ quantifies how much parallel transport around an infinitesimal closed loop fails to return to the initial point in the fiber due to the non-integrability of the horizontal distribution $\mathscr{H}$. In local coordinates, $R(e_k, e_l)$ is the vertical projection of the Lie bracket of the horizontal lifts $e_k, e_l$. Non-zero curvature then implies that thermodynamic labels $(\bar{S}, \bar{\boldsymbol{a}})$ do not return to their initial values after a closed cycle in intensive-parameter space $\bar{\boldsymbol{\lambda}}$. This is the primary geometric mechanism underlying holonomy-induced irreversibility in the framework.

In the language of gauge field theory, the curvature $R \equiv F = d\omega + \omega \wedge \omega$ is equivalent to the strength of the local gauge field. In a local trivialization, $R$ acts as the force field felt by the thermodynamic coordinates when transported along the base manifold. In the abelian case, $R$ takes values exclusively in the entropy direction $\partial_{\bar{S}}$, inducing a net shift $\Delta \bar{S}$ during cyclic variations of $\bar{\boldsymbol{\lambda}}$---analogous to how the electromagnetic field strength $F_{\mu\nu}$ generates forces via the covariant derivative $D = d + A$ whose square is $D^2 = F$.

From a topological viewpoint, $R$ is the source of holonomy. Parallel transport around a closed loop $\gamma$ in the base generates the holonomy $\mathrm{Hol}(\gamma, p_0) = \mathcal{P}\exp(-\oint_\gamma \omega)$, which, in the abelian case and by Stokes' theorem, reduces to $\exp(-\int_{\mathscr{S}} R)$ for any surface $\mathscr{S}$ bounding $\gamma$. This holonomic shift in the thermodynamic coordinates, typically entropy, requires dissipative corrections to restore equilibrium, providing a purely geometric origin for irreversibility in cyclic processes. This interpretation is the direct thermodynamic counterpart of the Berry phase in quantum mechanics and the Aharonov--Bohm effect in gauge theory.
By the Ambrose--Singer theorem, the curvature generates the holonomy group: the Lie algebra of the holonomy group is spanned by the values of $R$ evaluated on horizontal vectors. Non-zero curvature thus produces a non-trivial holonomy group acting on the fiber, manifesting as topological irreversibility in cyclic thermodynamic protocols.

Finally, $R$ is simply a Lie algebra-valued $2$-form $R \in \Lambda^2(\mathcal{M}; \mathfrak{g})$ with $\mathfrak{g} = \mathbb{R}^{n+1}$, encoding the (abelian) non-commutativity of infinitesimal gauge transformations. In potential non-abelian extensions (e.g., systems with internal symmetries or constrained ensembles), the $\omega \wedge \omega$ term would appear, leading to non-commuting holonomies consistent with Wilczek--Zee phases, see Section~\ref{subsec:wilczek}.

\subsection{Toward \emph{topological quantum thermodynamics}: invariants and topological considerations in the quantum thermodynamic bundle}\label{subsec:topological}

Following the exposition in Ref. \cite{hamilton2017mathematical}, in the principal bundle $(\mathcal{M}, \mathcal{B}, \Xi, \mathbb{R}^{n+1})$, quantities unchanged under bundle automorphisms, i.e., arbitrary re-gaugings of the thermodynamic labels $(\bar{S},\bar{\boldsymbol{a}})$ while keeping the physical state $\sigma\in\mathcal{B}$ fixed, are \emph{gauge invariants}.
These invariants encode the intrinsic geometry and topology of the connection $  \omega  $ and its curvature $  R  $, furnishing universal, gauge-independent lower bounds on irreversible entropy production in cyclic quantum thermodynamic processes.

Differences of connections $\omega_1 - \omega_2$ transform as 1-forms on $\mathcal{B}$, forming an affine space of connections. 
The curvature descends to a well-defined closed 2-form $R_{\mathcal{B}}\in\Lambda^2(\mathcal{B};\mathbb{R}^{n+1})$ on the base, quantifying label shifts independent of the gauging.
\emph{Bianchi's identity} $dR + [\omega, R] = 0$, where $dR=0$ for the abelian case, ensures closure of $R_{\mathcal{B}}$. That is, since the structure group is abelian, Bianchi's identity reduces to $  dR_{\mathcal{B}}=0  $, implying that the holonomy around any closed loop $  \gamma\subset\mathcal{B}  $ depends only on its homotopy class $  [\gamma]\in\pi_1(\mathcal{B})  $. Consequently, topologically equivalent cycles induce identical geometric entropy shifts.

A natural gauge-invariant diagnostic for a loop $  \gamma:S^1\to\mathcal{B}  $ is the Wilson loop on an associated vector bundle $E=\mathcal{M}\times_{\rho}V$
\begin{equation}
W^E_\gamma(\omega) = \mathrm{tr} \left[ \mathcal{P} \exp \left( -\oint_\gamma \rho_* \omega^s \right) \right],
\label{eq:wilson_loop}
\end{equation}
where $\omega^s = s^* \omega$ is the pull-back of the connection in a local trivialization over an open set containing $\gamma$ with global gauge $s: U \to \mathcal{M}$; and $V$ is the internal vector space on which the connection acts through the representation $\rho: \mathbb{R}^{n+1} \to \mathrm{GL}(V)$, the trivial representation on the fiber of $E$.

In the abelian case, path-ordering reduces to the ordinary exponential and the trace is unnecessary for line bundles, corresponding to scalar holonomy shifts in labels, so
\begin{equation}
W^E_\gamma(\omega) = \exp \left( -\oint_\gamma \omega^s \right).
\end{equation}
When projected onto the entropy component of the fiber, this gives a real number $\Delta\bar{S}_{\mathrm{geo}}$, measuring the net shift in thermodynamic labels after parallel transport. This geometric contribution adds to the total entropy production $\Sigma$ in cyclic processes, providing a strict lower bound on dissipation for any cyclic protocol. 

Topologically, with an $\textrm{Ad}$-invariant scalar product $\langle \cdot, \cdot \rangle$ on $\mathfrak{g} = \mathbb{R}^{n+1}$, e.g., the Euclidean scalar product, the \emph{Chern--Simons 3-form} on $\mathcal{M}$ is defined as
\begin{equation}
    \alpha(\omega) = \langle \omega, R \rangle \in \Lambda^3(\mathcal{M}),
\end{equation}
satisfying $d\alpha = \langle R, R \rangle$. 
Here, $\langle R, R \rangle$ acts as a local geometric measure of dissipation induced by curvature. Under automorphisms, $\alpha$ transforms by an exact term plus a closed 3-form from the structure group, so the integral
\begin{equation}
   \mathcal{S}_{\mathrm{CS}}(\omega) = \int_{\mathscr{M}} \alpha(\omega) \pmod{\mathbb{Z}},
\end{equation}
over closed oriented 3-submanifolds $\mathscr{M} \subset \mathcal{M}$, e.g., a region swept by a continuous family of loops in parameter space, is gauge-invariant and metric-independent. This number is known as \emph{Chern--Simons action}.
A non-vanishing value highlights the existence of \emph{topologically protected} fractional contributions to entropy production: quasistatic cycles enclosing non-contractible 3-volumes cannot be made dissipation-free, no matter how slowly they are driven.
Note that this Chern--Simons invariant is purely topological, and does not depend on the metric chosen.
 
Flat connections give $\mathcal{S}_{\mathrm{CS}}=0$, permitting reversible cycles. Differences $\omega - \omega_0$ transform adjointly, and $\mathcal{S}_{\mathrm{CS}}$ classifies topological sectors---universal bounds on dissipation in quantum engines over non-simply connected parameter spaces, e.g., periodic $\boldsymbol{\lambda}$ resulting in toroidal $\mathcal{B}$.

In natural extensions where the abelian group is compactified, e.g., $\mathbb{R}\to S^1$ arising from periodic intensive parameters or modular identifications, additional topological invariants emerge from characteristic classes.
For a principal $S^1$-bundle $  \mathcal{M}\to\mathcal{B}$ equipped with the connection $\omega$, the curvature defines the first Chern class
\begin{equation}
    c_1(\mathcal{M}) = -\frac{1}{2\pi i} [R_{\mathcal{B}}] \in H^2_{\mathrm{dR}}(\mathcal{B}; \mathbb{R}),
\end{equation}
a real de Rham cohomology class independent of the choice of connection. 
A non-trivial $c_1$ precludes the existence of a global section, that is, the existence of a unique global thermodynamic labeling. It also forces irreducible holonomy around non-contractible loops, even if the curvature vanishes locally.

\begin{example}
In particular, consider the canonical Hopf bundle $S^1 \hookrightarrow S^3 \xrightarrow{\pi} S^2 \cong \mathbb{CP}^1$ \cite{hatcher2002algebraic}. The total space is defined as the unit sphere in $\mathbb{C}^2$, $S^3 = \{ (z_1, z_2) \in \mathbb{C}^2 | |z_1|^2 + |z_2|^2 = 1 \}$. The projection map $\pi: S^3 \to \mathbb{CP}^1$ sends each point to the complex line it spans.
The fiber over each point of the base $S^2$ is a circle $S^1$ consisting of all global phases $e^{i\theta}(z_1,z_2)$ consistent with a given projective point in $\mathbb{CP}^1$. By taking the curvature 2-form $R$ corresponding to a Dirac monopole field of unit strength, we can generate the fundamental class of $H^2(S^2;\mathbb{Z})\cong\mathbb{Z}$. Therefore, the first Chern class is represented by
\begin{equation}
  c_1(\mathrm{Hopf}) = \left[ \frac{R}{2\pi} \right],
\end{equation}
and integrating it over the base space gives
\begin{equation}
    \int_{S^2}c_1(\mathrm{Hopf})=1.
\end{equation}
Note that $\pi_1(S^2)=0$: every closed loop in $S^2$ is contractible. From the long exact sequence of homotopy groups of the fibration, the connecting homomorphism yields an isomorphism $\pi_2(S^2) \cong \pi_1(S^1) \cong \mathbb{Z}$. This shows that the non-triviality of the bundle is inherently tied to the second homotopy group wrapping around the fiber, which is equivalently measured by the first Chern class.
In our thermodynamic interpretation, this integer $1$ is the minimal topological charge: any cyclic driving protocol whose control manifold contains a 2-cycle homologous to the generator of $H_2(\mathcal{B})$ must induce at least one unit of topological entropy shift $\Delta\bar{S}_{\mathrm{geo}}=1$.
\end{example}

For non-abelian compactifications, e.g., SU(2) bundles for symmetry-protected degeneracies in spin systems, higher Chern classes $c_k(P)\in H^{2k}_{\mathrm{dR}}(\mathcal{B})$ impose further constraints on geometric dissipation through integrated curvature. 

Therefore, all these invariants unify topological dissipation. Wilson loops bound local cycle production, while Chern--Simons integrals and Chern classes set universal lower limits on entropy production in topologically non-trivial control protocols. Geometric irreversibility is thereby elevated to a gauge- and topology-protected phenomenon.

\subsection{Classical limit}
Classical thermodynamics is recovered if and only if the fibers $F_\sigma$ of the principal bundle $(\mathcal{M}, \mathcal{B}, \Xi, \mathbb{R}^{n+1})$ are singletons, i.e., $\dim F_\sigma = 0$ for all $\sigma \in \mathcal{B}$. This trivializes the bundle and makes the group action the identity. Consequently, $\mathcal{M}$ becomes isomorphic to the classical contact manifold $M$.
This condition is both necessary and sufficient, as it eliminates quantum redundancies: in quantum systems, the fibers parameterize multiple labels per state, whereas classically, each state has a unique label, causing the fibers to collapse to points. $\Xi$ becomes bijective (the identity on reduced space), aligning the corresponding structures: $\eta$ matches classical Gibbs 1-form, $\mathcal{E}$ reduces to Legendrian $E$, and $g_{\mathrm{BW}}$ becomes classical Fisher-Rao.

Mathematically, the bundle trivializes to $\mathcal{M} \cong M \times \{e\}$, where $\{e\}$ is the trivial group. This preserves tangent bundles, contact distributions, and Reeb fields.
This criterion unifies the limits, confirming that the quantum theory is the general case and the classical theory emerges as its limit.

\begin{example}
For the qubit case, in the high-temperature limit, where $\lambda \equiv \beta\to 0$, any full-rank qubit state approaches the maximally mixed $\rho = \mathrm{Id}/2$, with $a \to 0$ and $S \to \ln 2$. The fiber $F_{\mathrm{Id}/2}$ collapses effectively toward the equilibrium point $( \ln 2, 0, 0 )$, as small deviations in $\mu$ produce states close to uniform. Physically, at infinite temperature, all states become indistinguishable from the unique maximum-entropy state, and label redundancy vanishes, mirroring classical thermodynamics where extensive variables uniquely determine the state without quantum ambiguities.

Therefore, non-thermal deviations become negligible, fibers trivialize to points, and the bundle reduces to the classical contact manifold with unique thermodynamic coordinates per state, recovering the classical framework.
\end{example}

\section{Infinite-Dimensional Extensions}
\label{sec:infinite}

The framework developed herein relies on a finite-dimensional Hilbert space $\mathcal{H} \cong \mathbb{C}^m$ with $m < \infty$, ensuring that $\mathscr{B}(\mathcal{H}) \cong M_m(\mathbb{C})$ is compact and the state space $\mathcal{D}$ is finite-dimensional. However, most relevant quantum systems are modeled using infinite-dimensional separable Hilbert spaces, including $L^2(\mathbb{R}^d)$ for quantum fields or the bosonic Fock space $\mathscr{F}(\mathbb{C})$ for harmonic oscillators. It may thus appear that we are led to the consideration of infinite-dimensional manifolds, such as Banach manifolds. It is indeed the case that the set of positive density operators can be endowed with a Banach manifold structure. However, as soon as we choose a \textit{finite} collection of observables $\{ A_1, \dots, A_n \}$, the ensuing geometric structures will remain finite-dimensional. In other words, it will not be necessary to stray from finite-dimensional differential geometry to fulfill our purposes. The manipulation of linear maps defined over infinite-dimensional domains does introduce some subtleties leading to potential alterations to our original formalism.

Let us first focus on the simplest scenario in which we choose to work with a single observable $A$. The associated Gibbs states will be proportional to the exponential $\exp(-\lambda A)$. Recall that for self-adjoint operators with possibly continuous spectrum, the previous exponential is defined by means of the spectral theorem. Here comes the fundamental discrepancy between the finite and infinite-dimensional cases: whereas the trace of the exponential of a matrix is always finite, it may fail to converge for an operator living on an infinite-dimensional Hilbert space. This will generically be the case if $A$ has a continuous spectrum. However, if the spectrum is discrete, convergence of the trace relies on how fast the eigenvalues increase. Thus, it will usually happen that the trace of $\exp(-\lambda A)$ converges only for $\lambda > \lambda_c$. Gibbs states can only be defined as long as this condition is met. For this reason, the domain of the map $\Xi$, which is later used to construct the fiber bundle, has to be restricted accordingly.

Assuming that the set of admissible values of $\lambda$ is non-empty (if it were empty, one has to resort to other notions of thermal states, such as the so-called Kubo--Martin--Schwinger (KMS) states \cite{kubo57,MS59}), there exist two qualitatively different possible forms for this set: it is either $(\lambda_c, +\infty)$ or $[\lambda_c, +\infty)$. The first case simply entails that the quantum thermodynamic phase space is $\mathbb{R}^2 \times (\lambda_c, +\infty)$ instead of $\mathbb{R}^3$. Since the two are diffeomorphic, our overall fiber bundle construction remains essentially unaltered, as long as the $\mu$ function is properly chosen. The following provides an explicit example:

\begin{example}
    Let us consider the quantum harmonic oscillator, where the Hilbert space is the bosonic Fock space, as mentioned earlier. Choose as the only observable the number operator $N = a^{\dagger}a$. Gibbs states will be given by
    \begin{equation}
        \rho_{\lambda} = \frac{\exp(-\lambda N)}{\mathrm{tr}[\exp(-\lambda N)]}
    \end{equation}
    as long as the denominator is finite. It can be seen that
    \begin{equation}
    \mathrm{tr}[\exp(-\lambda N)] = \frac{1}{1-e^{-\lambda}}
    \end{equation}
    when $\lambda > 0$. Otherwise, the series is divergent. Therefore, the map carrying each set of thermodynamic labels to a thermal state is to be defined as $\Xi: \mathcal{M} \to \mathcal{B}$ where $\mathcal{M} = \mathbb{R}^2 \times (0,+\infty)$ and
\begin{equation}
\Xi(S,a,\lambda)
 = \frac{\exp\left(-\mu(S,a,\lambda) N\right)}
       {\tr\left[\exp\left(-\mu(S,a,\lambda) N\right) \right]},
\end{equation}
where, apart from the general conditions, the codomain of $\mu$ must now be $(0,+\infty)$. Next, we will determine the equilibrium submanifold, which is given by a curve inside the phase space $\mathcal{M}$. We have
\begin{equation}
    a(\lambda) = \langle N \rangle_{\rho_{\lambda}} = \frac{1}{e^{\lambda}-1}.
\end{equation}
\begin{equation}
    S(\lambda) = \mathrm{tr}(\rho_{\lambda} \log \rho_{\lambda}) = \frac{\lambda}{e^{\lambda}-1} - \log(1-e^{-\lambda}).
\end{equation}
Note that both $a(\lambda)$ and $S(\lambda)$ are well-defined for all the admissible values of $\lambda$. Therefore, the equilibrium submanifold $\mathcal{E}$ corresponds to the curve $(S(\lambda), a(\lambda), \lambda)$ for $\lambda \in (0,+\infty)$. 

As a non-trivial example of $\mu$ function, we may choose
\begin{equation}
    \mu(S,a,\lambda) = a^2\lambda (e^{\lambda}-1)^2.
\end{equation}
This function has the following properties:
\begin{itemize}
    \item When restricted to the equilibrium submanifold
    \begin{equation}
        \mu(S(\lambda),a(\lambda),\lambda) = \lambda.
    \end{equation}
    \item $\mu$ is well-defined on $\mathbb{R}^2 \times (0,+\infty)$. Moreover, $\mu > 0$ over its domain.
    \item For fixed $S$ and $a$, the equation $\mu(S,a,\lambda) = c$, $c > 0$ has a unique positive solution in $\lambda$.
\end{itemize}
Since all the required conditions are satisfied, the previous choice of $\Xi$ gives rise to a fiber bundle with base space diffeomorphic to the equilibrium submanifold and fibers provided by the level sets of $\mu$.

    \end{example}

\vspace{2pc}

As pointed out before, the set of admissible values of $\lambda$ being an open interval $(\lambda_c, +\infty)$ is not the only possibility. Depending on the choice of quantum system and observable, the critical value $\lambda_c$ may be included in the interval. As a consequence, the geometric structure would change in a non-trivial manner: the space $\mathbb{R}^2 \times [\lambda_c, +\infty)$ can be understood as a \textit{manifold with boundary}, the boundary being in this case the set $\mathbb{R}^2 \times \{ \lambda_c \}$. The appearance of such boundaries can serve as a crude model of a \textit{phase transition}. In this case, the two phases are distinguished in terms of the existence or non-existence of thermal states for the system.

In the more general situation, we consider a finite set of observables $\{ A_1,  \dots, A_n \}$. Provided that the exponential $\exp(-\sum_{i=1}^n \lambda_n A_n)$ is well-defined, the condition that it be trace-class will impose a limitation on the permissible values of the intensive parameters $(\lambda_1, \dots, \lambda_n) \in \mathbb{R}^n$. The ensuing fiber bundle construction is to be restricted to this subset. Crucially, said subset may contain some part of its boundary, in analogy to the one-dimensional situation. As earlier pointed out, the emergence of boundaries can provide a description of phase transitions at the quantum level. The extent to which the formalism presented here ought to be modified to allow for the proper modelization of phase transitions, criticality, and other related phenomena lies beyond the scope of the present work.

\begin{remark}
  In cases where the explicit Gibbs construction is ill-defined, thermal equilibrium is instead characterized by KMS states, as noted earlier. While every KMS state reduces to the ordinary Gibbs state in finite dimensions, the KMS condition in the general case provides a complete and intrinsic encoding of thermal equilibrium without relying on a well-defined density matrix.
This implies that the map assigning thermodynamic labels to equilibrium states can be replaced by the KMS map $\boldsymbol{\lambda} \mapsto \omega_{\boldsymbol{\lambda}}$ defined over the admissible domain of intensive parameters. Under this substitution, the base manifold $\mathcal{B}$ of the quantum thermodynamic fiber bundle becomes the manifold of KMS states parametrized by $\boldsymbol{\lambda}$. The natural Riemannian metric on this space of states is the Bogoliubov--Kubo--Mori metric \cite{petz1993bogoliubov}. With this metric, the geometric framework may be extended naturally to thermal quantum field theory.
\end{remark}

\section{Quantum thermodynamic protocols and non-abelian structures}
\label{sec:applications}
To illustrate the practical utility of our geometric framework, we apply it to several quantum thermodynamic systems and protocols.
For simplicity of physical interpretation, in the following subsections, we take $\mu_i(S, \boldsymbol{a}, \boldsymbol{\lambda}) = \lambda_i$. It then follows that the new thermodynamic coordinates $(\bar{S}, \bar{\boldsymbol{a}}, \bar{\lambda})$ reduce to the original labels $(S, \boldsymbol{a}, \boldsymbol{\lambda})$.

\subsection{Cyclic processes and the thermodynamic Berry phase}
Quantum heat engines operate cyclically, converting heat into work using quantum working media, such as qubits or quantum harmonic oscillators \cite{bhattacharjeeQuantumThermalMachines2021c,tejero2024,tejero2026}. In our geometric framework, a cyclic process corresponds to a closed loop $\gamma: S^1 \to \mathcal{B}$ in the base manifold of Gibbs states. The horizontal lift $\tilde{\gamma}$ to the total space $\mathcal{M}$ via the principal connection describes the evolution of thermodynamic labels $(S, \boldsymbol{a})$.

For a non-flat connection, parallel transport along $\gamma$ leads to a non-trivial holonomy. This holonomy manifests as a vertical displacement in the fiber, shifting the coordinates upon cycle completion, even in the quasistatic limit. To close the thermodynamic cycle (return to the initial labels), this shift must be corrected via dissipative relaxation within the fiber, incurring entropy production $\Sigma > 0$.

In the abelian case (structure group $\mathbb{R}^{n+1}, +$), the holonomy reduces to a scalar phase factor. As an example, for a two-dimensional parameter space $\boldsymbol{\lambda} = (\lambda_1, \lambda_2)$, the holonomy is given by the line integral of the connection or, equivalently, by Stokes' theorem as the surface integral of the curvature
\begin{equation}\label{eqn:deltaS_twoparam}
\Delta S = \int_{\mathscr{S}} \left[ \frac{\partial}{\partial \lambda_1} \left( \frac{h_{\lambda_2}}{g_S} \right) - \frac{\partial}{\partial \lambda_2} \left( \frac{h_{\lambda_1}}{g_S} \right) \right] d\lambda_1 \, d\lambda_2,
\end{equation}
where $\mathscr{S}$ is any surface bounded by $\gamma$.

This geometric irreversibility is analogous to the Berry phase in quantum mechanics \cite{berry1984quantal}, where adiabatic cyclic parameter evolution induces a phase shift in the wavefunction. Here, we term it the \emph{thermodynamic Berry phase}, or \emph{holonomic thermodynamic phase}, as it arises from the gauge structure of the principal bundle over thermodynamic parameter space. 
Such phase has been observed and identified in models of quantum heat engines \cite{giri2017geometric, bhandari2020geometric}. The geometric construct provides an elegant and fundamental explanation of their appearance.

\begin{example}
Consider a two-level quantum Otto engine with Hamiltonian $A \equiv H = \omega \sigma_z / 2$, coupled alternately to hot ($\beta_h$) and cold ($\beta_c$) reservoirs. The cycle is a loop in $\mathcal{B}$ parameterized by $\boldsymbol{\lambda} = (\beta, \omega)$. The Bures-Wasserstein metric on $\mathcal{B}$ results in geodesics for optimal quasistatic strokes, minimizing dissipation per stroke. For finite-speed cycles, holonomy adds a geometric term to the total work $W$ and efficiency of the cycle, $\eta$. In general, we can write the extracted work from the cycle and the efficiency as
\begin{equation}
 W = W_{\mathrm{dyn}} + W_{\mathrm{geo}}, \quad \eta = 1 - \frac{Q_c + Q_{\mathrm{geo}}}{Q_h}, 
\end{equation}
where $W_{\mathrm{geo}}$ and $\Delta Q_{\mathrm{geo}}$ arise from correcting the holonomic shift. 
Explicitly, for a connection with curvature $R \propto d\beta \wedge d\omega$, and in the adiabatic limit, the thermodynamic Berry phase modifies the extracted work by $ W_\mathrm{geo} = T \Delta S_{\mathrm{geo}}$, where $\Delta S_{\mathrm{geo}}$ is the entropy shift, leading to efficiency reductions scaling as the enclosed curvature flux.

For a rectangular cycle in the surface $\mathscr{S}\subset \mathcal{B}$ delimited by $\beta\in [\beta_c, \beta_h]$, and $\omega\in [\omega_1,\omega_2]$, the holonomy can be computed following Eq. (\ref{eqn:deltaS_twoparam}) as
\begin{equation}\label{eqn:deltaS_example}
 \Delta S_{\mathrm{geo}} = \int_{\mathscr{S}} \left[ \dfrac{\partial}{\partial \beta} \left( \dfrac{h_\omega} {g_S} \right) - \dfrac{\partial}{\partial \omega} \left(\dfrac{h_\beta}{g_S} \right) \right] d\beta d\omega.
\end{equation}
This provides a quantifiable geometric bound, e.g., for $\Delta\beta = \beta_h - \beta_c$ and $\Delta\omega = \omega_2 - \omega_1$, $\Delta S_{\mathrm{geo}}$ scales as $\Delta S_{\mathrm{geo}} \sim \overline{R}(\Delta\beta \Delta\omega)$, where $\overline{R}$ is the average curvature on $\mathscr{S}$. In the case of a flat domain, $\overline{R} = 0$ and $\Delta S_{\mathrm{geo}} = 0$, as expected.
\end{example}

\subsection{Non-abelian extensions and the thermodynamic Wilczek–Zee phase}\label{subsec:wilczek}
The principal bundle constructed above has an abelian structure group $(\mathbb{R}^{n+1}, +)$. Natural extensions to non-abelian structure groups appear whenever the redundancy in thermodynamic labeling involves non-commutative transformations. This occurs, for instance, when equilibrium or steady states exhibit degeneracies protected by non-abelian symmetries, or when the effective state manifold requires a non-trivial group action beyond translations.

In such cases, the fiber over a given physical state $\sigma \in \mathcal{B}$ is a non-abelian Lie group $G$, and the principal connection $\omega$ takes values in the Lie algebra $\mathfrak{g}$ of $G$. The curvature two-form then acquires the non-abelian term $\omega \wedge\omega \neq 0$ term. The holonomy is now the path-ordered exponential as in Eq. (\ref{eqn:genera_holonomy}), which generally does not commute for different loops.
The resulting non-commuting shifts of thermodynamic labels provide a geometric description of topological entropy production and are directly linked to fractional statistics in anyonic systems \cite{wen2017zoo}.

The non-abelian holonomy, originally introduced by Wilczek and Zee \cite{wilczek1984appearance}, generalizes Berry's geometric phase \cite{berry1984quantal} to degenerate eigenspaces. When a quantum system has a degenerate energy level, and external parameters are varied adiabatically around a closed path in parameter space, the time-evolution operator restricted to the degenerate subspace acquires a unitary matrix-valued geometric factor---the \emph{Wilczek--Zee holonomy}. This holonomy is the parallel transport operator in a vector bundle over parameter space, with the connection (gauge potential) given by the non-abelian generalization of the Berry connection. 

For a degenerate subspace of dimension $d > 1$, the non-abelian Berry connection is the matrix-valued 1-form whose entries encode the geometric phase acquired when the degenerate states are parallel-transported in parameter space. The curvature is the matrix-valued 2-form $R = dA + i [A, A],$ and the holonomy along a closed loop $\gamma$ is the path-ordered exponential
\begin{equation}\label{eqn:WZ}
\textrm{Hol}_{\mathrm{WZ}}(\gamma) = \mathcal{P} \exp\left( -i \oint_\gamma \sum_k A_kd\lambda_k \right) \in \mathrm{U}(d).
\end{equation}
In the bundle language, this is precisely the non-abelian principal holonomy $\mathrm{Hol}(\gamma, p_0)$, acting by non-commutative transformations on the thermodynamic labels $(S, \boldsymbol{a})$ within the degenerate fiber.

This general, non-trivial holonomy is further interpreted here from a thermodynamic perspective with the topological entropy production, allowing the full thermodynamic description of such \emph{exotic} systems. This non-commuting geometric phase leads to topological entropy production that depends on the homotopy class of the loop in parameter space. Different cyclic protocols returning to the same $\sigma$ but traversing homotopically inequivalent paths accumulate distinct (and non-commuting) entropy shifts, imposing a fundamental, topology-protected lower bound on dissipation.

\subsection{Physical examples of non-abelian systems in condensed matter}
\subsubsection{Bose--Einstein condensates}
The first concrete example arises in spin-1 spinor Bose gases, realized experimentally in spinor Bose--Einstein condensates with traps \cite{ho1998spinor,kawaguchi2012spinor}. The Hamiltonian describing such systems preserves SO(3) rotational symmetry for vanishing magnetic field. 
As a particular example, in the spin-1 ferromagnetic phase, the order-parameter manifold has symmetry $\textrm{SO(3)} \cong \mathbb{RP}^3 $, effectively providing a non-trivial topological structure (see Table 11 in \cite{kawaguchi2012spinor} for more details).

External magnetic fields break the symmetry and serve as intensive parameters in $\boldsymbol{\lambda}$. Cyclic variation of these fields induces an effective non-abelian SO(3) (or SU(2)) gauge connection on the degenerate manifold, leading to the aforementioned Wilczek--Zee holonomy. Thermodynamically, near-zero-temperature Gibbs states are dominated by this degenerate ground manifold; parallel transport over cycles in field space produces non-commuting rotations of the order parameter. 
This holonomy manifests geometrically in thermodynamic observables: it contributes to curvature-dependent entropy shifts, modified susceptibilities, or topological magnetization currents, analogous to geometric responses in spinor condensates \cite{ho1998spinor,kawaguchi2012spinor}.

\subsubsection{Fibonacci anyons}
The most striking realization of non-commutative holonomy in quantum systems is given by \emph{anyons}, quasiparticles with non-trivial phase upon particle exchange \cite{Wilczek_82,Wilczek_06,dunlop2025}.
Consider particularly the case of Fibonacci anyons \cite{nayak2008}, whose fusion rule is $\tau \times \tau = 1 + \tau$. The meaning of this rule is that when two $\tau$ anyons are brought together, they can annihilate into the vacuum (1) with one possible fusion channel, or they can fuse into another $\tau$ anyon with a second possible channel.

Now take the minimal non-trivial degenerate configuration: three $\tau$ anyons whose total topological charge is the vacuum. To reach the vacuum, the first two $\tau$ anyons can fuse in two ways: either they fuse directly to vacuum, and the third $\tau$ must then fuse with that vacuum; or they fuse to a $\tau$, and the third $\tau$ fuses with it to the vacuum.

These two different fusion paths define two orthogonal states that span a two-dimensional fusion space. This is the smallest degenerate subspace that exhibits genuinely non-abelian statistics.
A braiding operation, i.e., moving one $\tau$ anyon around another along a closed path, corresponds to a unitary matrix acting on this two-dimensional fusion space. 
In our bundle language, such a braiding process is exactly parallel transport along a closed loop $\gamma$ in the control manifold $\mathcal{B}$. The holonomy associated with a single braid generator is the path-ordered exponential given by Eq. (\ref{eqn:WZ}), with an element $X_{\mathrm{WZ}}(\gamma)\in \mathrm{U}(2)$.

After the braiding, the system returns to the same physical quantum state, but the thermodynamic labels have been transformed: the initial point $p_0$ in the fiber is moved to a new point $p_1 = X_{\mathrm{WZ}}\cdot p_0$ inside the same fiber. To close the loop, i.e., to properly perform the braiding operation, the length of this shortest path is precisely the geodesic distance in the fiber metric between $p_0$ and the shifted point
\begin{equation}
\Delta S_{\mathrm{geo}} = d_{g_{\mathcal{M}}}(p_0,p_1) = d_{g_{\mathcal{M}}}(p_0,X_{\mathrm{WZ}} \cdot p_0).
\end{equation}
The distance function $d_{g_{\mathcal{M}}}$ is induced by the fiber metric.
\begin{remark}
The group acts by isometries on the fiber, so one could equivalently minimize over equivalent representatives, but for a fixed braid, the element $X_{\mathrm{WZ}}$ is given, and the distance is simply that between $p_0$ and $p_1$.
\end{remark}

Because the holonomy depends only on the homotopy class of the braid, that is, the representation of the braid group, $\Delta S_{\mathrm{geo}}$ is a topological invariant.
Even the simplest non-trivial braid of Fibonacci anyons provides a finite positive value of $\Delta S_{\mathrm{geo}}$. This is the thermodynamic manifestation of topological protection: the same braiding operations necessarily generate an irreducible entropy production.

Experimental platforms now realize Fibonacci anyons in superconducting processors, where braiding operations have been performed and the resulting unitary evolution measured directly \cite{xu2024non}. In these systems, the predicted geometric entropy shift can be observed as excess heat after a full braiding cycle, providing a direct test of the topology-protected dissipation bound derived from the principal-bundle curvature.

These extensions provide geometric and topological tools for analyzing irreversibility, topological order, and dissipation in open and driven systems.
It is worth noting that the experimental realization of non-abelian gauge fields and the direct observation of Wilczek--Zee phases, including Wilson loops, have been achieved, see e.g., Ref. \cite{sugawa2021wilson}.

\section{Conclusion}
In this work, we have presented a geometrical framework for quantum thermodynamics, grounded in contact geometry and principal fiber bundle theory.

The quantum thermodynamic state space is constructed as a contact manifold, with equilibrium Gibbs states forming Legendrian submanifolds that generalize classical thermodynamic geometry to the quantum regime. The principal fiber bundle structure over the manifold of density operators provides a precise description of non-equilibrium configurations: its fibers characterize relaxation processes, while their unique intersections with the equilibrium submanifold ensure thermodynamic consistency.

The Bures-Wasserstein metric on the equilibrium submanifold provides a Riemannian perspective on quasistatic transformations, minimizing dissipation through geodesics, while diverging geodesic length toward rank-deficient states offers a geometric derivation of the third law, underscoring the unattainability of pure states in finite processes. Extensions to non-equilibrium via pseudo-Riemannian metrics and principal connections reveal curvature-induced holonomy as a source of irreversibility in cyclic processes, quantifying entropy production geometrically.

Quantum thermodynamics is geometry.
In the formalism here proposed, the laws emerge as consequences of the manifold, the contact form, and the fiber bundle structure.
From the uniqueness of equilibrium to the unattainability of pure states; from reversible geodesics to irreversible holonomy, the full structure of thermal behavior in quantum systems is encoded in the differential and fiber geometry of the quantum thermodynamic state space.

\section{Data availability}
No datasets were generated or analyzed during the current study.

\section{Acknowledgements}
The authors wish to express their gratitude to Prof. Frederic Schuller, whose lectures inspired them to pursue studies in theoretical and mathematical physics during their undergraduate years.

We acknowledge funding from the Ministry for Digital Transformation and of Civil Service of the Spanish Government through projects PID2021-126217NB-I00, PID2023-149365NB-I00, PID2024-162155OB-I00, FPU20/02835 and QUANTUM ENIA project call - Quantum Spain project, and by the European Union through the Recovery, Transformation and Resilience Plan - NextGenerationEU within the framework of the Digital Spain 2026 Agenda. Finally, we are also grateful for the technical support provided by PROTEUS, the supercomputing center of the Institute Carlos I for Theoretical and Computational Physics in Granada, Spain.

\bibliography{bibliography}

\end{document}